\documentclass[12pt,a4paper,oneside,reqno]{amsart}
\usepackage[pdfdisplaydoctitle=true,colorlinks=true,urlcolor=blue,citecolor=blue,linkcolor=blue,pdfstartview=FitH,pdfpagemode=None,bookmarksnumbered=true]{hyperref}
\usepackage{amssymb}
\usepackage{amsmath}
\usepackage{amsthm}
\usepackage{color}
\usepackage{hyperxmp} 
\usepackage{cmap} 
\usepackage{graphicx}
\usepackage{framed, color}
\usepackage{xcolor}
\usepackage{subcaption}
\usepackage{lipsum} 
\usepackage{enumerate}


\usepackage[draft]{todonotes}   

\DeclareMathOperator{\spn}{span}


\vfuzz2pt 
\hfuzz2pt 

\newtheorem{theorem}{Theorem}[section]

\newtheorem{lemma}[theorem]{Lemma}
\newtheorem{proposition}[theorem]{Proposition}

\theoremstyle{definition}
\newtheorem{definition}[theorem]{Definition}
\newtheorem{assumption}[theorem]{Assumption}

\theoremstyle{remark}
\newtheorem{remark}[theorem]{Remark}

\newcommand{\ri}{\mathrm{i}}

\newcommand{\Eig}{\mathrm{Eig}}
\newcommand{\Ran}{\mathrm{Range}}


\begin{document}
	
	\title[]{Exact Model Reduction for Damped-Forced Nonlinear Beams:\\
		 An Infinite-Dimensional Analysis}
	\author[F. Kogelbauer and G. Haller]{Florian Kogelbauer and George Haller}
	\address{Institute for Mechanical Systems, ETH Z\"{u}rich, Leonhardstrasse 21, 8092 Z\"{u}rich, Switzerland}
	\email{floriank@ethz.ch and georgehaller@ethz.ch}
	
	
	\date{\today}%
	
	
\begin{abstract}
We use  invariant manifold results on Banach spaces to conclude the existence of spectral submanifolds (SSMs) in a class of nonlinear, externally forced beam oscillations. SSMs are the smoothest nonlinear extensions of spectral subspaces of the linearized beam equation. Reduction of the governing PDE to SSMs provides an explicit low-dimensional model which captures the correct asymptotics of the full, infinite-dimensional dynamics. Our approach is general enough to admit extensions to other types of continuum vibrations. The model-reduction procedure we employ also gives guidelines for a mathematically self-consistent modeling of damping in PDEs describing structural vibrations.

\end{abstract}

\maketitle

\section{Introduction}
Most model-reduction techniques, such as the Proper Orthogonal Decomposition (POD, cf. Holmes, Lumley and Berkooz \cite{holmes_lumley_berkooz_1996}) or the method of Modal Derivatives (cf. Idelsohn and Cardona \cite{IDELSOHN1985253} and Rutzmoser et al. \cite{PUB_2014_Rutzmoser_ISMA.pdf}), project the dynamics of a nonlinear system onto a linear subspace or a quadratic manifold, respectively. These subspaces or surfaces, however, are generally not invariant under the flow, i.e., a trajectory starting at a point on the plane used by POD will leave the plane as time evolves, cf. Figure \ref{projection}. This lack of invariance limits the reliability of model reduction to regions that are close enough to linear evolution. At larger distances from the equilibrium, nonlinear effects of the underlying model become invariably more dominant and the accuracy of model reduction is quickly lost.\\

\begin{figure}[h]
	\centering
	\includegraphics[scale=0.6]{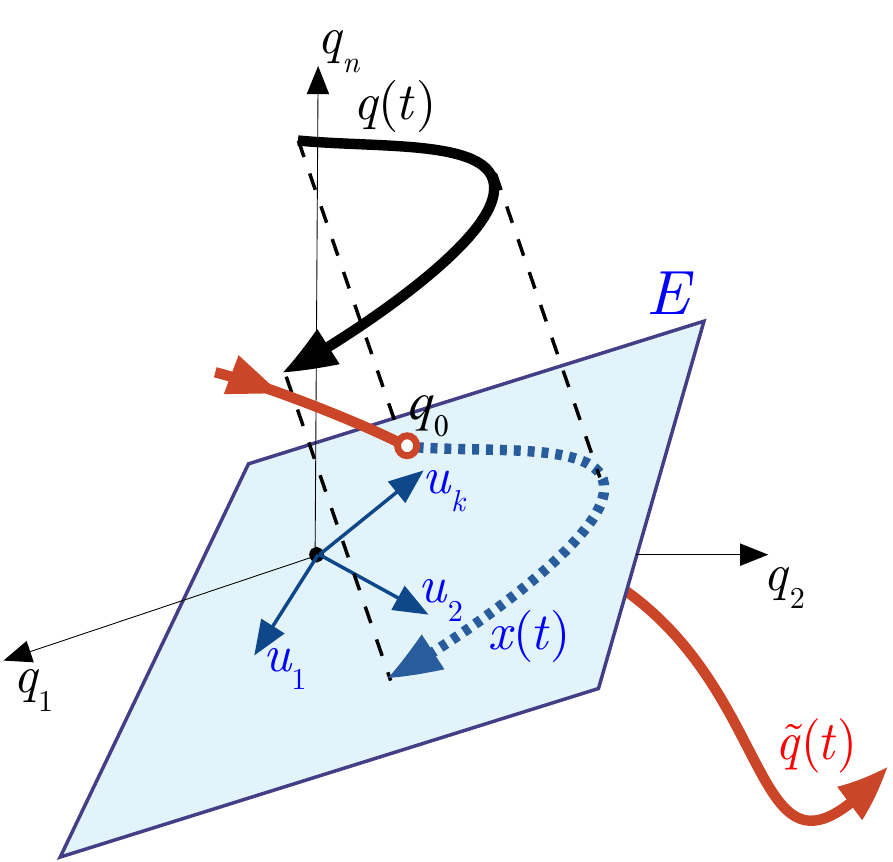}
	\caption{The classic idea of model reduction: projection of the equations of an n-dimensional dynamical system onto a lower-dimensional subspace E spanned by an appropriate basis $u_{1},\ldots u_{k}$ with $k\leq n$. The unverified hope is then that trajectories $q(t)$ of the full system project to trajectories $x(t)$ of the reduced model. The subspace $E$, however, is generally not invariant: a full trajectory $\tilde{q}(t)$ starting from the initial condition$ q_{0}\in E$ will leave $E$, and will not project to $x(t)$. (Adopted from\cite{Haller2016}).}
	\label{projection}
\end{figure}

A seminal idea to remedy the above deficiency in nonlinear vibrations is due to Shaw and Pierre  \cite{SHAW199385}. These authors proposed to reduce the full nonlinear dynamics near equilibria to invariant manifolds that are tangent to spectral subspaces of the linearized dynamical system. In lightly damped structural dynamics problems, the relevant equilibria are asymptotically stable fixed points with complex eigenvalues. In that case, the Shaw–-Pierre approach seeks invariant manifolds tangent to two-dimensional eigenspaces of the linearized dynamics. While the existence, uniqueness, smoothness and robustness of such manifolds has remained unclear in the mechanics literature, formal Taylor expansions for such manifolds have been found very effective in capturing the reduced dynamics in a number of examples (see  Kerschen et al. \cite{Kerschen2009170} and Avramov and Mikhlin  \cite{Avramov2010,Avramov2013}  for recent reviews). Shaw and Pierre \cite{SHAW1994319} also extended their original approach formally to PDEs describing continuum vibrations.\\

The invariant manifolds envisioned by Shaw and Pierre turn out to be highly non-unique and non-smooth even in linear systems, as several authors have observed recently (see, e.g., Neild et al. \cite{Neild20140404} and Cirillo et al. \cite{Cirillo2016284,Cirillo2015}). This observation carries over to the full nonlinear setting, as one can conclude from the powerful abstract results of Cabr\'{e}, Fontich  de la Llave \cite{Cab2003} on invariant manifold tangent to spectral subspaces of maps on Banach spaces.\\ 

Very recently, however, Haller and Ponsioen \cite{Haller2016} introduced the notion of a spectral submanifold (SSM) for finite-dimensional oscillation problems. An SSM is the smoothest nonlinear continuation of a spectral subspace of a linear dynamical system under the addition of nonlinear terms. Based on an analysis of the linearized spectrum, one can use the Cabr\'{e}–-Fontich–-de la Llave results to construct SSMs in appropriate smoothness classes and develop a Taylor expansion or unique internal parametrization for them. Haller and Ponsioen \cite{Haller2016} showed applications of this mathematically exact, nonlinear model reduction procedure for lower-dimensional mechanical models. Subsequently, Jain, Tiso and Haller \cite{Jain2017} carried out an SSM-based model reduction on a higher-dimensional finite-element model of a von K\'{a}rm\'{a}n beam.\\

Up to this point, however, no SSM-reduction has been carried out for continuum vibration problems. The existence of such a reduction is important to clarify for several reasons. First, all practical structural vibration problems arise from discretizations of PDEs. While SSM-based model reduction for such discretized problems has been demonstrated (cf. Jain, Tiso and Haller \cite{Jain2017}), it is not immediately clear how closely these reduced models reproduce features of the original infinite-dimensional physical structure. In fact, as we will see later, convergence of SSM-based reduced models obtained from discretized PDEs under refinement of the discretization is by no means guaranteed under the most commonly used damping models. Second, the existence of exact, SSM-based reduced models for the PDE enables one to avoid numerical experimentation with discretizations of the PDE, and proceed instead directly to an exact lower-dimensional model that is guaranteed to capture the correct asymptotics of the PDE. Third, experimentally observed simple dynamics on SSMs raises the possibility of accurate parameter identification for the full PDE.\\

In the present paper, we carry out an exact, SSM-based model reduction procedure for the first time for a nonlinear continuum vibration problem, clarifying the conditions under which the finite-dimensional invariant manifolds envisioned by Shaw and Pierre \cite{SHAW1994319} exist and smoothly persist. We believe that this is also the first infinite-dimensional application of the abstract invariant manifold results of Cabr\'{e}, Fontich  de la Llave \cite{Cab2003}. Specifically, we consider a Rayleigh beam model together with a damping proportional to the bending rate of the beam as well as viscous damping introduced by external, linear dampers. We will assume that the beam interacts with its nonlinear foundation and is also possibly subject to time-periodic external forcing. We choose the damping mechanism carefully so that the spectrum of linearized flow is contained in the unit circle, stays bounded away from zero and has monotonically decaying real parts. These properties turn out to be crucial for the existence of an SSM-based reduced-order model. For a detailed discussion of the mechanical model used for our analysis, we refer to Section \ref{mech}, while for a general discussion of damping mechanisms in beams, we refer to \cite{doi:10.1137/1.9781611970982.ch4}.\\

As an example, we explicitly compute up to third order an exact reduced model on a two-dimensional SSM of the full beam equation with cubic nonlinearity. We perform this computation both for zero  and for sinusoidal external forcing. The SSMs obtained in this fashion serve effectively as slow manifolds for the beam, even though no explicit slow-fast decomposition is available for the underlying mechanical model. In our analysis, we take advantage of the smallness of both the viscous damping and the internal damping, which permits us to avoid small denominators in the Taylor approximation of the SSM. In finite dimensions, similar calculations have been carried out by Szalai, Erhardt and Haller \cite{Szalai2016} in the absence of time-periodic forcing. We utilize infinite-dimensional analogs of their explicit formulas in computing the reduced model on the slow SSM.\\

\section{The Mechanical Model}\label{mech}
Consider a homogeneous thin beam with fixed end points at $x=0$ and $x=\pi$, respectively.
We assume that the beam is made of an elastic material and that the deflections from the unforced position of rest are comparably small. We will additionally assume that a small beam element is endowed with a small mass moment of inertia, so that the Rayleigh beam theory applies, cf. \cite{doi:10.1137/1.9781611970982.ch4}.\\
Under the assumption that the beam is composed of perfectly inelastic fibers, we may add a frequency-dependent damping to the equations of motions. This can be interpreted as a lateral force acting on the beam which is proportional to the bending rate. We refer to \cite{doi:10.1137/1.9781611970982.ch4}, page 126 and pages 135-140, especially page 139, for the details of Rayleigh beam theory and frequency-dependent damping.\\
Further, we impose that the beam is supported on a nonlinear foundation, like rubber or springs with higher-order stiffness, and we assume that the beam is subject to a weak time-periodic external forcing, cf. Figure \ref{BeamGraphic}.
\begin{figure}[h]
	\centering
	\includegraphics[scale=0.4]{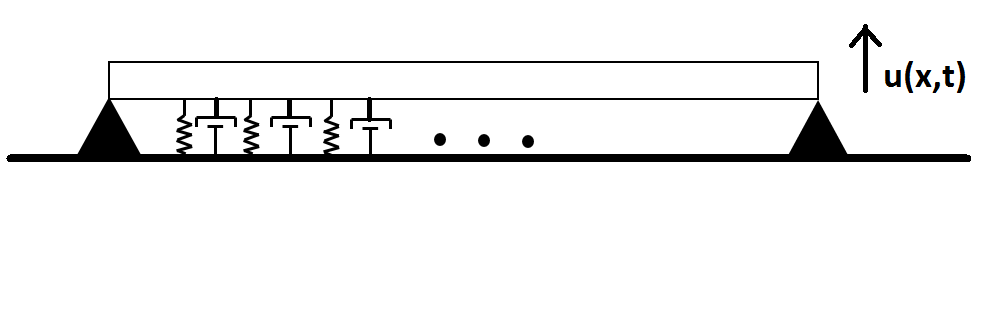}
	\caption{Simply supported Rayleigh beam on a bed of nonlinear springs and linear dampers.}
	\label{BeamGraphic}
\end{figure}

The equation of motion for the vertical displacement $u$ with initial configuration $u_0(x)$ and initial velocity $v_0(x)$ in our model becomes

\begin{equation}\label{mainequ}
\begin{cases}
&u_{tt}-\mu u_{ttxx}=-\alpha u_{xxxx}+\beta u_{txx}-\gamma u-\delta u_t+f(u)+\varepsilon h(x,t),\\
&u(0,x)=u_0(x),\quad u_t(0,x)=v_0(x),\\
& x\in(0,\pi),
\end{cases}
\end{equation}
where $\alpha=EI/\rho$ and $\mu=I_{\rho}/\rho$, for the constant mass density $\rho$, Young's modulus of elasticity $E$, the second moment of area of the beam's cross-section $I$ and the constant mass moment of inertia $I_{\rho}$. The parameter $\beta$ accounts for internal damping due to the elastic properties of the material, while the parameter $\gamma$ describes the linear stiffness of the foundation. The parameter $\delta$ accounts for linear external damping, e.g., due to dampers. The function $h$, which we assume to be periodic in time with frequency $\omega$, describes external forcing for some small parameter $\varepsilon$.

\begin{figure}[h]
	\begin{subfigure}{.5\textwidth}
		\centering
		\includegraphics[width=.8\linewidth]{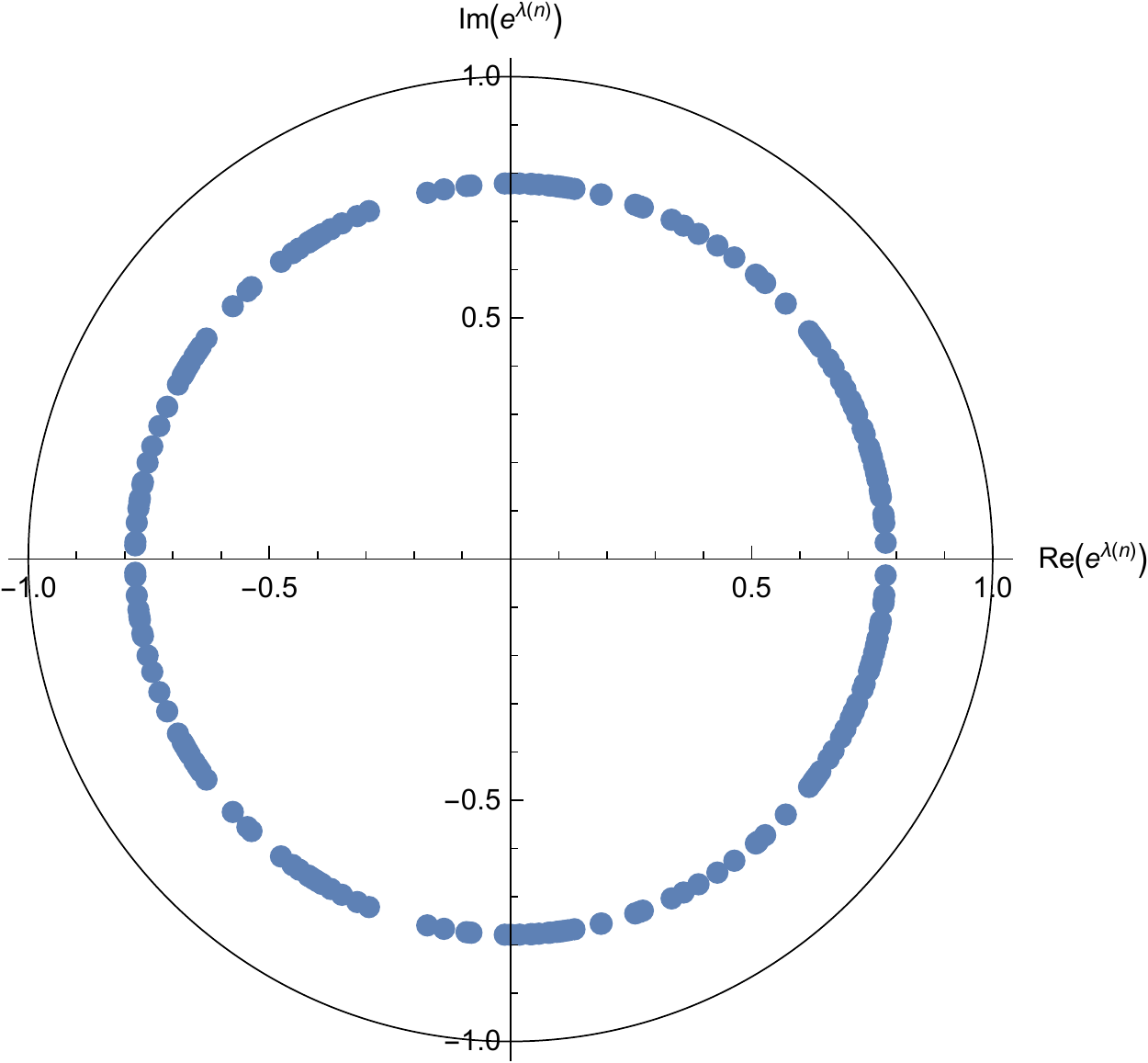}
		\caption{Pure viscous damping\\ ($\mu=\beta=0$)}
		\label{DampA}
	\end{subfigure}%
	\begin{subfigure}{.5\textwidth}
		\centering
		\includegraphics[width=.8\linewidth]{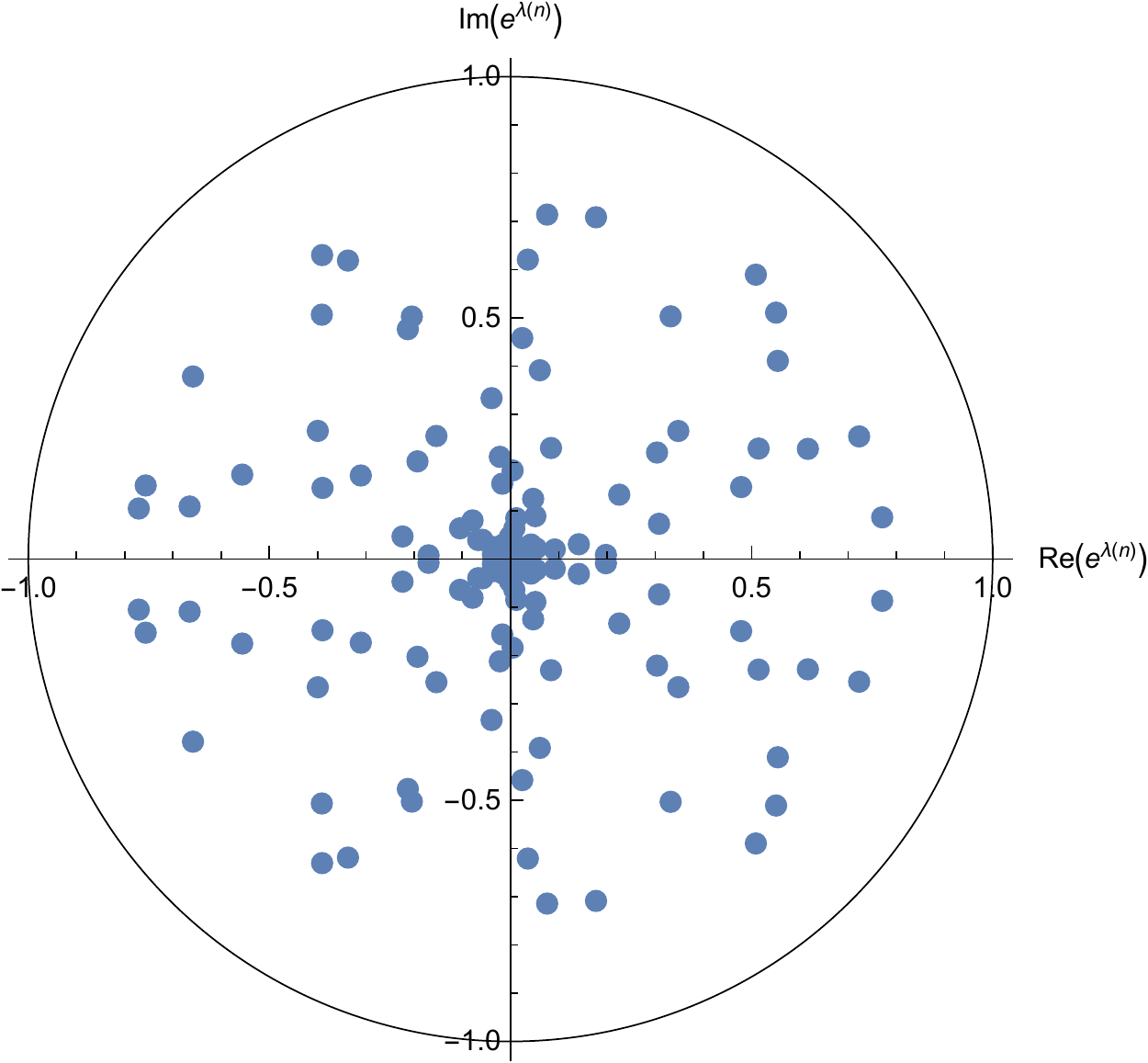}
		\caption{Pure visco-elastic damping\\ ($\mu=0$)}
		\label{DampB}
	\end{subfigure}
	\caption{Damping types excluded from our analysis}
	\label{Damp}
\end{figure}

\begin{remark}
(Our choice of damping)\\
We chose the set up of Rayleigh beam theory together with a damping proportional to the second derivative in space for two reasons.\\
First, if we considered Euler-Bernoulli beam theory with purely viscous damping, i.e. $\mu=\beta=0$ in (\ref{mainequ}), the real parts in the spectrum of the linearization of the right-hand side would be constant  (cf. (\ref{specA})). Consequently, the eigenvalues of the flow map of the unperturbed system would lie on a circle of radius $e^{-\frac{\delta}{2}}$ (cf. Figure \ref{DampA}), and there would be no dichotomy of decay rates for any choice of eigenspaces of the linearized flow map. As a consequence, even the linear system would not admit an obvious lower-dimensional subspace candidate for model reduction.\\
Second, we could have also considered Euler-Bernoulli beam theory with Kelvin-Voigt damping (visco-elastic damping; see, e.g., \cite{doi:10.1137/1.9781611970982.ch4}) by adding a term of the form $-\beta u_{txxxx}$ to the right-hand side of (\ref{mainequ}). In that case, however, the real parts in the spectrum of the linearization of the the right-hand side would decay like $n^4$, as indicated by (\ref{specA}). Therefore, the spectrum of the linearized flow map for the unperturbed system would accumulate at the origin as $n\to \infty$. This would imply that the flow map is not invertible at the trivial solution and therefore condition (1) of Assumption \ref{AsB} would be violated. This is because the $\partial^4/\partial x^4$- term would make the solution analytic for arbitrarily small times and the inverse problem would therefore be ill-posed. The same would hold true if we merely chose a damping term of the form $\beta u_{txx}$, i.e.  a damping term that is proportional to the bending rate  (cf. Figure \ref{DampB}). Beyond being a technical inconvenience, the non-invertability of the linearized PDE creates a conceptual issue, a conflict with the Newtonian principle of determinism for the mechanics of beam motion.\\
We remedy these technical difficulties by introducing a damping that is of the same order as the mass term that comes from Rayleigh beam theory. This essentially allows us to the treat the fourth-order problem (\ref{mainequ}) as a second-order problem. In particular, the spectrum is bounded and a family of slow eigenvalues exists, cf. Figure \ref{SpectrumRayleigh}. For a physical justification of our choice of damping, we again refer to \cite{doi:10.1137/1.9781611970982.ch4}, pages 135-140, especially page 139.
\begin{figure}[h]
	\centering
	\includegraphics[scale=0.6]{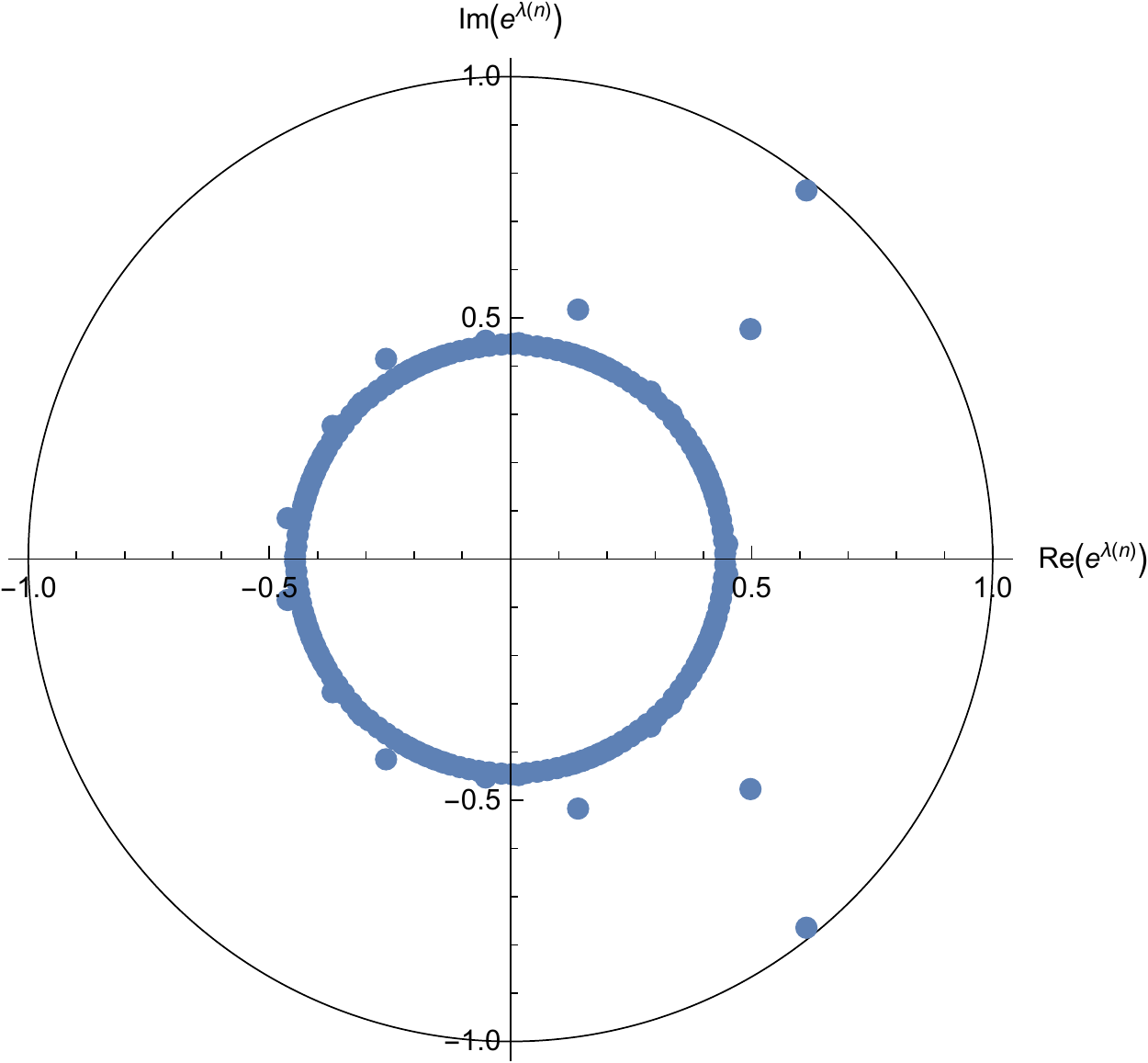}
	\caption{Typical spectrum of the linear part of Equation (\ref{mainequ}). The spectrum is inside the unit circle and the eigenvalues accumulate on a smaller circle.}
	\label{SpectrumRayleigh}
\end{figure}

\end{remark}

For the subsequent analysis, we have to make some further assumptions on the nonlinearity $f$ and the parameters in (\ref{mainequ}).\\

The function $f:\mathbb{R}\to\mathbb{R}$ describes the nonlinear external force interaction of the beam with the foundation. Since the linear part is already described by the parameter $\gamma$, it is reasonable to make the following general assumptions:
\begin{assumption}\label{AsA}
	The function $f:\mathbb{R}\to\mathbb{R}$ is of class $C^r$ for some $r\in\mathbb{N}\cup \{\infty,a\} $ and of polynomial growth 
	\begin{equation}\label{above1}
	|f(x)|\lesssim |x|^m
	\end{equation}
	for some $m>1$, as well as of polynomial growth in its derivative
	\begin{equation}\label{above2}
	|f'(x)|\lesssim |x|^{m-1},
	\end{equation}
	and satisfies
	\begin{equation}
	f(0)=0\quad,\quad f'(0)=0.
	\end{equation}
	The indefinite integral of $f:\mathbb{R}\to\mathbb{R}$ is a non-positive function 
	\begin{equation}\label{below}
	\int_0^xf(\xi)\, d\xi\leq 0,
	\end{equation}
	for all $x\in\mathbb{R}$.
\end{assumption}

\begin{remark}
	Assumptions (\ref{above1}) and (\ref{above2}) guarantee that the map $u\mapsto f(u)$ is sufficiently regular as a map on Banach spaces, while assumption (\ref{below}) ensures global existence of solutions to (\ref{mainequ}). This will become apparent in the subsequent paragraphs.	
\end{remark}
Setting $$S_{T}:=\mathbb{R}\mod T, $$ we will be making the following assumptions on the external, time-dependent forcing:
\begin{assumption}\label{Asforce}
	The function $h:(0,\pi)\times S_{\omega}\to\mathbb{R}$ is continuously differentiable as a map $h:\mathbb{R}\to L^2(0,\pi)$ and satisfies the same boundary conditions as the beam.
	In particular, we have that $\|h(x,t)\|_{L^2(0,\pi)}\leq  H_0$ and  $\|h_t(x,t)\|_{L^2(0,\pi)}\leq  H_1$ for any $t\in S_\omega$.
\end{assumption}
An example for a foundation that leads to an equation like (\ref{mainequ}) could be a bed of nonlinear springs. In \cite{SHAW1994319}, a foundation of cubic springs is considered where $f(u)=-C u^3$ for some $C>0$.\\
Requiring hinged ends (simply supported beam), we impose the following boundary conditions on the solution $u$:

\begin{equation}\label{BC1}
\begin{cases}
&u(t,0)=u(t,\pi)=0\\
&u_{xx}(t,0)=u_{xx}(t,\pi)=0.
\end{cases}
\end{equation}

\begin{remark}
	We could also require clamped ends by imposing the boundary condition\begin{equation}\begin{cases}&u(t,0)=u(t,\pi)=0\\&u_{x}(t,0)=u_{x}(t,\pi)=0,\end{cases}\end{equation} which would change the eigenbasis for the operator $\partial^4$ on $L^2(0,2\pi)$. The subsequent analysis could be carried out similarly, though.
\end{remark}

For a physical beam problem, the external damping parameter $\delta$, as well as the parameter $\beta$, accounting for internal damping, are small compared to the other parameters of the system. It is therefore reasonable to make the following set of assumptions:
\begin{assumption}\label{param}
	The non-negative parameters $\alpha,\beta,\gamma,\delta$ and $\mu$ in equation (\ref{mainequ}) satisfy
	\begin{enumerate}
		\item $\beta^2<4\alpha$,
		\item $2\beta\delta<4\gamma\mu$,
		\item $\delta^2<4\gamma$,
		\item $\delta\mu<\beta$.
	\end{enumerate}
\end{assumption}
\begin{remark}
	The meaning of the above assumptions will become apparent in the analysis of the linear spectrum of equation (\ref{mainequ}) (cf. Appendix I). The first three assumptions ensure that all eigenvalues of the linearization have non-zero imaginary parts. This means that the beam equation is close to a conservative system and hence there are no overdamped modes. The conditions depend on a sign criterion for a third order polynomial and can undoubtedly be refined.\\
	The fourth inequality in Assumption \ref{param} guarantees that the real parts of the eigenvalues of the linearization decrease with the frequency. This will permit us to extract a unique attracting slow manifold from the dynamics near the equilibrium configuration of the forced beam. If the inequality in the fourth assumption is reversed, it is possible to extract a unique attracting fast manifold.
\end{remark}

\section{Notation and Basic Definitions}

Let $f,g:\mathbb{R}\to\mathbb{R}$ be two functions. We write
$$f(x)\lesssim g(x) \iff f(x)\leq C g(x)$$
for some $C>0$. Similarly for the symbol $``\gtrsim``$.\\
Let $X$ be a real or complex Banach space and let $U\subset X$ be an open set. For a real or complex Banach space $Y$, let $C^r(U,Y)$ be the space of $r$-times continuously differentiable functions from $U$ to $Y$. The space $C^{\infty}(U,Y)$ consists of all functions $f:U\to Y$, which belong to $C^r(U,Y)$ for all $r\geq 1$, while the space $C^a(U,Y)$ consists of all that are locally analytic in $U$. The spaces $C^r(U,Y)$ for some $r\geq 1$ and the space $C^a(U,Y)$ are Banach spaces, while the space $C^\infty(U,Y)$ is a Fr\'echet space.\\
For any positive number $1\leq p<\infty$, let $L^p(0,\pi)$ denote the space of complex-valued, $p$-integrable functions on the interval $(0,\pi)$. Namely, $L^p(0,\pi)$ consists of all functions $f:(0,\pi)\to\mathbb{C}$ such that
$$\|f\|_{L^p}=\left(\int_0^\pi|f(x)|^p\, dx\right)^{\frac{1}{p}}<\infty.$$
The space $L^\infty(0,\pi)$ consists of all bounded functions on $[0,\pi]$, admitting the norm $$\|f\|_{L^\infty(0,\pi)}=\sup_{x\in [0,\pi]}|f(x)|.$$
Of particular interest for our analysis is the space $L^2(0,\pi)$, which may be endowed with a Hilbert space structure via the inner product $$\langle f,g\rangle_{L^2}=\frac{1}{\pi}\int_0^\pi f(x)g(x)^{*}\, dx,$$
with the star denoting complex conjugation. Any function $f\in L^2(0,\pi)$ can then be written uniquely as a Fourier series
$$f(x)=\sum_{n\in\mathbb{Z}}\hat{f}_n e^{2\ri n x},$$
where $$\hat{f}_n=\left\langle f,e^{2\ri n x}\right\rangle_{L^2}=\frac{1}{\pi}\int_0^\pi f(x)e^{-2\ri n x}\, dx$$
is the $n$-th Fourier coefficient of $f$.
By Parseval's theorem, cf. \cite{adams2003sobolev}, the $L^2$-inner product may be expressed as
\begin{equation}\label{ParsI}
\langle f,g\rangle_{L^2}=\sum_{n\in\mathbb{Z}}\hat{f}_n\hat{g}_n^{*},
\end{equation}
for the Fourier coefficients $\hat{f}_n$ of $f$ and $\hat{g}_n$ of $g$, respectively, and the $L^2$-norm of $f$ can be written as 
\begin{equation}\label{ParsN}
\|f\|_{L^2}^2= \sum_{n\in\mathbb{Z}}|\hat{f}_n|^2.
\end{equation}
For any $s\in\mathbb{N}$, let $H^s(0,\pi)$ be the space of functions $f\in L^2(0,\pi)$ such that \begin{equation}\label{defHs}
\|f\|_{H^s}^2=\sum_{n\in\mathbb{Z}}\left(\frac{1-(2n)^{2s+2}}{1-(2n)^2}\right)|\hat{f}_n|^2<\infty.
\end{equation}
The Hilbert spaces $H^s(0,\pi)$ are called \textit{Sobolev space of order $s$}. For a detailed theory of Sobolev spaces and related topics, we refer to \cite{adams2003sobolev}. 
\begin{remark}
	We chose to define the Sobolev space in terms of the decay of Fourier coefficients. Equivalently, one can characterize elements of a Sobolev space by summability properties. In fact, any element $f\in H^s(0,\pi)$ possesses weak derivatives up to order $s$ in $L^2(0,\pi)$. This enables us to write the norm in $H^s$ equivalently in an integral form as
	\begin{equation}\label{HsInt}
	\|f\|_{H^s}^2=\frac{1}{\pi}\int_0^\pi\sum_{j=1}^s\left(\frac{d^{j} f}{dx^{j}}(x)\right)^2\, dx,
	\end{equation}
	where the derivatives have to be understood in the weak sense. For details, we refer to \cite{adams2003sobolev}.
\end{remark}
In order to incorporate the boundary condition (\ref{BC1}), we introduce the following linear subspace of $H^s(0,\pi)$ for even $s\in\mathbb{N}$:
\begin{equation}\label{Hs0}
H^s_0(0,\pi)=\left\{f\in H^s(0,\pi):\frac{d^{2j} f}{dx^{2j}}(0)=\frac{d^{2j} f}{dx^{2j}}(\pi)=0,\quad\text{for }j=1,...,\frac{s}{2}\right\}.
\end{equation}
Any element $f\in H^s_0(0,\pi)$ may be written as a Fourier-sine series 
$$f(x)=\sum_{n=1}^{\infty}\hat{f}_n\sin(nx).$$
	
A possible unbounded operator $A:\mathcal{H}\to\mathcal{H}$ on a Hilbert space with domain $\mathcal{D}(A)$ is called \textit{dissipative} if for all $x\in\mathcal{D}(A)$ we have that
\begin{equation}\label{dissHi}
\langle Ax,x\rangle\leq 0
\end{equation}
For a possibly unbounded operator $A:\mathcal{H}\to\mathcal{H}$ on a Hilbert space with domain $\mathcal{D}(A)$, let $\sigma(A)$ denote its spectrum and let $\rho(A)$ denote its resolvent set.
The \textit{point spectrum} of $A$, denoted $\sigma_p(A)$, then consists of all eigenvalues of finite multiplicity.For a detailed discussion of spectral properties of unbounded operators on Hilbert spaces and related topics, we refer to \cite{hislop2012introduction} and \cite{teschlmathematical}.\\

\section{Spectral Submanifolds}

We will be interested in the behavior of solutions to equation (\ref{mainequ}) in the neighborhood of special solutions. In the case of no external forcing, the special solution of interest will be the fixed point $u\equiv 0$, while in the presence of external periodic forcing, the special solution will be time-periodic. Existence of such a time-periodic solution is proved in Appendix I, Lemma \ref{fixed}, by a Poincar\'e map argument. Following \cite{Haller2016}, we call all solutions with a finite number of frequencies \textit{nonlinear normal modes}.\\
To study the beam equation (\ref{mainequ}), we rewrite it as a first order system:
\begin{equation}\label{mainequV}
\left(\begin{array}{c}u_t\\ v_t\end{array}\right)=A\left(\begin{array}{c}u\\ v\end{array}\right)+\left(\begin{array}{c}0\\f(u)+\varepsilon h(x,t)\end{array}\right),
\end{equation}
where we have set $v:=u_t$ and
\begin{equation}\label{defMA}
A:=\left(\begin{matrix}
0 & 1 \\ -\mathcal{M}^{-1}(\alpha\partial^4+\gamma) & \mathcal{M}^{-1}(\beta\partial^2-\delta)\end{matrix}\right),
\end{equation}
for the operator
\begin{equation}
\mathcal{M}=1-\mu\partial^{2}.
\end{equation}
The spectrum of the matrix of operators (\ref{defMA}) is given explicitly by
\begin{equation}\label{specA}
\begin{split}
\sigma(A)&=\{\lambda_n^{\pm}\}_{n\in\mathbb{N}^{+}}\\
&=\left\{-\frac{\beta n^2+\delta}{2+2\mu n^2}\pm i \sqrt{\frac{\alpha n^4+\gamma}{1+\mu n^2}-\left(\frac{\beta n^2+\delta}{2+2\mu n^2}\right)^2}\right\}_{n\in\mathbb{N}^{+}}.
\end{split}
\end{equation}
By Assumption \ref{param}, the real parts of the eigenvalues in (\ref{specA}) are negative, monotonically decreasing and bounded from below (cf. Figure \ref{PlotRealPartSpectrum}).
\begin{figure}
	\centering
	\includegraphics[scale=0.7]{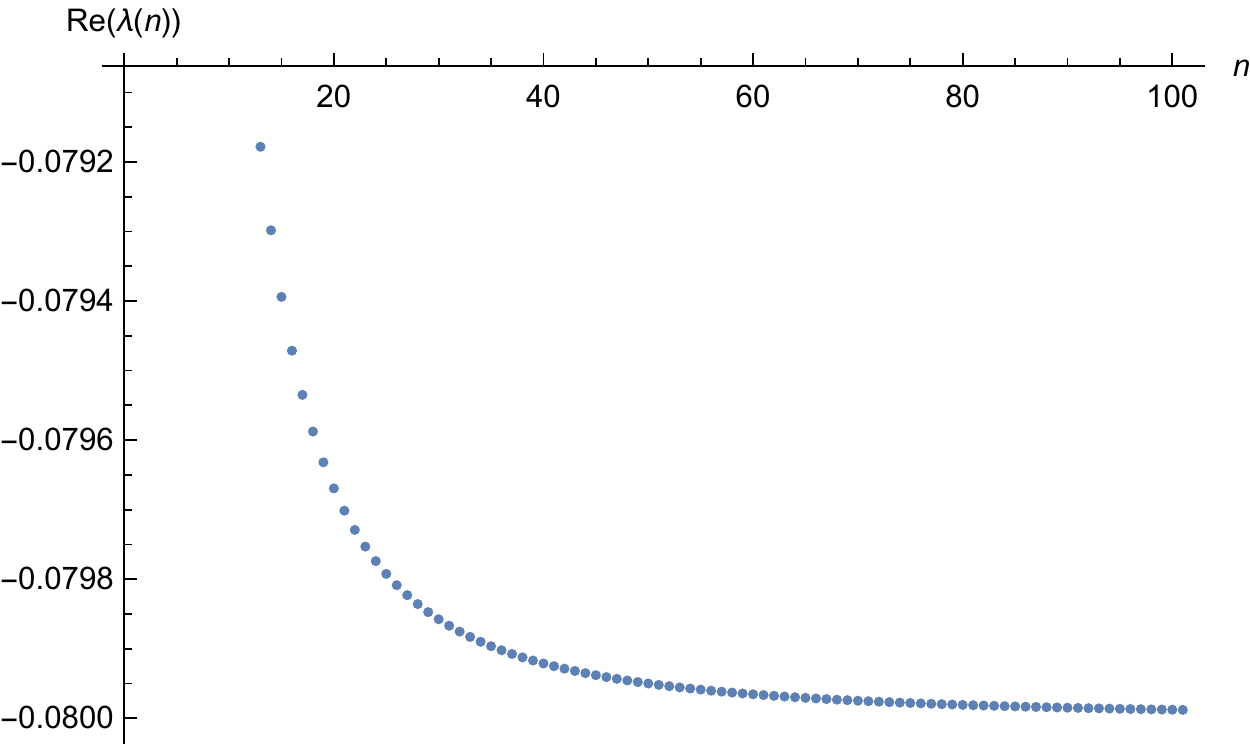}
	\caption{The real parts of the eigenvalues given by formula (\ref{specA}), for parameter values $\beta=0.08$, $\delta=0.04$ and $\mu=0.5$. Observe that the real parts converge to $-0.08$ as $n\to\infty$.
}
	\label{PlotRealPartSpectrum}
\end{figure}
The corresponding eigenvectors are given by
\begin{equation}\label{eigvec}
\{v_n^+,v_n^-\}=\left\{\left(\begin{array}{c}
1\\\lambda_{n}^+
\end{array}\right)\sin(nx),\left(\begin{array}{c}
1\\\lambda_{n}^-
\end{array}\right)\sin(nx)\right\}.
\end{equation}

As the underlying space for equation (\ref{mainequ}), we choose the standard energy space $\mathcal{H}:=H^1_0(0,\pi)\times L^2(0,\pi)$. We prove well-posedness of equation (\ref{mainequV}) in the Hilbert space $\mathcal{H}$ in Appendix 1, Theorem \ref{semi} and Theorem \ref{ExUn}, while we prove global existence of solutions to equation (\ref{mainequV}) in Appendix I, Proposition \ref{global}. This permits us to define a semi-flow map $\phi_\varepsilon^T: \mathcal{H}\to\mathcal{H}, (u_0,v_0)\mapsto \phi_\varepsilon^T(u_0,v_0)$ for any fixed time $T>0$ and for $\varepsilon>0$, which maps initial conditions $(u_0,v_0)$ to solutions of equation (\ref{mainequV}). Let $\mathcal{A}_\varepsilon$ denote the linearization of the flow map at the fixed time $T>0$.

Following \cite{Haller2016}, introduce the following definition of spectral submanifolds for the flow map of equation (\ref{mainequV}).
\begin{definition}
A \textit{spectral submanifold} (SSM) associated with a spectral subspace $\mathcal{E}_\varepsilon$ of the operator $\mathcal{A}_\varepsilon$ around the special solution $U_0^\varepsilon$ is a manifold, denoted by $W(\mathcal{E}_\varepsilon)$, with the following properties:
\begin{enumerate}
	\item The manifold $W(\mathcal{E})_\varepsilon$ is forward-invariant under the flow map $\phi_\varepsilon^T$,  tangent to $\mathcal{E}_\varepsilon$ at $U_0^\varepsilon$ and has the same dimension as $\mathcal{E}_\varepsilon$.
	\item The manifold $W(\mathcal{E}_\varepsilon)$ is strictly smoother than any other manifold satisfying (1).
\end{enumerate}
A \textit{slow spectral submanifold} (slow SSM) is an SSM $W(\mathcal{E}_{\varepsilon, slow})$ associated with a spectral subspace $\mathcal{E}_{\varepsilon, slow}$ of finitely many eigenvalues with the largest real parts within the total spectrum of $\mathcal{A}_\varepsilon$.
\end{definition}
In the presence of time-periodic forcing, the fixed point $U_0^\varepsilon$ of $\phi_\varepsilon^T$ corresponds to a nontrivial periodic orbit of \ref{mainequ}, while in the case of no external forcing, $U_0^0=0$ also determines a fixed point for the flow map of equation \ref{mainequ}. For model reduction purposes, the dynamics on a slow SSM is a faithful approximation to the full dynamics for large times. For a discussion on an SSM that is not associated with the slowest modes, we refer to Appendix II.\\
We choose a $2N$-dimensional eigenspace $\mathcal{E}\cong\mathbb{C}^{2N}$ of the linear operator $A$ with corresponding eigenvalues $\lambda_1,...,\lambda_{N}$, $N\in\mathbb{N}^+$. For such an eigenspace, we define the \textit{relative spectral quotient} as the positive integer
\begin{equation}\label{quotient}
q(\mathcal{E}):=\left[\frac{\inf_{j>  N}\text{Re}\lambda_{j}}{\text{Re}\lambda_{1}}\right]\in\mathbb{N}^+,
\end{equation}
where the square bracket denotes the integer part.\\
For the unperturbed equation, the relative spectral quotient is a measure of how unique the SSM $\mathcal{W}(\mathcal{E})$ is. Specifically, $\mathcal{W}(\mathcal{E})$ will turn out to be unique among class $q(\mathcal{E})$+1 invariant manifolds tangent to $\mathcal{E}$. Note that in the presence of forcing, the quantity (\ref{quotient}) implicitly depends upon $\varepsilon$ (cf. Appendix II, in particular Lemma \ref{perspec}).\\
In order to apply an abstract existence result for SSMs by Cabr\'{e} et al. \cite{Cab2003}, we make the following non-resonance assumption on the spectrum of the matrix of operators $A$.

\begin{assumption}\label{nonresmain}
Let $\lambda_1,..,\lambda_N$ be the $N$ eigenvalues of $A$ with the $N$ largest real parts and let $q(\mathcal{E})$ be the corresponding relative spectral quotient. Assume that
	\begin{equation}
	s_1\lambda_1+s_2\lambda_2+...+s_N\lambda_N\neq \lambda_j,\quad s_i\in\mathbb{Z}^{+},
	\end{equation}
for $2\leq s_1+s_2+...+s_N\leq q(\mathcal{E})$ and $j> N$.
\end{assumption}

\begin{theorem}\label{mainthmdifeps}
Assume that the parameters in equation (\ref{mainequ}) are such that Assumption \ref{param} and Assumption \ref{nonresmain} are satisfied. Assume further that the nonlinearity $f(u)$ satisfies Assumption \ref{AsA} with $r=a$, while the external forcing satisfies Assumption \ref{Asforce}. Then, for sufficiently small $\varepsilon>0$, there exists a unique, analytic SSM, $W(\mathcal{E}_\varepsilon)$, that is tangent to the spectral subspace $\mathcal{E}_\varepsilon$ along the periodic solution $U_0^\varepsilon$.\\
\renewcommand{\theenumi}{\roman{enumi}}
\begin{enumerate}
	\item The manifold $\mathcal{W}(\mathcal{E}_\varepsilon)$ is unique among all class $C^{q(\mathcal{E}_\varepsilon)+1}$ forward-invariant manifolds tangent to $\mathcal{E}_\varepsilon$ along $U_0^\varepsilon$.
	\item The dynamics on the SSM are conjugate to a differential equation with a polynomial right-hand side of degree not larger than $q(\mathcal{E}_\varepsilon)-1$.
\end{enumerate}
\end{theorem}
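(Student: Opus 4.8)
The plan is to reduce the theorem to the abstract invariant-manifold result of Cabr\'e, Fontich and de la Llave \cite{Cab2003} applied to the time-$T$ flow map of \eqref{mainequV} on the energy space $\mathcal{H}=H^1_0(0,\pi)\times L^2(0,\pi)$, with $T=2\pi/\omega$ chosen equal to the forcing period. First I would invoke the well-posedness and global-existence results of Appendix~I (Theorems~\ref{semi} and~\ref{ExUn}, Proposition~\ref{global}) to realize $\phi_\varepsilon^T$ as a genuine map $\mathcal{H}\to\mathcal{H}$, and Lemma~\ref{fixed} to produce the distinguished fixed point $U_0^\varepsilon$ (the origin when $\varepsilon=0$, a point on the forced periodic orbit when $\varepsilon>0$). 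After translating $U_0^\varepsilon$ to the origin, the theorem becomes a statement about the fixed point $0$ of the analytic map $F_\varepsilon:=\phi_\varepsilon^T(\,\cdot+U_0^\varepsilon)-U_0^\varepsilon$ with bounded linear part $\mathcal{A}_\varepsilon=DF_\varepsilon(0)$, and my task is to verify the hypotheses of \cite{Cab2003}: boundedness and invertibility of $\mathcal{A}_\varepsilon$, a spectral gap isolating the chosen $2N$-dimensional slow subspace $\mathcal{E}_\varepsilon$, the non-resonance of Assumption~\ref{nonresmain}, and analyticity of the nonlinear remainder.

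The core is the spectral analysis, which I would carry out first at $\varepsilon=0$, where $\mathcal{A}_0=e^{TA}$. Using the explicit eigenvalues and eigenvectors \eqref{specA}--\eqref{eigvec} one computes $\sigma(\mathcal{A}_0)$ directly as the closure of $\{e^{T\lambda_n^\pm}\}_{n\in\mathbb{N}^+}$. Assumption~\ref{param} then does three things at once: the negativity of the real parts places the spectrum strictly inside the unit circle; the monotone decrease of $\text{Re}\,\lambda_n^\pm$ (equivalent to $\delta\mu<\beta$, item~(4)) makes the $N$ slowest eigenvalue pairs precisely $\lambda_1^\pm,\dots,\lambda_N^\pm$, so $\mathcal{E}$ is unambiguously defined and separated from the remainder by a genuine gap; and, crucially, because $\text{Re}\,\lambda_n^\pm\to-\beta/(2\mu)$ rather than to $-\infty$, the spectrum accumulates on the circle of radius $e^{-T\beta/(2\mu)}>0$ and is therefore bounded away from $0$, so $\mathcal{A}_0$ is invertible and condition~(1) of Assumption~\ref{AsB} holds. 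I would then convert the additive non-resonance of Assumption~\ref{nonresmain} into the multiplicative non-resonance $\prod_i (e^{T\lambda_i})^{s_i}\neq e^{T\lambda_j}$ demanded for a map, keeping track of the $2\pi\mathrm{i}/T$ ambiguity introduced by exponentiation.

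Passing to $\varepsilon>0$, I would use the persistence results of Appendix~II (Lemma~\ref{perspec}): for small $\varepsilon$ the periodic orbit $U_0^\varepsilon$ and the linearized period map $\mathcal{A}_\varepsilon$ depend continuously on $\varepsilon$, so the spectral splitting, the gap and the (open) non-resonance conditions survive and $q(\mathcal{E}_\varepsilon)$ remains well defined. The remaining analytic input is that the Nemytskii operator $u\mapsto f(u)$ is an analytic map on $\mathcal{H}$; this is exactly what the growth bounds \eqref{above1}--\eqref{above2} together with $r=a$ in Assumption~\ref{AsA} and the boundedness of the forcing in Assumption~\ref{Asforce} guarantee, and it renders $F_\varepsilon$ analytic. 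With all hypotheses in place, \cite{Cab2003} yields a unique analytic manifold tangent to $\mathcal{E}_\varepsilon$ at $U_0^\varepsilon$; the same theorem furnishes its uniqueness among class $C^{q(\mathcal{E}_\varepsilon)+1}$ invariant manifolds, proving~(i), while the parameterization method employed there conjugates the internal dynamics to a polynomial normal form whose degree is controlled by the spectral quotient --- bounded by $q(\mathcal{E}_\varepsilon)-1$ as asserted, since non-resonance up to order $q(\mathcal{E}_\varepsilon)$ permits the removal of all the corresponding off-manifold terms --- proving~(ii).

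The main obstacle I anticipate is the spectral step in the infinite-dimensional setting: establishing $\sigma(\mathcal{A}_0)$ from the generator spectrum and, above all, verifying that $\sigma(\mathcal{A}_\varepsilon)$ stays bounded away from $0$ uniformly in $n$ and in $\varepsilon$. This invertibility --- the very feature that separates the present Rayleigh-plus-bending-rate damping from the Euler--Bernoulli and Kelvin--Voigt alternatives discussed after \eqref{mainequ}, whose flow-map spectra accumulate at the origin --- is precisely the hypothesis of \cite{Cab2003} that fails without the careful damping choice, and ensuring its persistence is where the real work lies.
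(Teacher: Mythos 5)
Your proposal is correct and follows essentially the same route as the paper: realize the period (Poincar\'e) map on the energy space via the Appendix~I well-posedness and fixed-point results, verify the hypotheses of Cabr\'e--Fontich--de la Llave using the explicit spectrum $e^{T\lambda_n^\pm}$ (bounded away from $0$ thanks to the damping choice) at $\varepsilon=0$, and transfer everything to $\varepsilon>0$ by analytic spectral perturbation theory. The only step you leave implicit is the paper's final continuation of the Poincar\'e-map invariant manifold over all base phases $\omega_0\in S_\omega$ to obtain the SSM for the full suspended flow.
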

\begin{proof}
We refer to Appendix II.
\end{proof}
Theorem \ref{mainthmdifeps} states that there exists an open subset $U\subset\mathbb{C}^N$ and a parameterization $K_{\varepsilon}:U\to\mathcal{H}$ such that the slow SSM associated to the spectral subspace $\mathcal{E}_\varepsilon$ is given by $W(\mathcal{E}_{\varepsilon})=K_{\varepsilon}(U)$. Furthermore, there exists a polynomial map $R_\varepsilon:U\to U$ of degree not larger than $q(\mathcal{E})-1$ such that the equation
\begin{equation}\label{mainthmdifeqeps}
A\cdot K_\varepsilon+G_\varepsilon\circ K_\varepsilon =D K_\varepsilon\cdot R_\varepsilon+\omega D_\theta K
\end{equation}
holds true.\\
In the case of no external forcing, we obtain a similar result under slightly weaker assumptions on the nonlinearity.
\begin{theorem}\label{mainthmdif}
Let $\varepsilon=0$ and assume that  that the parameters in equation (\ref{mainequ}) are such that Assumption \ref{param} and Assumption (\ref{nonresmain}) are satisfied. Assume further that the nonlinearity satisfies Assumption \ref{AsA} with $r\in\mathbb{N}\cup \{\infty,a\} $ and $q(\mathcal{E})\leq r$. Then there exists a unique SSM $W(\mathcal{E})$ of class $C^r$ that is tangent to the spectral subspace $\mathcal{E}$ at the trivial solution $U_0=0$.\\
\renewcommand{\theenumi}{\roman{enumi}}
\begin{enumerate}
	\item The manifold $\mathcal{W}(\mathcal{E})$ is unique among all class $C^{q(\mathcal{E})+1}$ forward-invariant manifolds tangent to $\mathcal{E}$ along $U_0$.
	\item The dynamics on the SSM are conjugate to a differential equation with a polynomial right-hand side of degree not larger than $q(\mathcal{E})-1$.
\end{enumerate}

\end{theorem}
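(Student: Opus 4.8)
The plan is to obtain Theorem \ref{mainthmdif} as an application of the abstract invariant-manifold theorem of Cabr\'{e}, Fontich and de la Llave \cite{Cab2003} to the time-$T$ flow map of the autonomous beam equation. First I would fix $T>0$ and consider the map $F:=\phi_0^T:\mathcal{H}\to\mathcal{H}$ on the energy space $\mathcal{H}=H^1_0(0,\pi)\times L^2(0,\pi)$, which is well defined and globally defined by Theorem \ref{semi}, Theorem \ref{ExUn} and Proposition \ref{global}. Since $\varepsilon=0$, the origin $U_0=0$ is a fixed point of $F$, and because $f(0)=f'(0)=0$ its linearization is exactly the linear semigroup $\mathcal{A}_0=DF(0)=e^{TA}$ with $A$ as in (\ref{defMA}). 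This reduction to a discrete dynamical system is what lets the machinery of \cite{Cab2003}, stated for maps on Banach spaces, apply.

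The core of the argument is the verification of the spectral hypotheses for $\mathcal{A}_0$. Because $A$ is diagonalized by the Fourier-sine eigenbasis (\ref{eigvec}), the spectrum of the flow-map linearization is $\sigma(\mathcal{A}_0)=\{e^{T\lambda_n^\pm}\}_{n\in\mathbb{N}^+}$, with $\lambda_n^\pm$ given explicitly by (\ref{specA}); I would justify this spectral mapping directly from the block-diagonal decomposition rather than invoking a general semigroup theorem. By Assumption \ref{param} the real parts $\mathrm{Re}\,\lambda_n^\pm$ are negative, monotonically decreasing in $n$ and bounded below, so $|e^{T\lambda_n^\pm}|=e^{T\mathrm{Re}\,\lambda_n^\pm}$ lies strictly inside the unit circle and is bounded away from $0$; in particular $0\notin\sigma(\mathcal{A}_0)$ and $\mathcal{A}_0$ is invertible, which is precisely condition (1) of Assumption \ref{AsB}. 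This invertibility is exactly what the Rayleigh-type damping secures, as discussed in the Remark above. The $2N$-dimensional subspace $\mathcal{E}$ spanned by the eigenvectors for $\lambda_1,\dots,\lambda_N$ is $\mathcal{A}_0$-invariant, and since these are the eigenvalues of largest real part, the spectrum splits into an outer annulus $\sigma(\mathcal{A}_0|_{\mathcal{E}})$ and a remainder of strictly smaller modulus, which is the attracting (slow) splitting required by \cite{Cab2003}.

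Next I would translate the non-resonance and rate conditions. The relative spectral quotient (\ref{quotient}) measures the spectral gap between $\mathcal{E}$ and its complement and fixes the admissible smoothness class, while Assumption \ref{nonresmain}, which forbids $\sum_i s_i\lambda_i=\lambda_j$ for $j>N$ and $2\le\sum_i s_i\le q(\mathcal{E})$, translates through $z\mapsto e^{Tz}$ into the multiplicative non-resonance $\prod_i(e^{T\lambda_i})^{s_i}\ne e^{T\lambda_j}$ needed to eliminate the relevant terms in the conjugacy. Feeding these into \cite{Cab2003} yields a $C^r$ forward-invariant manifold $W(\mathcal{E})$ tangent to $\mathcal{E}$ at the origin, unique among $C^{q(\mathcal{E})+1}$ invariant manifolds tangent to $\mathcal{E}$ (claim (i)), together with a polynomial reduced map of degree at most $q(\mathcal{E})-1$ conjugating the dynamics (claim (ii)). Concretely this is the $\varepsilon=0$, autonomous specialization of the invariance equation (\ref{mainthmdifeqeps}), namely $A\cdot K+G\circ K=DK\cdot R$ for the parameterization $K$ and the polynomial reduced vector field $R$, with the $\omega D_\theta K$ term absent.

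I expect two steps to carry the real technical weight. The first is showing that $F=\phi_0^T$ is genuinely a $C^r$ map on $\mathcal{H}$, which reduces to the $C^r$-regularity of the Nemytskii operator $u\mapsto f(u)$ between the relevant Sobolev spaces; this is where the polynomial-growth bounds (\ref{above1})--(\ref{above2}) on $f$ and $f'$ from Assumption \ref{AsA} are consumed. The second, and genuinely delicate, point is the careful matching of the abstract spectral-gap and invertibility hypotheses of \cite{Cab2003} to $\mathcal{A}_0$: in particular confirming that $\sigma(\mathcal{A}_0)$ accumulates on a circle of strictly positive radius, so that no eigenvalues approach $0$ (the failure mode flagged for Kelvin--Voigt and pure visco-elastic damping in the Remark), and handling the complex-conjugate pairing of the eigenvalues so that the resulting manifold and reduced dynamics are real.
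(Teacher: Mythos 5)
Your proposal follows essentially the same route as the paper: pass to the time-$T$ flow map, identify its linearization at the origin with $e^{TA}$, verify the invertibility, spectral-splitting and non-resonance hypotheses of the Cabr\'e--Fontich--de la Llave theorem using the explicit spectrum (\ref{specA}) and Assumptions \ref{param} and \ref{nonresmain}, and handle the $C^r$-regularity of the Nemytskii operator via the polynomial growth bounds. The only cosmetic difference is that you justify the spectral mapping $\sigma(e^{TA})=e^{T\sigma(A)}$ directly from the Fourier-sine diagonalization, whereas the paper invokes the spectral mapping theorem for operators with pure point spectrum; both are valid here.
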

\begin{proof}
We refer to Appendix II.
\end{proof}
There exists an open subset $U\subset\mathbb{C}^N$ and a parameterization $K:U\to\mathcal{H}$ such that the slow SSM associated to the spectral subspace $\mathcal{E}$ is given by $W(\mathcal{E})=K(U)$. Finally, there exists a polynomial map $R:U\to U$ ot degree not larger than $q(\mathcal{E})-1$ such that the equation
\begin{equation}\label{mainthmdifeq}
A\cdot K+G_{0}\circ K=DK\cdot R
\end{equation}
holds true.

\subsection{An example with no external forcing} To illustrate the above results, we consider an example with no external, time-dependent forcing present, i.e., set $\varepsilon=0$. In the following, we will compute a reduction of equation (\ref{mainequ}) to a two-dimensional, slow invariant manifold around the fixed point $u=0$.
Consider a cubic nonlinearity
\begin{equation}
f(u)=-\kappa u^3,
\end{equation}
and set the internal damping $$\beta=\frac{4\delta\mu}{1-3\mu}.$$
The parameters are chosen in a way that facilitates the calculations, and hence do not correspond to a particular physical beam geometry. Evaluating the spectral quotient (\ref{quotient}), we find that
\begin{equation}
q=\left[\frac{\inf_{n}\text{Re}\lambda_n}{\text{Re}\lambda_1}\right]= \left[\frac{\beta(1+\mu)}{(\beta+\delta)\mu}\right]=4,
\end{equation} 
meaning that we are looking for the analytic invariant manifold around the fixed point $u\equiv 0$ that is unique amongst all $C^5$ invariant manifolds. The reduced dynamics are given by a polynomial right-hand side of degree not larger than $3$. As the parametrization space for the invariant manifold, we choose the eigenspace associated with the complex eigenvalues
\begin{equation}\label{l1}
\lambda_1=-\frac{\beta +\delta}{2+2\mu }+i\sqrt{\frac{\alpha +\gamma}{1+\mu }-\left(\frac{\beta +\delta}{2+2\mu }\right)^2}.
\end{equation}
That is to say, we choose as $\mathcal{E}$ the spectral subspace
\begin{equation}\label{specun}
\mathcal{E}=\spn\left\{\left(\begin{array}{c}1\\ \lambda_1\end{array}\right)\sin(x),\left(\begin{array}{c}1\\ \overline{\lambda}_1\end{array}\right)\sin(x)\right\}\cong\mathbb{C}^2.
\end{equation}
We will denote coordinates in the space $\mathcal{E}$ as $(z,\overline{z})$.

In accordance with Theorem \ref{mainthm}, the dynamics on the invariant manifold $\mathcal{W}(\mathcal{E})$ can be approximated by a Taylor-series expansion on the space $\mathcal{E}$. That is, there is an analytic function $K:\mathbb{C}^2\to H^1_0(0,\pi)\times L^2(0,\pi)$ given by
\begin{equation}\label{defK}
K(z)=\sum_{|n|=1}^\infty K_n z^n,
\end{equation}
for $z=(z_1,z_2)$, $K_n\in H^1_0(0,2\pi)\times L^2(0,\pi)$ and $n=(n_1,n_2)\in\mathbb{N}^2$.\\
Note that the function $K$ parametrizes the unique slow manifold in the space $ H^1_0(0,2\pi)\times L^2(0,\pi)$, which also admits complex Fourier coefficients. Of course, we have again that $z_2=\overline{z_1}$ and therefore, one finds that the $K_n$'s written in the eigenbasis (\ref{eigvec}) satisfy the relation
\begin{equation}\label{symK}
K_{(n_2,n_1)}=\left(\begin{matrix}
0 &1\\ 1 & 0
\end{matrix}\right)\overline{K}_{(n_1,n_2)},
\end{equation}
as a comparison of powers in $z$ and $\overline{z}$ shows.

Since we have assumed that the damping is small, the slow modes are almost in resonance:
\begin{equation}\label{nearres}
2\lambda_1+\overline{\lambda_1}\approx \lambda_1,\quad \lambda_1+2\overline{\lambda_1}\approx \overline{\lambda_1}.
\end{equation}
This follows from the explicit formula for the eigenvalues (\ref{specA}) and the smallness of the parameters $\mu$ and $\delta$. A similar line of reasoning has been employed in \cite{Szalai2016} for the computation of backbone curves.

In view of Theorem \ref{mainthmdif}, we assume that the conjugated dynamics on the slow SSM $\mathcal{W}(\mathcal{E})$ are given by
\begin{equation}\label{defR}
\left(\begin{array}{c}
\dot{z}\\ \dot{\overline{z}}
\end{array}\right)=\left(\begin{array}{c}\lambda_1 z+R_0 z^2\overline{z}\\ \overline{\lambda}_1 \overline{z}+\overline{R}_0 z\overline{z}^2
\end{array}\right).
\end{equation}
Note that the tangency to the spectral subspace $\mathcal{E}$, as it is guaranteed by Theorem \ref{mainthmdif}, is incorporated in the linear part of $R$. This specific form of $R$ will turn out to be the simplest reduced dynamics on the SSM, while at the same the coefficients of the parametrization $K$ do not include small denominators for an appropriate choice of $R_0$.

Inserting the defining equations (\ref{defK}) and (\ref{defR}) for $K$ and $R$, respectively, into equation (\ref{mainthmdifeq}), we obtain the following relation:
\begin{equation*}
\begin{split}
&\sum_{|n|=1}^{\infty} A\cdot K_{n}z^{n} +\mathbf{f}(K(z))=\\
&=\sum_{|n|=1}^{\infty}\Big((\lambda_1 z_1+R_0 z_1^2z_2)n_1z_2+(\overline{\lambda}_1 z_2+\overline{R}_0 z_1z_2^2)n_2z_1\Big)K_{(n_1,n_2)}z_1^{n_1-1}z_2^{n_2-1}\\
&=\sum_{|n|=1}^{\infty}(\lambda_1n_1+\overline{\lambda}_1n_2)K_nz^n+\sum_{|n|=1}^{\infty}(R_0n_1+\overline{R}_0n_2)K_nz^{n+(1,1)}\\
&= \sum_{|n|=1}^{\infty}(\lambda_1n_1+\overline{\lambda}_1n_2)K_nz^n+\sum_{|n|=3}^{\infty}(R_0(n_1-1)+\overline{R}_0(n_2-1))K_{n-(1,1)}z^{n}.
\end{split}
\end{equation*}
Since $f(K(z))=\mathcal{O}(|z|^3)$, we find that equation (\ref{mainthmdif}), at order one, becomes
\begin{equation}
A\cdot K_{(1,0)}=\lambda_1 K_{(1,0)},\quad A\cdot K_{(0,1)}=\overline{\lambda}_1 K_{(0,1)},
\end{equation}
which implies that $K_{(1,0)}$ and $K_{(0,1)}$ are eigenvectors for the operator $A$ with eigenvalues $\lambda_1$ and $\overline{\lambda}_1$, respectively, i.e.,
\begin{equation}\label{Kcoef}
 K_{(1,0)}=\left(\begin{array}{c}
1\\0
\end{array}\right)\sin(x),\quad K_{(0,1)}=\left(\begin{array}{c}
0\\1
\end{array}\right)\sin(x).
\end{equation}
At order two, equation (\ref{mainthmdif}) becomes
\begin{equation}\label{Ko2}
\begin{split}
&A\cdot K_{(2,0)}=2\lambda_1 K_{(2,0)},\\
&A\cdot K_{(1,1)}=(\lambda_1+\overline{\lambda}_1)K_{(2,0)},\\
&A\cdot K_{(0,2)}=2\overline{\lambda}_1, K_{(2,0)}.
\end{split}
\end{equation}
Since the $K_n$'s are written in the eigenbasis of of the operator $A$, the only possible solution to equation (\ref{Ko2}) is
\begin{equation}\label{Ko2sol}
 K_{(2,0)}=0,\quad  K_{(1,1)}=0,\quad  K_{(0,2)}=0.
\end{equation}
Knowing the first order terms, we may now easily represent the cubic nonlinearity in the eigenbasis of the operator $A$. For a vector $\mathbf{x}=(x_1,x_2)$, we denote by $[\mathbf{x}]_i=x_i$, i=1,2, the $i$-th component of $\mathbf{x}$. The non linearity $f(u)=\kappa u^3$ then takes the form
\begin{equation}\label{nonlin}
\begin{split}
G(K(z))=&\mathbf{V}^{-1}\cdot\left(\begin{array}{c} 0\\
-\kappa \left(\sum_{|n|=1}^\infty [\mathbf{V}\cdot K_n]_1 z^n\right)^3
\end{array}\right)\\[0.5cm]
&= \frac{\kappa}{4} (z+\overline{z})^3\left(\frac{3\sin(x)}{\overline{\lambda}_1-\lambda_1}-\frac{\sin(3x)}{\overline{\lambda}_3-\lambda_3}\right)\left(\begin{array}{c}
1\\ -1
\end{array}\right)+\mathcal{O}(|z|^4),
\end{split}
\end{equation}
where we have set the operator
\begin{equation}
\mathbf{V}:  H^1_0(0,\pi)\times L^2(0,\pi)\to H^1_0(0,\pi)\times L^2(0,\pi)
\end{equation} as
\begin{equation}\label{defV}
\mathbf{V}\cdot \sum_{m=1}^{\infty} \left(\begin{array}{c}
u_m\\ v_m 
\end{array}\right)\sin(mx):=\sum_{m=1}^{\infty}  \left(\begin{matrix}
1 & 1\\ \lambda_m & \overline{\lambda}_m
\end{matrix}\right)\left(\begin{array}{c}
u_m \\ v_m\end{array}\right)\sin(mx),
\end{equation}
which realizes the change of basis from physical coordinates to coordinates in the eigenbasis of $A$.  We may now compute the terms of order three in the expansion of $K$ by comparison of powers in $z$ and $\overline{z}$. Here, we only show the computations for the coefficient $K_{(3,0)}$ and $K_{(2,1)}$. The equations for the other coefficients then follow from the symmetry condition in equation (\ref{symK}). Using equation (\ref{nonlin}) and the eigenexpansion of the parametrization $K$, we find at order $z_1^3$:
\begin{equation}
A\cdot K_{(3,0)}+\frac{\kappa}{4} \left(\frac{3\sin(x)}{\overline{\lambda}_1-\lambda_1}-\frac{\sin(3x)}{\overline{\lambda}_3-\lambda_3}\right)\left(\begin{array}{c}
1\\ -1
\end{array}\right)=3\lambda_1 K_{(3,0)},
\end{equation}
which implies that
\begin{equation}
K_{(3,0)}=\frac{3\kappa\sin(x)}{4(\overline{\lambda}_1-\lambda_1)}\left(
\frac{1}{2\lambda_1}, \frac{1}{\overline{\lambda}_1-3\lambda_1}
\right)+\frac{\kappa\sin(3x)}{4(\overline{\lambda}_3-\lambda_3)}\left(
\frac{1}{\lambda_3-3\lambda_1}, \frac{1}{3\lambda_1-\overline{\lambda}_3}
\right).
\end{equation}
Similarly, at order $z_1^2z_2$:
\begin{equation}
A\cdot K_{(2,1)}+\frac{3\kappa}{4} \left(\frac{3\sin(x)}{\overline{\lambda}_1-\lambda_1}-\frac{\sin(3x)}{\overline{\lambda}_3-\lambda_3}\right)\left(\begin{array}{c}
1\\ -1
\end{array}\right)=(2\lambda_1+\overline{\lambda}_1)K_{(2,1)}+R_0K_{(1,0)},
\end{equation}
which has the solution
\begin{equation}\label{Kcoef21}
\begin{split}
 K_{(2,1)}=&\frac{\sin(x)}{4(\overline{\lambda}_1-\lambda_1)}\left(
\frac{9\kappa-4(\overline{\lambda}_1-\lambda_1)R_0}{\lambda_1+\overline{\lambda}_1}, -\frac{9\kappa}{2\lambda_1}
\right)\\
&+\frac{3\kappa\sin(3x)}{4(\overline{\lambda}_3-\lambda_3)}\left(
\frac{1}{\lambda_3-2\lambda_1-\overline{\lambda}_1}, \frac{1}{\overline{\lambda}_1+2{\lambda}_1-\overline{\lambda}_3}
\right).
\end{split}
\end{equation}
By equation (\ref{symK}), the remaining two coefficients are given by
\begin{equation}\label{Kcoef3}
\begin{split}
K_{(1,2)}=&\frac{\sin(x)}{4(\overline{\lambda}_1-\lambda_1)}\left(\frac{9\kappa}{2\overline{\lambda}_1},
\frac{4(\lambda_1-\overline{\lambda}_1)\overline{R}_0-9\kappa}{\lambda_1+\overline{\lambda}_1} 
\right),\\
&+\frac{3\kappa\sin(3x)}{4(\overline{\lambda}_3-\lambda_3)}\left(
\frac{1}{\lambda_3-2\overline{\lambda}_1-\lambda_1}, \frac{1}{\lambda_1+2\overline{\lambda}_1-\overline{\lambda}_3}\right),\\[0.3cm]
 K_{(0,3)}=&\frac{3\kappa\sin(x)}{4(\overline{\lambda}_1-\lambda_1)}\left( \frac{1}{3\overline{\lambda}_1-\lambda_1},
 -\frac{1}{2\overline{\lambda}_1}
 \right)+\frac{\kappa\sin(3x)}{4(\overline{\lambda}_3-\lambda_3)}\left(
 \frac{1}{\lambda_3-3\overline{\lambda}_1}, \frac{1}{3\overline{\lambda}_1-\overline{\lambda}_3}
 \right).
\end{split}
\end{equation}
Since $\lambda_1+\overline{\lambda}_1\approx 0$ by the smallness assumption on the damping, (\ref{nearres}), the quantities $K_{(2,1)}$ and $K_{(1,2)}$ in (\ref{Kcoef}) would contain large denominators. This, however, would limit the validity of the Taylor series expansion (\ref{defK}) to a smaller domain, which is unfavorable in applications. We, therefore, set the parameter $R_0$ in (\ref{defR}) as
\begin{equation}
R_0=\frac{9\kappa}{4(\overline{\lambda}_1-\lambda_1)}=\frac{9\kappa\ri}{8\text{ Im }\lambda_1},
\end{equation}
which eliminates the small denominators in equations (\ref{Kcoef21}) and (\ref{Kcoef3}):
\begin{equation}
\begin{split}
 K_{(2,1)}:=&\frac{\sin(x)}{4(\overline{\lambda}_1-\lambda_1)}\left(
0, -\frac{9\kappa}{2\lambda_1}
\right)\\
&+\frac{3\kappa\sin(3x)}{4(\overline{\lambda}_3-\lambda_3)}\left(
\frac{1}{\lambda_3-2\lambda_1-\overline{\lambda}_1}, \frac{1}{\overline{\lambda}_1+2{\lambda}_1-\overline{\lambda}_3}
\right),\\[0.3cm]
K_{(1,2)}:=&\frac{\sin(x)}{4(\overline{\lambda}_1-\lambda_1)}\left(\frac{9\kappa}{2\overline{\lambda}_1},0 
\right)\\
&+\frac{3\kappa\sin(3x)}{4(\overline{\lambda}_3-\lambda_3)}\left(
\frac{1}{\lambda_3-2\overline{\lambda}_1-\lambda_1}, \frac{1}{\lambda_1+2\overline{\lambda}_1-\overline{\lambda}_3}\right).
\end{split}
\end{equation}
Now, we may analyze the reduced dynamics, given by the equation
\begin{equation}\label{EqR}
\dot{z}=\lambda_1 z+R_0 z^2\overline{z},
\end{equation} 
or, writing the complex variable $z=x+\ri y$ and the the eigenvalue $\lambda_1=A+B\ri$, we obtain the equivalent system
\begin{equation}
\left(\begin{array}{c}
\dot{x}\\ \dot{y}
\end{array}\right)=\left(\begin{matrix}
A & -B\\ B & A
\end{matrix}\right) \left(\begin{array}{c}
x\\ y
\end{array}\right)+\frac{9\kappa}{8B}(x^2+y^2)\left(\begin{array}{c}
-y\\ x
\end{array}\right).
\end{equation}
\begin{figure}
	\centering
	\includegraphics[scale=0.7]{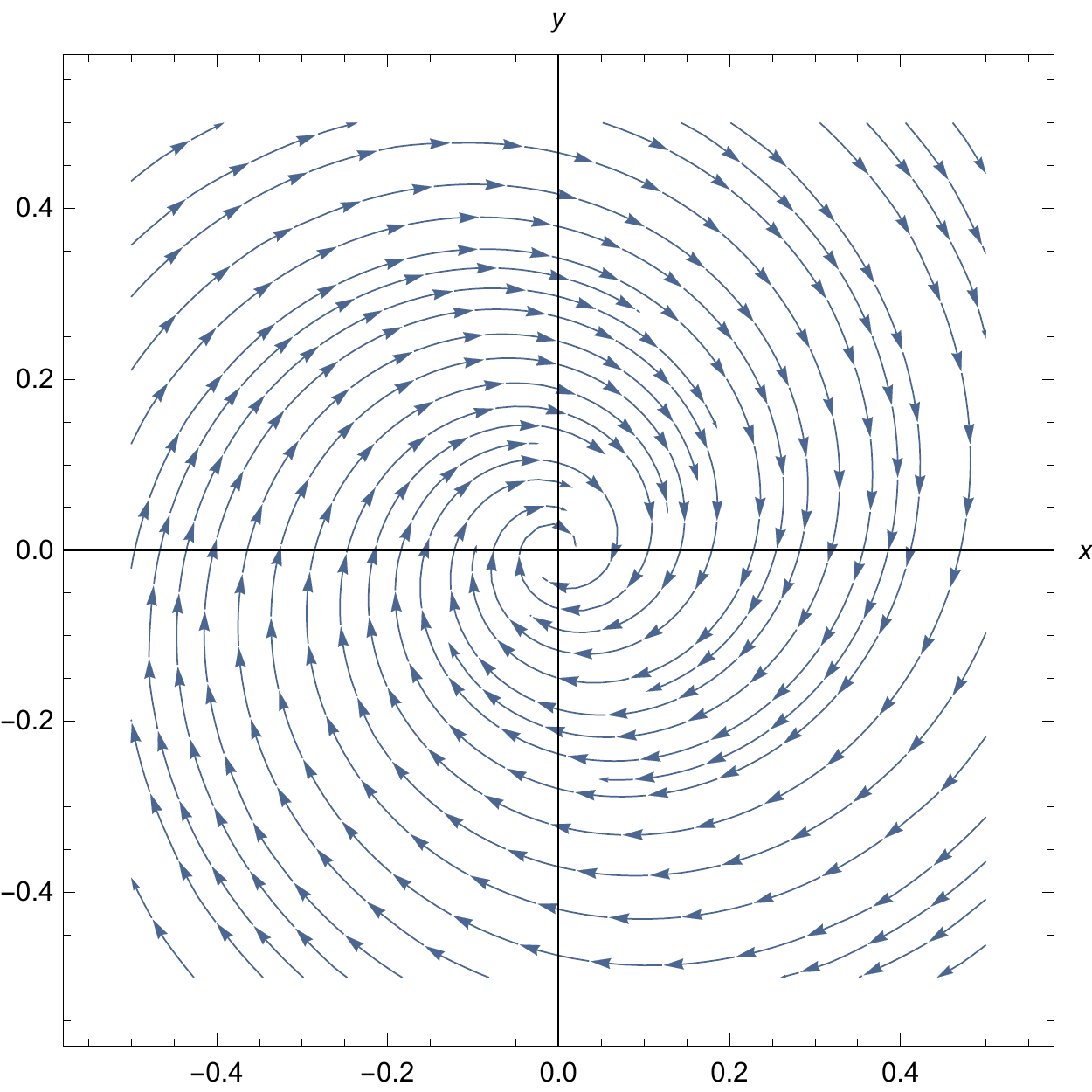}
	\caption{A typical phase portrait for the reduced system (\ref{eqpolar}) with parameter values $\alpha=1$, $\beta=0.6$, $\gamma=1$, $\delta=0.5$, $\mu=1$, $\kappa=1$, $A=-0.275$ and $B=-0.9614$.}
	\label{PlotReduction2}
\end{figure}
Rewriting system (\ref{EqR}) in polar coordinates $z(t)=r(t)e^{\ri\theta(t)}$, we obtain the equations
\begin{equation}\label{eqpolar}
\begin{cases}
\dot{r}&=Ar\\
\dot{\theta}&=B+\displaystyle\frac{9\kappa}{8B}r^2.
\end{cases}
\end{equation}
System (\ref{eqpolar}) can be integrated explicitly to
\begin{equation}
r(t)=e^{At}r_0,\quad \theta(t)=\theta_0+Bt+\frac{9\kappa r_0}{16BA}(e^{2At}-1),
\end{equation}
where $(r(0),\theta(0))=(r_0,\theta_0)$.\\
The right hand side of (\ref{eqpolar}) in the $\theta$- variable provides a measure for the instantaneous oscillation frequency of the full system at leading order. We therefore set
\begin{equation}
\Omega(r)=B+\displaystyle\frac{9\kappa}{8B}r^2.
\end{equation} 
We define the \textit{nominal instantaneous amplitude} as the quantity
\begin{equation}
\text{Amp}(r)=\sqrt{\frac{1}{2\pi}\int_0^{2\pi}|\mathbf{V}\cdot K(z_1(r,\theta),z_2(r,\theta))|^2\,d\theta},
\end{equation}
where $\mathbf{V}$ is the change of coordinates to the physical variables, defined in (\ref{defV}). In our example, $\text{Amp }(r)= 2r+\mathcal{O}(r^2)$. Now, the \textit{backbone curve} associated to the dynamics of the reduced system (\ref{eqpolar}) is defined as $\mathbf{B}:\mathbb{R}\to\mathbb{R}^2$,
\begin{equation}
\mathbf{B}(r)=\left(\begin{array}{c}
\Omega(r)\\ \text{Amp}(r)
\end{array}\right).
\end{equation}
Backbone curves for different parameter-values of $\kappa$ are depicted in Figure \ref{Backbone}.
\begin{figure}[h]
	\centering
	\includegraphics[scale=0.5]{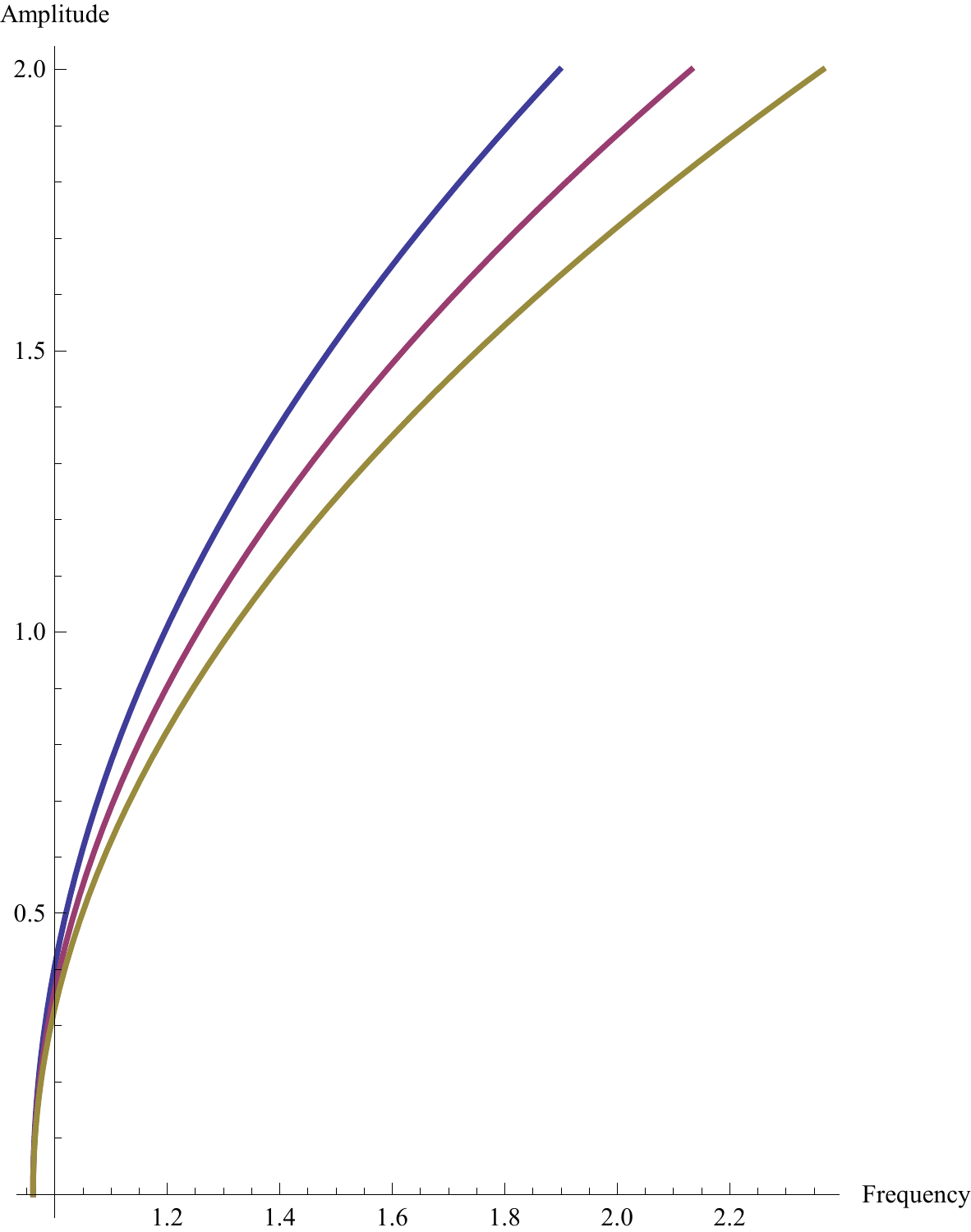}
	\caption{Backbone curves for the reduced system (\ref{eqpolar}) with parameter values $\alpha=1$, $\beta=0.6$, $\gamma=1$, $\delta=0.5$, $\mu=1$, $A=-0.275$, $B=-0.9614$ and $\kappa=0.8$ (blue), $\kappa=1$ (magenta) and $\kappa=1.2$ (green).}
	\label{Backbone}
\end{figure}

\subsection{An example with weak external time-periodic forcing}
To illustrate Theorem \ref{mainthmdifeps}, we consider an example with weak external time-periodic forcing. As in the previous example, we assume a cubic nonlinearity
\begin{equation}
f(u)=-\kappa u^3
\end{equation}
and set the internal damping to
\begin{equation}
\beta=\frac{4\delta\mu}{1-3\mu},
\end{equation}
while as time-periodic external forcing, we choose 
\begin{equation}\label{force}
h(x,t)=\cos(\omega t)\sin(x).
\end{equation}
In the eigenbasis of $A$, the forcing (\ref{force}) takes the form
\begin{equation}
\mathbf{h}(x,\theta)=\frac{1}{\overline{\lambda}_1-\lambda_1}\left(\begin{array}{c}
-1\\1
\end{array}\right)\cos(\theta)\sin(x).
\end{equation}
We assume that the forcing frequency $\omega$ is not in resonance with the eigenfrequencies $\lambda_n$ of the linear part of (\ref{mainequ}) in \ref{specA}, i.e., 
\begin{equation}\label{resomega}
\frac{\text{ Im }(\lambda_n)}{\omega}\notin\mathbb{Z},
\end{equation}
for all $n\in\mathbb{N}^{+}$.
In the following, we will compute the dynamics on the SSM up to first order in $\varepsilon$.\\
As the parameter $\varepsilon$ is small, the eigenvalue $\lambda_1$ in (\ref{l1}) either perturbs into an eigenvalue $\lambda_1(\varepsilon)$ with geometric multiplicity two or splits into two eigenvalues, $\lambda_{1,1}(\varepsilon)$ and $ \lambda_{1,2}(\varepsilon)$, both with geometric multiplicity one, respectively, cf. Appendix II, Lemma \ref{perteig}. In any case, we choose as our spectral subspace the two-dimensional eigenspace associated to the the perturbed eigenvalue or the split-eigenvalues. By analytical spectral perturbation theory, cf. Appendix II, Proposition \ref{perspace}, the perturbed spectral subspace is $\varepsilon$-close to the unperturbed spectral subspace defined in (\ref{specun}), i.e.
\begin{equation}\label{eper}
\mathcal{E}_\varepsilon=\mathcal{E}+\mathcal{O}(\varepsilon)\cong \mathbb{C}^2.
\end{equation}
Denote the coordinates in the space $\mathcal{E}_\varepsilon$ as $z^\varepsilon=(z_1^\varepsilon,z_2^\varepsilon)$ and denote the coordintes in the space $\mathcal{E}$ again as $z=(z_1,z_2)$. It follows from Pythagoras theorem and equation (\ref{eper}) that $$|z^\varepsilon|=\sqrt{|z|^2+\mathcal{O}(\varepsilon^2)}.$$
Hence, using $\sqrt{1+x}=1+\mathcal{O}(x)$, for $|x|<1$, we have that 
\begin{equation}\label{zeps}
z^\varepsilon=z+\mathcal{O}(\varepsilon^2).
\end{equation}
In accordance with Theorem \ref{mainthmdifeps} and equation (\ref{mainthmdifeqeps}), we assume that the unique spectral submanifold $\mathcal{W}(\mathcal{E})$ can be parametrized by an analytic function $K_\varepsilon:\mathcal{E}_\varepsilon\times S_\omega\to \mathcal{H}$ given by
\begin{equation}\label{defKeps}
K_\varepsilon(\theta,z^\varepsilon)=\sum_{|n|=1}^{\infty} K_n(\theta,\varepsilon)(z^\varepsilon)^n,
\end{equation}
for $n=(n_1,n_2)$ and the coefficients themselves depend analytically upon $\varepsilon$:
\begin{equation}\label{expK}
 K_n(\theta,\varepsilon)=\sum_{m=0}^{\infty}K_n^m(\theta)\varepsilon^m,
\end{equation}
for all $\theta\in S_\omega$. The dynamics on the spectral submanifold can be described by a polynomial
\begin{equation}\label{defReps}
R_\varepsilon=R^0+\varepsilon R^1 +\mathcal{O}(\varepsilon^2),
\end{equation}
where $R_0$ is given by the polynomial obtained in the unforced example, cf. (\ref{EqR}), i.e.,
\begin{equation}
R^0(z)=\left(\begin{array}{c}\lambda_1 z+\frac{9\kappa\ri}{8\text{ Im }\lambda_1} z^2\overline{z}\\[0.2cm] \overline{\lambda}_1 \overline{z}-\frac{9\kappa\ri}{8\text{ Im }\lambda_1} z\overline{z}^2
\end{array}\right).
\end{equation}
From equation (\ref{zeps}), we know that
\begin{equation}
\begin{split}
&K_\varepsilon=\left.K_\varepsilon\right|_{z^\varepsilon\mapsto z}+\mathcal{O}(\varepsilon^2),\\[0.3cm]
&DK_\varepsilon=\left.\frac{\partial K_\varepsilon}{\partial z^\varepsilon}\right|_{z^\varepsilon\mapsto z} + \mathcal{O}(\varepsilon^2),\\[0.3cm]
&D_\theta K_\varepsilon=\left.\frac{\partial K_\varepsilon}{\partial\theta}\right|_{z^\varepsilon\mapsto z}+ \mathcal{O}(\varepsilon^2).
\end{split}
\end{equation}
This implies that equation (\ref{mainthmdifeqeps}) can be written as
\begin{equation}\label{equK}
\begin{split}
&A\cdot K_\varepsilon(z)+G(K_\varepsilon(z))=\\
&=DK_\varepsilon(z)\cdot R^0+\varepsilon DK_\varepsilon(z)\cdot R^1+\omega D_\theta K_\varepsilon(z)-\varepsilon\mathbf{h}(x,\theta)+\mathcal{O}(\varepsilon^2),
\end{split}
\end{equation}
where, again, $z\in\mathcal{E}$. Since $K_\varepsilon=K_0+\mathcal{O}(\varepsilon)$, for $K_0$ as defined in equation (\ref{defK}), it follows from equation (\ref{equK}) and the first-order tangency condition (\ref{Kcoef}) that
\begin{equation}\label{K_1^0}
K_{(1,0)}^0=\left(\begin{array}{c}
1\\0
\end{array}\right)\sin(x),\quad K_{(0,1)}^0=\left(\begin{array}{c}
0\\1
\end{array}\right)\sin(x).
\end{equation}
In particular, the parametrization $K_\varepsilon$ does not depend upon $\theta$ at zeroth order in $\varepsilon$. Equation (\ref{equK}) at order $z^0$ becomes
\begin{equation}\label{z^0}
0=\varepsilon\left(\begin{array}{c}
K_{(1,0)}\\ K_{(0,1)}
\end{array}\right)\cdot R^1_0-\frac{1}{\overline{\lambda}_1-\lambda_1}\left(\begin{array}{c}
-1\\1
\end{array}\right)\cos(\theta)+\mathcal{O}(\varepsilon^2).
\end{equation}
Using equation (\ref{K_1^0}), we find that equation (\ref{z^0}) has the solution
\begin{equation}
R^1_0(\theta)=\frac{1}{\overline{\lambda}_1-\lambda_1}\left(\begin{array}{c}
-1\\1
\end{array}\right)\cos(\theta).
\end{equation}
Due to the persistence of the non-resonant nature of the perturbed eigenvalues, we may choose the dynamics on the spectral submanifold as
\begin{equation}\label{dynReps}
R_\varepsilon(z,\theta)=\left(\begin{array}{c}\lambda_1 z+\frac{9\kappa\ri}{8\text{ Im }\lambda_1} z^2\overline{z}\\[0.2cm] \overline{\lambda}_1 \overline{z}-\frac{9\kappa\ri}{8\text{ Im }\lambda_1} z\overline{z}^2
\end{array}\right)+\frac{1}{\overline{\lambda}_1-\lambda_1}\left(\begin{array}{c}
-1\\1
\end{array}\right)\cos(\theta)\varepsilon +\mathcal{O}(\varepsilon^2).
\end{equation}
We will now solve equation (\ref{equK}) for $K_\varepsilon$ up to order two in $z$. To exemplify the general computations, we only compute the expansion at order $z_1^2$. At order $z_2^2$ and $z_1z_2$, similar computations can be carried out.\\
At order zero in $\varepsilon$, equation (\ref{equK}) is solved by the unperturbed parametrization $K^0$. At order one in $\varepsilon$, equation (\ref{equK}) becomes
\begin{equation}\label{oeps1}
A\cdot K^1+\left.\frac{d}{d\varepsilon}\right|_{\varepsilon=0}(G(K_\varepsilon))=DK^1\cdot R^0+DK^0\cdot R^1+\omega D_\theta K^1-\mathbf{h}
\end{equation}
At order $z_1$ we find:
\begin{equation}\label{epsz1}
A\cdot K_{(1,0)}^1=\lambda_1K_{(1,0)}^1
+\omega \dot{K}_{(1,0)}^1.
\end{equation}
Due to the non-resonance condition (\ref{resomega}), the only periodic solution of (\ref{epsz1}) is given by 
\begin{equation}
K_{(1,0)}^1=0.
\end{equation}
Similarly, we find that 
\begin{equation}
K_{(0,1)}^1=0.
\end{equation}
At order $z_1^2$, equation (\ref{oeps1}) becomes
\begin{equation}
\begin{split}
A\cdot K^1_{(2,0)}=2\lambda_1K^1_{(2,0)}+\frac{\cos(\theta)}{\overline{\lambda}_1-\lambda_1}(K^0_{(2,1)}-3K^0_{(3,0)})+\omega\dot{K}^1_{(2,0)}.
\end{split}
\end{equation}
Using equation (\ref{Kcoef21}), and expanding
\begin{equation}
K^1_{n}=\sum_{l=1}^{\infty}\sum_{m\in\mathbb{Z}}K^{1,l,m}_ne^{\ri m\theta}\sin(lx),
\end{equation}
for $n=(n_1,n_2)$, we find that
\begin{equation}
\begin{split}
&\sum_{m\in\mathbb{Z}}^{\infty}\left(
\omega \ri m +\left(\begin{matrix}
\lambda_1 & 0\\0 & 2\lambda_1-\overline{\lambda}_1
\end{matrix}\right)\right)
K_{(2,0)}^{1,1,m}e^{\ri m \theta}=\frac{9\kappa\cos(\theta)}{4(\overline{\lambda}_1-\lambda_1)}\left(\begin{array}{c}
\frac{1}{2\lambda_1}\\ \frac{\overline{\lambda}_1-\lambda_1}{2\lambda_1(\overline{\lambda}_1-3\lambda_1)}\end{array}
\right),\\[0.3cm]
&\sum_{m\in\mathbb{Z}}^{\infty}\left(
\omega \ri m +\left(\begin{matrix}
2\lambda_1 -\lambda_3& 0\\0 & 2\lambda_1-\overline{\lambda}_3
\end{matrix}\right)\right)
K_{(2,0)}^{1,3,m}e^{\ri m \theta}=-\frac{3\kappa\cos(\theta)}{4(\overline{\lambda}_3-\lambda_3)}\left(\begin{array}{c}
\frac{\overline{\lambda}_1-\lambda_1}{(\lambda_3-2\lambda_1-\overline{\lambda}_1)(\lambda_3-3\lambda_1)}\\ \frac{\lambda_1-\overline{\lambda}_1}{(\overline{\lambda}_1+2{\lambda}_1-\overline{\lambda}_3)(3\lambda_1-\overline{\lambda}_3)}\end{array}
\right),
\end{split}
\end{equation}
which can be solved as
\begin{equation}
\begin{split}
K_{(2,0)}^1(\theta)=&\frac{9\kappa}{8(\overline{\lambda}_1-\lambda_1)}\left(\begin{array}{c}
\frac{1}{2\lambda_1(\lambda_1+\ri \omega)}\\ \frac{\overline{\lambda}_1-\lambda_1}{2\lambda_1(\overline{\lambda}_1-3\lambda_1)(2\lambda_1-\overline{\lambda}_3+\ri\omega )}\end{array}
\right)e^{\ri \theta}\sin(x)\\
&+\frac{9\kappa}{8(\overline{\lambda}_1-\lambda_1)}\left(\begin{array}{c}
\frac{1}{2\lambda_1(\lambda_1-\ri \omega)}\\ \frac{\overline{\lambda}_1-\lambda_1}{2\lambda_1(\overline{\lambda}_1-3\lambda_1)(2\lambda_1-\overline{\lambda}_3-\ri\omega )}\end{array}
\right)e^{-\ri \theta}\sin(x)\\
&+\frac{3\kappa}{8(\lambda_3-\overline{\lambda}_3)}
\left(\begin{array}{c}
\frac{\overline{\lambda}_1-\lambda_1}{(\lambda_3-2\lambda_1-\overline{\lambda}_1)(\lambda_3-3\lambda_1)(2\lambda_1-\lambda_3+\ri\omega)}\\ \frac{\lambda_1-\overline{\lambda}_1}{(\overline{\lambda}_1+2{\lambda}_1-\overline{\lambda}_3)(3\lambda_1-\overline{\lambda}_3)(2\lambda_1-\overline{\lambda}_3+\ri\omega)}\end{array}
\right)e^{\ri\theta}\sin(3x)\\
&+\frac{3\kappa}{8(\lambda_3-\overline{\lambda}_3)}
\left(\begin{array}{c}
\frac{\overline{\lambda}_1-\lambda_1}{(\lambda_3-2\lambda_1-\overline{\lambda}_1)(\lambda_3-3\lambda_1)(2\lambda_1-\lambda_3-\ri\omega)}\\ \frac{\lambda_1-\overline{\lambda}_1}{(\overline{\lambda}_1+2{\lambda}_1-\overline{\lambda}_3)(3\lambda_1-\overline{\lambda}_3)(2\lambda_1-\overline{\lambda}_3-\ri\omega)}\end{array}
\right) e^{-\ri\theta}\sin(3x).
\end{split}
\end{equation}
As in the unperturbed example, we can rewrite the dynamics (\ref{dynReps}) in polar coordinates
$z(t)=r(t)e^{\ri\phi(t)}$:
\begin{equation}\label{dynRepspolar}
\begin{cases}
\dot{r}&=Ar-\frac{\varepsilon}{B}\cos(\omega t)\sin(\phi)\\
\dot{\phi}&=B+\displaystyle\frac{9\kappa}{8B}r^2-\frac{\varepsilon}{Br}\cos(\omega t)\cos(\phi),
\end{cases}
\end{equation}
where again $\lambda_1=A+B\ri$.\\
We note that the overall dynamics of system  (\ref{dynRepspolar}) are close to the dynamics of the unperturbed system (\ref{eqpolar}), as it is also depicted in Figure \ref{Perturb}.
\begin{figure}[h]
	\centering
	\includegraphics[scale=0.25,trim={10cm 0 0 0}]{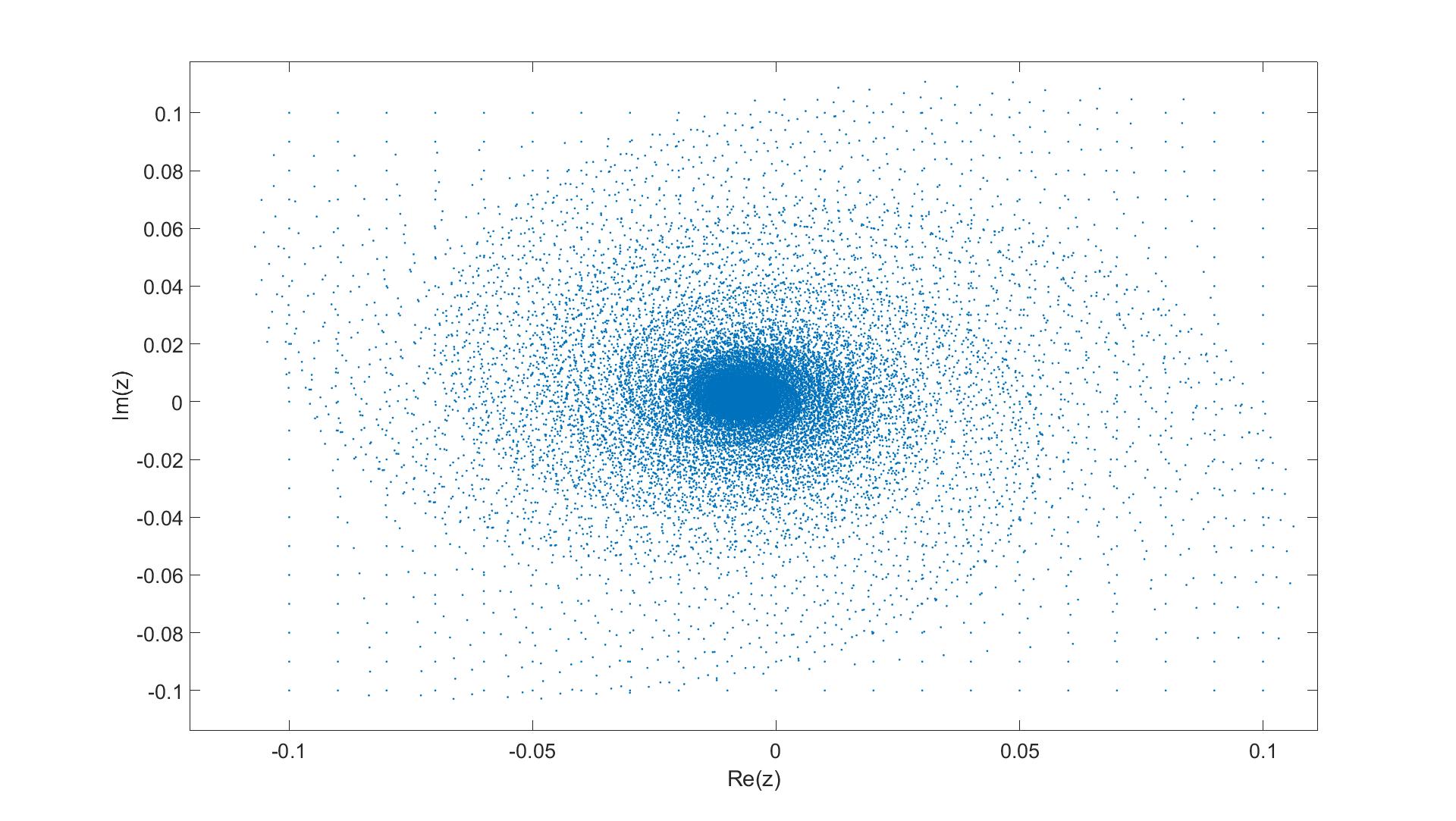}
	\caption{Sampled dynamics of system (\ref{dynRepspolar}) for the sampling time $T=2\pi$.}
	\label{Perturb}
\end{figure}

\subsection{Conclusion}
We have constructed a rigorous reduction of a nonlinear, damped-forced continuum beam model to a two-dimensional spectral submanifold (SSM). This SSM acts as a slow manifold both in the unforced and the time-periodically forced case, even though the underlying beam problem admits no clearly defined time scale differences embodied by small parameters. We have also pointed out why simpler beam models, such as Euler-Bernoulli beam theory, and other damping mechanisms, such as pure viscous damping or visco-elastic damping, do not qualify for the analysis given here.\\

For the Rayleigh beam analyzed here, we combined existence and uniqueness results, a careful analysis of the linearized spectrum and infinite-dimensional Poincaré map techniques with an abstract theorem for maps on Banach spaces by Cabr\'{e}, Fontich and de la Llave \cite{Cab2003}. We believe that our result is the first mathematically rigorous example of reducing an infinite-dimensional structural vibration problem exactly to a low-dimensional model. The analysis presented here also appears to be the first infinite-dimensional application of the parametrization method developed in \cite{Cab2003} for constructing invariant manifolds.\\

The analysis here justifies model reductions carried out on discretizations and Galerkin projections of the underlying PDE, but only with carefully chosen damping models. While viscous and viscoelastic damping enable model reduction without technical difficulties in finite-dimensional systems, they lead to either conceptual difficulties (lack of determinism for a non-stochastic linear vibration under viscoelastic damping) or technical difficulties (lack of distinguished modes for model reduction under viscous damping) at the level of the PDE.\\

The methods presented in this paper should be applicable to more general nonlinear PDEs with a time-reversible flow, such as non-homogeneous beam equations or nonlinear wave equations. It would be feasible to extend the analysis to PDEs that only admit a semi-group as a flow map, such as the heat equation or reaction-diffusion systems. This, however, would also require new abstract invariant manifold results that do not depend on the invertibility of the linearization.\\

\section{Appendix I: Well-posedness and global existence}
In this section, we give the technical background for the evolution equation (\ref{mainequ}) in detail. We will prove well-posedness, even differential dependence on initial data, for system (\ref{mainequ}) and show that the solutions exist globally in time, provided that the time-dependent forcing term $\varepsilon h(t,x)$ is sufficiently well-behaved. We first consider the case of no external forcing ($\varepsilon=0$), then turn to the case $\varepsilon>0$. In our presentation of these classical applications of semigroup theory, we will be following \cite{engel1999one} and \cite{pazy1992semigroups}. \\
We recall that a pair of functions $(u,v)$ in some appropriately chosen space is called \textit{classical solution} to (\ref{mainequ}) if the map $t\mapsto (u(t),v(t))$ is continuously differentiable and satisfies equation (\ref{mainequV}) pointwise. We will refer to the flow map for the linear part of the right-hand side of (\ref{mainequV}) as
\begin{equation}\label{linflow}
(u_0,v_0)\mapsto e^{A t}(u_0,v_0).
\end{equation} That is to say, (\ref{linflow}) solves the initial value problem
\begin{equation}\label{linpro}
\begin{cases}
U_t=AU\\U(0)=U_0,
\end{cases}
\end{equation}
for $U=(u,v)$ and $U_0=(u_0,v_0)$. In section \ref{linequ}, we will show that problem (\ref{linpro}) indeed has a solution that depends continuously on the initial data. Once this is established, we may proceed with the analysis of the full system (\ref{mainequV}).\\
Since solving an evolution equation in the classical sense may be too restrictive, we introduce a weaker form of solution. This also permits us to use archetypal arguments from nonlinear analysis on Banach spaces, in particular fixed point arguments. A function $t\mapsto U(t):=(u(t),v(t))$ is called a \textit{mild solution} if it satisfies equation (\ref{mainequV}) in the integral sense
\begin{equation}\label{mild}
U(t)=e^{tA}U_0+\int_0^te^{(t-s)A}G_{\varepsilon}(u(s),x)\,ds,\quad t\geq 0,
\end{equation} 
where $U_0=(u_0,v_0)$; the one parameter family of operators $t\mapsto e^{tA}$ again denotes the solution to the linearized equation (\ref{linpro}); and $G_{\varepsilon}(u,x)=(0,f(u)+\varepsilon h(x,t))$. A priori, it is not clear that equation (\ref{mainequV}) possesses a solution of any kind at all. We will show that a solution actually exists and depends in a differentiable fashion on the initial conditions due to the properties of the nonlinearity (\ref{AsA}).\\

\subsection{The linearized equation}\label{linequ}

In order to show that a mild solution to equation (\ref{mainequV}) exists, we first prove that the matrix of operators (\ref{defMA}) generates a $C^0$-semigroup of contractions. We recall that a one-parameter family of operators $T:\mathbb{R}^+\to \mathcal{L}(X)$ on some Banach space $X$ is called a \textit{strongly continuous semigroup}, or for short $C^0$-\textit{semigroup}, if
\begin{itemize}
	\item $T(0)=Id_X$.
	\item $T(s+t)=T(s)T(t)$, for all $s,t\geq 0$.
	\item For all $x\in X : \|T(t)x-x\|\to 0$ as $t\to 0$.
\end{itemize}
A $C^0$-semigroup $\{T(t)\}_{t\in\mathbb{R}^+}$ is called \textit{contractive} if in addition $$\|T(t)\|\leq 1$$
for all $t\geq 0$. Here, we have denoted the set of all bounded linear operators on $X$ by $\mathcal{L}(X)$. A linear operator generates a $C^0$-semigroup if and only if the underlying evolution equation is well-posed, cf. \cite{engel1999one} and \cite{pazy1992semigroups}.\\
For the forthcoming analysis of equation (\ref{mainequV}), we choose the Hilbert space 
\begin{equation}\label{defMH}
\mathcal{H}:=H^1_0(0,\pi)\times L^2(0,\pi).
\end{equation}
We introduce the following inner product on the space $H^1_0(0,L)$, depending on three parameters $\alpha, \gamma,\mu\geq0$ :
\begin{equation}\label{inprodalpha}
\langle f,g\rangle_{\alpha,\gamma,\mu}:=\sum_{n=1}^{\infty}\left(\frac{\alpha n^4+\gamma}{1+\mu n^2}\right)\hat{u}_n\hat{v}_n^{*}.
\end{equation}
For $\alpha, \gamma, \mu>0$, the norm induced by the inner product (\ref{inprodalpha}) is equivalent to the standard norm (\ref{defHs}) on $H^1(0,\pi)$, as can be seen by direct comparison. 
Next, for fixed $\alpha, \gamma, \mu>0$, we endow the space $\mathcal{H}$ with the inner product
\begin{equation}\label{defIMH}
\langle U_1,U_2\rangle_{\mathcal{H}}:=\langle u_1,u_2\rangle_{\alpha,\gamma,\mu}+\langle v_1,v_2\rangle_{L^2},
\end{equation}
where $U_1=(u_1,v_1), U_2=(u_2,v_2)\in\mathcal{H}$.\\
The domain of definition of the matrix of operators (\ref{defMA}) is the dense subspace $H^2(0,\pi)\times L^2(0,\pi)\cap H^1_0(0,\pi)\times L^2_0(0,\pi)\subset\mathcal{H}$, denoted by $\mathcal{D}(A)$.
\begin{theorem}\label{semi}
	The matrix of operators $A$ generates a $C^0$-semigroup of contractions.
\end{theorem}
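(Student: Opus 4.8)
The plan is to verify the hypotheses of the Lumer--Phillips theorem (see \cite{engel1999one,pazy1992semigroups}): a densely defined operator on a Hilbert space generates a $C^0$-semigroup of contractions if and only if it is dissipative and $\Ran(\lambda I - A)=\mathcal{H}$ for some $\lambda>0$. Density of $\mathcal{D}(A)$ in $\mathcal{H}$ has already been recorded above, so two things remain to check: dissipativity with respect to the energy inner product (\ref{defIMH}), and the range (maximality) condition.

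For dissipativity I would take $U=(u,v)\in\mathcal{D}(A)$, expand both components in the Fourier-sine basis, $u=\sum_{n\geq1}\hat u_n\sin(nx)$ and $v=\sum_{n\geq1}\hat v_n\sin(nx)$, and compute $\langle AU,U\rangle_{\mathcal{H}}$ directly. The key observation is that the weights $\frac{\alpha n^4+\gamma}{1+\mu n^2}$ appearing in (\ref{inprodalpha}) are exactly the Fourier symbol of the stiffness operator $\mathcal{M}^{-1}(\alpha\partial^4+\gamma)$; consequently the off-diagonal (conservative) part of $A$ is skew-symmetric for $\langle\cdot,\cdot\rangle_{\mathcal{H}}$, and the cross-term from $\langle v,u\rangle_{\alpha,\gamma,\mu}$ combines with the stiffness contribution into a purely imaginary quantity. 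Only the damping term survives in the real part, leaving
\begin{equation*}
\text{Re}\,\langle AU,U\rangle_{\mathcal{H}}=-\sum_{n=1}^{\infty}\frac{\beta n^2+\delta}{1+\mu n^2}\,|\hat v_n|^2\leq 0,
\end{equation*}
since $\beta,\delta,\mu\geq0$. This is precisely the mechanical energy balance -- the rate of change of energy equals minus the non-negative damping power -- and it shows that $A$ is dissipative.

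For the range condition I would fix $\lambda>0$ and $(f,g)\in\mathcal{H}$ and solve $(\lambda I-A)(u,v)=(f,g)$. The first row gives $v=\lambda u-f$; substituting into the second row eliminates $v$ and reduces the system to the single scalar equation
\begin{equation*}
\Big[\lambda^2+\mathcal{M}^{-1}(\alpha\partial^4+\gamma)-\lambda\,\mathcal{M}^{-1}(\beta\partial^2-\delta)\Big]u=g+\lambda f-\mathcal{M}^{-1}(\beta\partial^2-\delta)f.
\end{equation*}
This operator is diagonal in the sine basis with symbol
\begin{equation*}
\hat P_\lambda(n)=\lambda^2+\frac{\alpha n^4+\gamma+\lambda\beta n^2+\lambda\delta}{1+\mu n^2},
\end{equation*}
which satisfies $\hat P_\lambda(n)\geq\lambda^2>0$ for every $n$ and grows like $(\alpha/\mu)\,n^2$. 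The right-hand side lies in $L^2(0,\pi)$ because the multiplier $\frac{\beta n^2+\delta}{1+\mu n^2}$ is bounded, so dividing by $\hat P_\lambda(n)$ yields a unique solution $u$; the quadratic growth of $\hat P_\lambda$ gains two derivatives, whence $u\in H^2\cap H^1_0(0,\pi)$ and $v=\lambda u-f\in H^1_0(0,\pi)$. A short check then confirms $(u,v)\in\mathcal{D}(A)$, so $\lambda I-A$ is onto and $\Ran(\lambda I-A)=\mathcal{H}$. Both hypotheses of Lumer--Phillips being met, the claim follows.

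The only genuinely delicate step is this range condition: one must confirm both that the reduced symbol $\hat P_\lambda(n)$ stays bounded away from zero -- which $\lambda>0$ guarantees \emph{independently} of the parameter inequalities in Assumption \ref{param} -- and that the recovered pair $(u,v)$ has exactly the regularity and boundary behavior demanded by $\mathcal{D}(A)$. The dissipativity estimate, by contrast, is immediate once the energy inner product (\ref{inprodalpha}) is recognized as the choice that diagonalizes the conservative part of $A$.
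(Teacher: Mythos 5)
Your proof is correct and follows the same overall strategy as the paper: both verify the Lumer--Phillips hypotheses, and the dissipativity computation is essentially identical (expand in the sine basis, observe that the weights $\tfrac{\alpha n^4+\gamma}{1+\mu n^2}$ in (\ref{inprodalpha}) make the conservative part skew, and read off $-\sum_n\tfrac{\beta n^2+\delta}{1+\mu n^2}|\hat v_n|^2\leq 0$). Where you diverge is the maximality condition. The paper does not solve the resolvent equation; it computes the point spectrum $\{\lambda_n^{\pm}\}$ from the $2\times2$ characteristic polynomial and then asserts that, the spectrum being a countable set of isolated eigenvalues, $\lambda_0 I-A$ is surjective for any $\lambda_0\notin\sigma(A)$. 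You instead fix $\lambda>0$, eliminate $v=\lambda u-f$, and invert the resulting scalar operator mode by mode via the symbol $\hat P_\lambda(n)=\lambda^2+\tfrac{\alpha n^4+\gamma+\lambda\beta n^2+\lambda\delta}{1+\mu n^2}\geq\lambda^2$, checking that the gain of two derivatives places $(u,v)$ back in $\mathcal{D}(A)$. Your route is the more self-contained of the two: for an unbounded, non-self-adjoint operator, ``the spectrum consists only of isolated eigenvalues'' is precisely what an explicit resolvent construction establishes, and your computation supplies the surjectivity directly rather than inferring it. The paper's route, in exchange, delivers the explicit eigenvalue formula (\ref{eigval}) as a by-product, which it needs later for the spectral analysis. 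Two minor points: your use of $\mathrm{Re}\,\langle AU,U\rangle_{\mathcal{H}}$ is the correct general form of dissipativity on a complex Hilbert space and is consistent with (\ref{dissHi}) once one notes the cross-terms are purely imaginary; and your remark that the range condition holds independently of Assumption \ref{param} is accurate and worth making, since those inequalities only control the location of the eigenvalues, not generation of the semigroup.
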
	
\begin{proof}
Let us show that $A$ is dissipative for $U=(u,v)\in\mathcal{D}(A)$. First, we expand the functions $u,v$ in Fourier-Sine series as
$$u(x)=\sum_{n=1}^{\infty}\hat{u}_n \sin(nx),\quad v(x)=\sum_{n=1}^{\infty}\hat{v}_n \sin(nx).$$ Then, using the inner product on $\mathcal{H}$ defined in (\ref{defIMH}), Parseval's formula (\ref{ParsI}) and the characterization of dissipativity in (\ref{dissHi}) we can write
\begin{equation*}
\begin{split}
\langle A U,U\rangle_{\mathcal{H}}&=\langle u,v\rangle_{\alpha,\gamma,\mu}+\langle v,\mathcal{M}^{-1}(-\alpha u_{xxxx}-\gamma u+\beta v_{xx}-\delta v)\rangle_{L^2}\\
&= \sum_{n=1}^{\infty}\left[\frac{(\alpha n^4+\gamma)\hat{u}_n\hat{v}_n^*-(\alpha n^4+\gamma)\hat{u}_n\hat{v}_n^*-(\beta n^2+\delta)|\hat{v}_n|^2}{1+\mu n^2}\right]\\
&=-\sum_{n=1}^{\infty}\frac{(n^2\beta+1)|\hat{v}_n|^2}{1+\mu n^2}\leq 0.
\end{split}
\end{equation*}
This proves dissipativity for the operator $A$.\\
Next, we calculate the spectrum of $A$ in $\mathcal{H}$. The operator $A-\lambda I$ is not invertible if and only if for some $n\in\mathbb{N}^{+}$ we have
\begin{equation}
\det\left(\begin{matrix}
-\lambda & 1\\ -\frac{\alpha n^4+\gamma}{1+\mu n^2} & -\frac{(\beta n^2+\delta)}{1+\mu n^2}-\lambda\end{matrix}\right)=0.
\end{equation}
Computing the zeros of the corresponding characteristic polynomial, we find that
\begin{equation}\label{eigval}
\lambda^{\pm}_n= -\frac{\beta n^2+\delta}{2+2\mu n^2}\pm \sqrt{\left(\frac{\beta n^2+\delta}{2+2\mu n^2}\right)^2-\frac{\alpha n^4+\gamma}{1+\mu n^2}},
\end{equation}
and hence $\sigma(A)=\left\{\lambda^{\pm}_n\right\}_{n\in\mathbb{N}^{+}}.$ Since the spectrum is a countable set of isolated eigenvalues, $\lambda_0 I-A$ is surjective whenever $\lambda_0\notin\sigma(A)$. Applying the Lummer-Phillips theorem, we conclude that the operator $A$ generates a $C^0$-semigroup of contractions, cf. \cite{engel1999one}.
\end{proof}
Expanding the initial conditions as Fourier series
$$u_0(x)=\sum_{n=1}^{\infty}\hat{u}_n^0 \sin(nx),\quad v_0(x)=\sum_{n=1}^{\infty}\hat{v}_n^0 \sin(nx),$$ and assuming that $\lambda_n^{+}\neq \pm\lambda_n^{-}$ for all $n\in\mathbb{N}^{+}$, we may write the semi-flow map generated by $A$ explicitly as
\begin{equation}\label{explicit}
\begin{split}
e^{tA} \left(\begin{array}{c}u_0\\v_0
\end{array}\right)=&\left(\sum_{n=1}^{\infty}\frac{(\lambda_n^-e^{\lambda_n^{+}t}-\lambda_n^+e^{\lambda_n^-t})\hat{u}_n^0+(e^{\lambda_n^-t}-e^{\lambda_n^{+}t})\hat{v}_n^0}{\lambda_n^--\lambda_n^+}\sin(nx),\right.\\[0.33cm]
&\left.
\sum_{n=1}^{\infty}\frac{\lambda_n^+\lambda_n^-(e^{\lambda_n^+t}-e^{\lambda_n^-t})\hat{u}_n^0+(\lambda_n^-e^{\lambda_n^-t}-\lambda_n^+e^{\lambda_n^{+}t})\hat{v}_n^0}{\lambda_n^--\lambda_n^+}\sin(nx)\right).
\end{split}
\end{equation}

\subsection{The full equation}
Now that we know that the matrix of operators (\ref{defMA}) generates a $C^0$-semigroup, we can proceed with the analysis of the full nonlinear and non-autonomous system (\ref{mainequV}).
First, we use the following theorem to show that the nonlinear equation (\ref{mainequV}) is well posed and the flow depends on the initial data in a differentiable fashion. 

\begin{theorem}\label{ExUn}
Let $A$ generate a $C^0$-semigroup on $\mathcal{H}$. If  $G:[0,T]\times\mathcal{H}\to\mathcal{H}$ continuous in $t$ on the interval $[0,T]$ and uniformly Lipschitz continuous on $\mathcal{H}$. Then for any $U_0\in\mathcal{D}(A)$, the initial value problem (\ref{mainequV}) has a unique mild solution.\\
If the forcing $G$ is even continuously differentiable from $[0,T]\times X$ into $X$, then the mild solution with $U_0\in\mathcal{D}(A)$ is also a classical solution.
\end{theorem}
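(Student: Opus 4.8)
The plan is to recognize this as the standard existence--uniqueness--regularity theorem from semigroup theory (see \cite{pazy1992semigroups} and \cite{engel1999one}) and to prove it by a contraction-mapping argument for the mild solution, followed by a bootstrap for the classical regularity. Throughout, I would use that a $C^0$-semigroup is uniformly bounded on compact time intervals, so that $\norm{e^{tA}}_{\mathcal{L}(\mathcal{H})}\leq M$ for all $t\in[0,T]$ and some $M\geq 1$; in the present setting $A$ generates contractions by Theorem \ref{semi}, so one may take $M=1$.

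For existence and uniqueness of a mild solution, I would first observe that $U$ is a mild solution on a subinterval $[0,T_1]$ precisely when it is a fixed point of the map
\begin{equation*}
(\Phi U)(t)=e^{tA}U_0+\int_0^te^{(t-s)A}G(s,U(s))\,ds
\end{equation*}
acting on the Banach space $C([0,T_1],\mathcal{H})$ with the supremum norm. Using the uniform Lipschitz bound $\norm{G(s,U)-G(s,V)}_{\mathcal{H}}\leq L\norm{U-V}_{\mathcal{H}}$ together with $\norm{e^{(t-s)A}}\leq M$, the contributions of $e^{tA}U_0$ cancel and I would estimate
\begin{equation*}
\norm{(\Phi U)(t)-(\Phi V)(t)}_{\mathcal{H}}\leq ML\,T_1\sup_{s\in[0,T_1]}\norm{U(s)-V(s)}_{\mathcal{H}},
\end{equation*}
so that $\Phi$ is a contraction as soon as $T_1<1/(ML)$. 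The Banach fixed-point theorem then yields a unique mild solution on $[0,T_1]$. Since $L$ is uniform in the base point, the admissible step size $T_1$ does not shrink along the iteration, and the solution extends to all of $[0,T]$ in finitely many steps, giving existence and uniqueness of the mild solution on the full interval.

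For the regularity statement, I would assume $G\in C^1$ and $U_0\in\mathcal{D}(A)$ and show that the mild solution is in fact classical. The crucial first step is to establish that $t\mapsto U(t)$ is Lipschitz continuous on $[0,T]$: because $U_0\in\mathcal{D}(A)$, the term $t\mapsto e^{tA}U_0$ is continuously differentiable with derivative $e^{tA}AU_0$ and hence Lipschitz, while the Duhamel integral in (\ref{mild}) is Lipschitz by the boundedness of the semigroup and of $G$ along the (continuous, hence bounded) trajectory. Once $U$ is known to be Lipschitz in time, the forcing $s\mapsto G(s,U(s))$ inherits Lipschitz continuity, and I would then invoke the regularity theory for the inhomogeneous linear problem $\dot U=AU+g(t)$ with Lipschitz $g$ and $U_0\in\mathcal{D}(A)$: under these hypotheses the Duhamel formula is differentiable, satisfies $U(t)\in\mathcal{D}(A)$ for every $t$, and solves (\ref{mainequV}) pointwise, i.e. $U$ is a classical solution.

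I expect the regularity part to be the main obstacle. Constructing the fixed point is routine, but proving that the mild solution actually lies in $\mathcal{D}(A)$ for all $t$ and is differentiable requires more delicate difference-quotient estimates on the convolution term; the essential inputs are $U_0\in\mathcal{D}(A)$, which controls the semigroup term, and the $C^1$, hence locally Lipschitz, nature of $G$, which controls the integral term. This is precisely where the Lipschitz-in-time continuity of $U$ serves as the bridge from the mild formulation back to the pointwise differential equation (\ref{mainequV}).
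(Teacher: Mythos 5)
The paper does not prove this theorem at all: it is quoted verbatim as a standard result and the text immediately defers to \cite{pazy1992semigroups} for ``a proof based on fixed-point arguments.'' Your proposal is exactly that textbook proof, so in substance you are following the same route as the cited reference, and the contraction-mapping half is complete and correct (including the observation that the uniform Lipschitz constant prevents the continuation step from shrinking).

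Two points in the regularity half are stated more casually than they can be justified. First, the claim that the Duhamel integral is Lipschitz in $t$ ``by the boundedness of the semigroup and of $G$ along the trajectory'' is not right as written: for $x\notin\mathcal{D}(A)$ one only has $\|(e^{hA}-I)x\|=o(1)$, not $O(h)$, so boundedness of the integrand does not give a Lipschitz estimate on $t\mapsto\int_0^t e^{(t-s)A}G(s,U(s))\,ds$. The correct argument changes variables to $\int_0^t e^{sA}G(t-s,U(t-s))\,ds$, uses the Lipschitz continuity of $G$ in \emph{both} arguments, and closes the estimate $\|U(t+h)-U(t)\|\lesssim h+\int_0^t\|U(s+h)-U(s)\|\,ds$ with Gronwall's inequality. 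Second, the step ``inhomogeneous linear problem with Lipschitz $g$ and $U_0\in\mathcal{D}(A)$ yields a classical solution'' is false for a general $C^0$-semigroup on a general Banach space (mere Lipschitz, or even H\"older, continuity of the forcing suffices only for analytic semigroups, which the group $e^{tA}$ here is not). It does hold when the underlying space is reflexive, because a Lipschitz curve into a reflexive space is differentiable a.e.\ with bounded derivative; since $\mathcal{H}$ is a Hilbert space this rescues your argument, but you must invoke reflexivity explicitly. The standard proof under the stated $C^1$ hypothesis avoids this entirely by showing directly that $U$ is $C^1$: one solves the variational integral equation for the candidate derivative $w(t)=e^{tA}\bigl(AU_0+G(0,U_0)\bigr)+\int_0^te^{(t-s)A}\bigl[G_t(s,U(s))+G_U(s,U(s))w(s)\bigr]ds$ and proves via Gronwall that the difference quotients of $U$ converge to $w$; then $s\mapsto G(s,U(s))$ is $C^1$ and the classical linear theory applies in any Banach space.
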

A proof based on fixed-point arguments can be found in \cite{pazy1992semigroups}.\\
 We now have to show that the function $f:\mathcal{D}(A)\to\mathcal{H}$ is continuously differentiable as a map on Hilbert spaces. To this end, we will need the following result.
\begin{lemma}\label{Sobf}
Let $f:\mathbb{R}\to\mathbb{R}$ satisfy the assumptions in \ref{AsA} and let $p\geq m$. Then the nonlinear operator
\begin{equation*}
\begin{split}
f: &L^p(0,\pi)\to L^{\frac{p}{m}}(0,\pi),\\
 &u\mapsto f(u),
\end{split}
\end{equation*}
between Hilbert spaces is well-defined and $r$-times continuously differentiable in the Fr\'{e}chet sense, with derivative
\begin{equation}
f'(u)\cdot v=f'(u)v
\end{equation}
for all $v\in L^p(0,\pi)$.
\end{lemma}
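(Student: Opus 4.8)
The plan is to treat $f$ as a Nemytskii (superposition) operator and to verify well-definedness, boundedness of the candidate derivative, and the Fréchet limit separately, reducing the last two to Hölder estimates together with the continuity of the superposition operator generated by $f'$.

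First I would establish that $f:L^p\to L^{p/m}$ is well defined. Since $|f(x)|\lesssim|x|^m$, we have $|f(u(x))|^{p/m}\lesssim|u(x)|^p$ pointwise, so integration gives $\|f(u)\|_{L^{p/m}}\lesssim\|u\|_{L^p}^m<\infty$; here the hypothesis $p\geq m$ is exactly what guarantees $p/m\geq 1$. Next I would show that the proposed derivative, the multiplication operator $v\mapsto f'(u)v$, is a bounded linear map $L^p\to L^{p/m}$. Using $|f'(x)|\lesssim|x|^{m-1}$ and Hölder's inequality with the conjugate exponents $\tfrac{m}{m-1}$ and $m$ (so that $|u|^{m-1}$ is integrated in $L^{p/(m-1)}$ and $v$ in $L^p$), one obtains $\|f'(u)v\|_{L^{p/m}}\lesssim\|u\|_{L^p}^{m-1}\|v\|_{L^p}$, which in particular bounds the operator norm of the candidate $Df(u)$ by $\|u\|_{L^p}^{m-1}$.

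The core is the Fréchet limit. Writing the first-order remainder through the fundamental theorem of calculus,
\[
f(u+v)-f(u)-f'(u)v=\int_0^1\bigl(f'(u+sv)-f'(u)\bigr)v\,ds,
\]
I would estimate its $L^{p/m}$ norm by the same Hölder split (target $p/m$, factors in $L^p$ and $L^{p/(m-1)}$) followed by Minkowski's integral inequality, obtaining
\[
\frac{\|f(u+v)-f(u)-f'(u)v\|_{L^{p/m}}}{\|v\|_{L^p}}\leq\int_0^1\bigl\|f'(u+sv)-f'(u)\bigr\|_{L^{p/(m-1)}}\,ds.
\]
Since $\|sv\|_{L^p}\leq\|v\|_{L^p}$ for $s\in[0,1]$, the right-hand side tends to $0$ as $\|v\|_{L^p}\to0$, provided the superposition operator $u\mapsto f'(u)$ is continuous from $L^p$ into $L^{p/(m-1)}$ (the growth bound makes this target exponent admissible as $p\geq m>m-1$). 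Continuity of $u\mapsto Df(u)$ follows from the same Hölder bound, so $f\in C^1$.

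The hard part will be this continuity of the Nemytskii operator generated by $f'$, since $L^p$-convergence does not yield pointwise convergence directly. I would argue by contradiction using the standard subsequence principle: if $u_k\to u$ in $L^p$ but $\|f'(u_k)-f'(u)\|_{L^{p/(m-1)}}\not\to0$, one passes to a subsequence along which $u_k\to u$ almost everywhere and is dominated by a fixed $L^p$ function; continuity of $f'$ gives $f'(u_k)\to f'(u)$ almost everywhere, the bound $|f'(x)|\lesssim|x|^{m-1}$ supplies an $L^{p/(m-1)}$ majorant, and dominated convergence forces $L^{p/(m-1)}$-convergence, a contradiction. Finally, the claim $f\in C^r$ follows by induction: the map $u\mapsto Df(u)$ is again of superposition type (multiplication by $f'(u)$), and differentiating it reproduces the same structure with $f'$ in place of $f$ and the growth exponent lowered by one. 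Each step uses the pointwise bound $|f^{(k)}(x)|\lesssim|x|^{m-k}$ and an adjustment of the target exponent, but the estimates are identical in form to the $C^1$ case; in the polynomial (indeed analytic, $r=a$) setting relevant to the examples, these derivative bounds hold trivially.
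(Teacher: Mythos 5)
First, note that the paper does not actually prove Lemma \ref{Sobf}: it defers entirely to the reference \cite{cazenave1998introduction}. Your Nemytskii (superposition) operator argument is the standard proof behind that citation, and the $C^1$ part of it is essentially complete and correct: the well-definedness estimate $\|f(u)\|_{L^{p/m}}\lesssim\|u\|_{L^p}^m$, the H\"older split $\tfrac{m}{p}=\tfrac{m-1}{p}+\tfrac{1}{p}$ giving $\|f'(u)v\|_{L^{p/m}}\leq\|f'(u)\|_{L^{p/(m-1)}}\|v\|_{L^p}$, and the subsequence/dominated-convergence proof that $u\mapsto f'(u)$ is continuous from $L^p$ into $L^{p/(m-1)}$ are all sound. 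One small point to tighten: continuity of that superposition operator at $u$ gives $\|f'(u+sv_k)-f'(u)\|_{L^{p/(m-1)}}\to 0$ for each \emph{fixed} $s$, not uniformly in $s$; to pass the limit through $\int_0^1 ds$ you should add the uniform bound $\|f'(u+sv)-f'(u)\|_{L^{p/(m-1)}}\lesssim(\|u\|_{L^p}+\|v\|_{L^p})^{m-1}+\|u\|_{L^p}^{m-1}$ and invoke dominated convergence in the $s$-variable. That is a one-line repair.

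The genuine gap is the claim of $C^r$ regularity for $r\geq 2$. Assumption \ref{AsA} controls only $f$ and $f'$; the induction you sketch requires $|f^{(k)}(x)|\lesssim|x|^{m-k}$ for $2\leq k\leq r$, which is not among the stated hypotheses, so as written your argument establishes $C^1$ only. Moreover, even granting those extra bounds, the H\"older bookkeeping into the fixed target space $L^{p/m}$ breaks down once $k>m$: the $k$-th derivative of a superposition operator is the $k$-linear form $(v_1,\dots,v_k)\mapsto f^{(k)}(u)\,v_1\cdots v_k$, and the product $v_1\cdots v_k$ is in general only in $L^{p/k}$, whose exponent is too small for $k>m$; multiplication by $f^{(k)}(u)$ cannot improve integrability, so boundedness into $L^{p/m}$ would force $f^{(k)}(u)\in L^{p/(m-k)}$, which is meaningless for $k>m$ unless $f^{(k)}\equiv 0$ (the polynomial case used in the paper's examples). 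In short: your proof is correct and self-contained for the $C^1$ statement, which is all that the well-posedness argument of Theorem \ref{ExUn} actually uses, but for the full $C^r$ (and $r=a$) claim you would need either to add the higher-order growth hypotheses and restrict the range of $r$, or to do as the paper does and lean on \cite{cazenave1998introduction}.
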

A proof (under weaker assumptions) can be found in  \cite{cazenave1998introduction}. Now we have to relate the $L^p$-space in the previous lemma to the Sobolev space $H^s$. This is done in the following
\begin{lemma}\label{LiHs}
Let $u\in H^1(0,\pi)$ Then the inequality $$\|u\|_{L^\infty(0,\pi)}\leq C\|u\|_{H^1(0,\pi)}$$
holds true for some $C>0$.
\end{lemma}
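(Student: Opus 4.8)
The plan is to exploit the Fourier-sine representation of elements of $H^1_0(0,\pi)$ that is already built into the paper's framework, reducing the estimate to a single application of the Cauchy--Schwarz inequality for sequences. First I would write $u\in H^1_0(0,\pi)$ in its sine series $u(x)=\sum_{n=1}^\infty \hat u_n\sin(nx)$ and bound the value at each point crudely by the absolutely summed coefficients,
\begin{equation*}
\abs{u(x)}\leq \sum_{n=1}^\infty \abs{\hat u_n},
\end{equation*}
uniformly in $x\in(0,\pi)$ since $\abs{\sin(nx)}\leq 1$.

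Next I would insert the weight $\sqrt{1+n^2}$ and split it off by Cauchy--Schwarz,
\begin{equation*}
\sum_{n=1}^\infty \abs{\hat u_n}=\sum_{n=1}^\infty\frac{1}{\sqrt{1+n^2}}\,\sqrt{1+n^2}\,\abs{\hat u_n}\leq\left(\sum_{n=1}^\infty\frac{1}{1+n^2}\right)^{1/2}\left(\sum_{n=1}^\infty(1+n^2)\abs{\hat u_n}^2\right)^{1/2}.
\end{equation*}
The first factor is a convergent numerical series, while the second is, up to the fixed normalization constants in (\ref{defHs}) and the inner product, exactly $\norm{u}_{H^1}$, since both $\sum\abs{\hat u_n}^2=\norm{u}_{L^2}^2$ and $\sum n^2\abs{\hat u_n}^2=\norm{u'}_{L^2}^2$ appear. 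Taking the supremum over $x$ and absorbing all constants into a single $C>0$ then yields $\norm{u}_{L^\infty(0,\pi)}\leq C\norm{u}_{H^1(0,\pi)}$.

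The only points requiring care are qualitative rather than computational. The convergence of $\sum(1+n^2)\abs{\hat u_n}^2$ guarantees, via the same Cauchy--Schwarz bound, that the sine series converges absolutely and uniformly, so it defines a continuous representative of $u$ for which the pointwise estimate is meaningful as a bound on the $L^\infty$-norm; I would record this to identify $\norm{u}_{L^\infty}$ with the supremum of that representative. Independently of the Fourier machinery, one can also argue by density: for $u\in C^1([0,\pi])$ the identity $u(x)^2=u(y)^2+2\int_y^x u\,u'\,dt$ combined with Cauchy--Schwarz and averaging the free endpoint $y$ over $(0,\pi)$ gives $\abs{u(x)}^2\lesssim \norm{u}_{L^2}^2+\norm{u}_{L^2}\norm{u'}_{L^2}\lesssim\norm{u}_{H^1}^2$ after Young's inequality, and one then passes to general $u\in H^1$ using the density of $C^1$ in $H^1$. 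I expect no genuine obstacle here, as this is the one-dimensional Sobolev embedding $H^1\hookrightarrow L^\infty$; the main thing to keep straight is the bookkeeping of the $\tfrac1\pi$ and $(2n)$ normalizations in the paper's definitions, so that the resulting constant $C$ is manifestly finite.
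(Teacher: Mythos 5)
Your proof is correct and follows essentially the same route as the paper: expand $u$ in its Fourier-sine series, bound $|u(x)|$ by $\sum_n|\hat u_n|$, insert the weight $\sqrt{1+n^2}$, and apply the Cauchy--Schwarz inequality, with the first factor being a convergent numerical series and the second the $H^1$-norm. The additional remarks on uniform convergence of the series and the alternative density argument are fine but not needed beyond what the paper records.
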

\begin{proof}
Expanding $u$ as a Fourier-sine series, for any $x\in [0,\pi]$, we have the estimate  $$|u(x)|\leq\sum_{n=1}^{\infty}|\hat{u}_n|\leq\left(\sum_{n=1}^{\infty}\frac{1}{1+\tilde{n}^2}\right)^{\frac{1}{2}}\left(\sum_{n=1}^{\infty}(1+\tilde{n}^2)|\hat{u}_n|^2\right)^{\frac{1}{2}}=C\|u\|_{H^1(0,\pi)},$$
where we have used the Cauchy-Schwartz inequality and the convergence of the series $\sum_{n=1}^{\infty}\frac{1}{1+\tilde{n}^2}$. Taking the supremum of the left-hand side of this inequality then proves the claim.
\end{proof}
We can now deduce that the map $f:H^s(0,\pi)\to L^2(0,\pi), u\mapsto f(u)$, is differentiable for all $s\geq 1$. Indeed, if we set $p=2m$ in Lemma \ref{Sobf} and recall that $L^\infty(0,\pi)\subset L^p(0,\pi)$ for all $p\geq 1$, it follows from Lemma \ref{LiHs}, that $H^s(0,\pi)\subset L^{2m}(0,\pi)$ for all $s\geq 1$ and all $m>1$. By Theorem \ref{ExUn}, the initial value problem (\ref{mainequV}) has a unique, classical solution.\\
Theorem \ref{ExUn} guarantees existence and uniqueness of a local solution to equation (\ref{mainequV}). We now show that a solution to equation (\ref{mainequV}) exists for all times $t\geq 0$, provided that the external forcing is sufficiently well-behaved.
\begin{proposition}\label{global}
	Assume that the external forcing in (\ref{mainequ}) satisfies Assumption \ref{Asforce}.
	Then any solution to equation (\ref{mainequ}) exists for all times.
\end{proposition}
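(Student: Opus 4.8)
The plan is to establish global existence through the standard continuation principle. Since Theorem~\ref{ExUn} already supplies a unique local (classical) solution for initial data in $\mathcal{D}(A)$, it suffices to produce an a priori bound showing that the $\mathcal{H}$-norm of the solution cannot blow up in finite time. If the maximal time of existence $T_{\max}$ were finite, the norm $\|(u(t),u_t(t))\|_{\mathcal{H}}$ would necessarily diverge as $t\to T_{\max}$; ruling this out forces $T_{\max}=\infty$.

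First I would introduce the energy functional
\begin{equation*}
E(t)=\frac{1}{2}\int_0^\pi\left(u_t^2+\mu u_{tx}^2+\alpha u_{xx}^2+\gamma u^2\right)dx-\int_0^\pi F(u)\,dx,
\end{equation*}
where $F(u)=\int_0^u f(\xi)\,d\xi$. The crucial structural observation is that, by the sign condition (\ref{below}) in Assumption~\ref{AsA}, one has $F(u)\le 0$ pointwise, so $-\int_0^\pi F(u)\,dx\ge 0$ and $E(t)$ dominates its full quadratic part. Since the norm induced by $\langle\cdot,\cdot\rangle_{\alpha,\gamma,\mu}$ (cf.\ (\ref{inprodalpha})) is equivalent to the $H^1$-norm, that quadratic part controls $\|(u,u_t)\|_{\mathcal{H}}^2$; concretely, $\|(u(t),u_t(t))\|_{\mathcal{H}}^2\lesssim E(t)$.

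Next I would differentiate $E$ along a classical solution. Testing equation (\ref{mainequ}) with $u_t$ and integrating by parts over $(0,\pi)$ --- all boundary contributions vanishing because the hinged conditions (\ref{BC1}) give $u_t=u_{xx}=0$ at the endpoints --- the conservative terms assemble into $\dot E$, while the two damping terms produce
\begin{equation*}
\frac{dE}{dt}=-\beta\int_0^\pi u_{tx}^2\,dx-\delta\int_0^\pi u_t^2\,dx+\varepsilon\int_0^\pi h\,u_t\,dx.
\end{equation*}
Since $\beta,\delta\ge 0$ by Assumption~\ref{param}, the first two terms are nonpositive and may be discarded. For the forcing term I would use Cauchy--Schwarz together with the uniform bound $\|h(\cdot,t)\|_{L^2}\le H_0$ from Assumption~\ref{Asforce} and the inequality $\|u_t\|_{L^2}\le\sqrt{2E}$, giving $\varepsilon\int_0^\pi h\,u_t\,dx\le \varepsilon H_0\sqrt{2E}$. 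Rewriting this as $\frac{d}{dt}\sqrt{E}\le \varepsilon H_0/\sqrt{2}$ yields the linear-in-time bound $\sqrt{E(t)}\le\sqrt{E(0)}+\tfrac{\varepsilon H_0}{\sqrt 2}\,t$, finite on every bounded interval. (When $\delta>0$ one may alternatively absorb $\|u_t\|_{L^2}^2$ into the dissipation via Young's inequality to obtain the simpler estimate $\dot E\le \varepsilon^2H_0^2/(2\delta)$.)

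Combining these bounds, $\|(u(t),u_t(t))\|_{\mathcal{H}}$ stays bounded on every finite interval, so no finite-time blow-up occurs and the solution extends globally. The main obstacle, and the reason the hypotheses are chosen as they are, is guaranteeing that $E$ genuinely controls the norm: a nonlinearity violating (\ref{below}) could render the potential energy unbounded below and permit finite-time blow-up, so it is this sign condition --- not merely the growth bounds (\ref{above1})--(\ref{above2}) --- that is doing the real work. A secondary technical point is that the energy identity is derived for smooth ($\mathcal{D}(A)$) solutions and must then be propagated to general mild solutions by a density/approximation argument.
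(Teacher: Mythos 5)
Your proof is correct and follows essentially the same route as the paper: the same energy functional, the same use of the sign condition (\ref{below}) to make the energy dominate the $\mathcal{H}$-norm via the equivalence of the $\langle\cdot,\cdot\rangle_{\alpha,\gamma,\mu}$-norm with the $H^1$-norm, and the same differentiation of the energy along classical solutions. The only divergence is in the treatment of the forcing term: the paper absorbs it into the damping to conclude $\dot{\mathcal{E}}\le \|u_t\|_{L^2}^2(\varepsilon H_0-\delta)\le 0$ for small $\varepsilon$ (a step that, as written, requires bounding $\varepsilon\|u_t\|_{L^2}\|h\|_{L^2}$ by $\varepsilon H_0\|u_t\|_{L^2}^2$, which does not follow from Assumption \ref{Asforce} alone), whereas your Gronwall-type bound $\sqrt{E(t)}\le\sqrt{E(0)}+\varepsilon H_0 t/\sqrt{2}$ (or the Young-inequality variant) is weaker but fully justified and still rules out finite-time blow-up.
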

\begin{proof}
	The proof relies upon a straight forward energy estimate for the quantity
	\begin{equation}\label{energy}
	\mathcal{E}(u)=\frac{1}{2}\int_0^\pi[ u_t^2+\alpha u_{xx}^2+\gamma u^2+\mu u_{tx}^2-2F(u)]\, dx,
	\end{equation}
where we have set $$F(x):=\int_0^xf(\xi)\,d\xi.$$
Indeed, by equation (\ref{HsInt}) and by the definition of the $\alpha$-$\gamma$-$\mu$-norm in (\ref{inprodalpha}), we find that 
\begin{equation}\label{energyest}
\begin{split}
\frac{1}{2}\|(u(t),v(t))\|_{\mathcal {H}}^2 &=\frac{1}{2}\sum_{n=1}^{\infty}\left(\frac{\alpha n^4+\gamma}{1+\mu n^2}\right)|\hat{u}_n|^2+|\hat{v}_n|^2\\
&\leq\frac{1}{2}\sum_{n=1}^{\infty}\left(\alpha n^4+\gamma\right)|\hat{u}_n|^2+|\hat{v}_n|^2\\
&\leq \frac{1}{2}\sum_{n=1}^{\infty}\left(\alpha n^4+\gamma\right)|\hat{u}_n|^2+(1+\mu n^2)|\hat{v}_n|^2
-\int_0^{\pi}F(u(x))\, dx\\
&=\mathcal{E}(u)
\end{split}
\end{equation}
by assumption (\ref{below}).\\
We will now show that the energy (\ref{energy}) is decreasing in time. Using equation (\ref{mainequ}) and integration by parts to shift the derivatives, we obtain the estimate
\begin{equation}\label{dissenergy}
\begin{split}
	\frac{d}{dt}\mathcal{E}(u)&=\int_0^\pi u_t(u_{tt}-\mu u_{ttxx}+\alpha u_{xxxx}+\gamma u-f(u))\, dx\\
	&=\int_0^\pi u_t(\beta u_{txx}-\delta u_t+\epsilon h(x,t))\, dx\\
	&\leq -\beta\|u_{tx}\|_{L^2}^2-\delta\|u_{t}\|_{L^2}^2+\varepsilon\|u_t\|_{L^2}^2\|h\|_{L^2}^2\\
	&\leq \|u_t\|_{L^2}^2(\varepsilon H_0-\delta)\\
	&\leq 0,
\end{split}
\end{equation}
where we have used the Cauchy-Schwartz inequality in the third line and Assumption \ref{Asforce} in the forth line. The estimate in the last line holds true for sufficiently small $\varepsilon$. Since $\mathcal{E}$ decreases in time, it follows from (\ref{energyest}), that the norm of a solution $(u,v)$ stays bounded for all times. This proves global existence.
\end{proof}

\begin{remark}
	The estimate (\ref{dissenergy}) actually shows that the mechanical system (\ref{mainequ}) loses energy in time - this is due to the damping term $\delta u_t$. The dissipation of energy outperforms the periodic forcing if the mean kinetic energy is sufficiently small.
\end{remark}
\begin{remark}
For later computations, we note that if $\|U_0\|_{\mathcal{H}}\lesssim \varepsilon$, then $\|U(t)\|\lesssim\varepsilon$ for all $t\geq 0$. Indeed, by (\ref{energyest}) and (\ref{dissenergy}), we have that 
\begin{equation}\label{estimate}
\begin{split}
\|U(t)\|_{\mathcal{H}}&\leq \|U_0\|_{\mathcal{H}}-\int_0^LF(u_0)\, dx\\
&\lesssim \varepsilon +\int_0^\pi|u_0|^{m+1}\, dx\\
&\lesssim \varepsilon +\|u_0\|_{L^2(0,\pi)}\lesssim \varepsilon,
\end{split}
\end{equation}
where we have used the fact that $\varepsilon$ is small and that $L^q(0,\pi)\subseteq L^p(0,\pi)$ if $q\geq p$.
\end{remark}

\subsection{Fixed points of the Poincar\'e map}
Because of the presence of an $\omega$-periodic forcing term, we introduce the phase $\theta\in S^1$ and consider the equivalent autonomous dynamical system
\begin{equation}\label{suspend}
\begin{cases}
U_t=AU+\mathbf{f}(U)+\varepsilon \mathbf{h}(x,\theta),\\
\theta_t=\omega,
\end{cases}
\end{equation}
with $\mathbf{f}(u,v)=(0,f(u))$ and $\mathbf{h}(x,t)=(0,h(x,t))$. This system is equivalent to our equation (\ref{mainequ}), but a fixed point of the returned map of equation (\ref{suspend}) now corresponds to a periodic orbit of (\ref{suspend}). Due to the presence of time-periodic forcing, the trivial solution $U(x,t)=(0,0)$ is no longer a fixed point of the flow map. However, we can prove that the suspended system (\ref{suspend}) admits a fixed point for its Poincar\'e map, as long as $\varepsilon$ is small.
Let $F_{\varepsilon}^{\frac{2\pi}{\omega}}:\mathcal{H}\times S_{\omega}\to\mathcal{H}\times S_{\omega}$ be the flow map of system (\ref{suspend}). For any fixed $\omega_0\in S_\omega$, we define the \textit{Poincar\'e map} with base $\omega_0$ as
\begin{equation}\label{Poin}
P_{\varepsilon}:\mathcal{H}\to\mathcal{H},\quad U_0\mapsto \pi_{\mathcal{H}}(F^{\omega}_{\varepsilon}(U_0,\omega_0)),
\end{equation}
where $\pi_{\mathcal{H}}:\mathcal{H}\times S_{\omega}\to\mathcal{H}$ is the projection on our underlying Hilbert space. The implicit dependence of $P_\varepsilon$ upon $\omega_0$ is suppressed for notational reasons in the following. \\
The assumptions made in (\ref{AsA}) and (\ref{Asforce}) are strong enough to guarantee well-posedness and global existence for the suspended system (\ref{suspend}), as we may deduce from the well-posedness and global existence result for equation (\ref{mainequV}).\\
Next, we observe that the linearization of equation (\ref{mainequ}) with forcing present,
\begin{equation}\label{linmain}
w_{tt}-\mu w_{ttxx}=-\alpha w_{xxxx}+\beta w_{txx}-\gamma w-\delta w_t+\varepsilon h,
\end{equation}
admits a time-periodic solution of period $\omega$. To see this, we seek a time-periodic solution and expand $w$ as well as $h$ as Fourier series, both in the $x$- and the $t$-variable, as $$w(x,t)=\sum_{m\in\mathbb{Z}}\sum_{n=1}^{\infty}\hat{w}_{n,m}e^{\frac{2\pi \ri m}{\omega}t}\sin(nx),\quad h(x,t)=\sum_{m\in\mathbb{Z}}\sum_{n=1}^{\infty}\hat{h}_{n,m}e^{\frac{2\pi \ri m}{\omega}t}\sin(nx),$$
and insert these expressions into equation (\ref{linmain}). We find that in order to obtain a time-periodic solution, we should set
\begin{equation}\label{Fourlin}
\hat{w}_{n,m}=\frac{\varepsilon\hat{h}_{n,m}}{\ri\tilde{m}(\delta+\beta n^2)+\alpha n^4+\gamma-\tilde{m}^2(\mu{n}^2+1)},
\end{equation}
provided the denominator is non-zero for any $m\in\mathbb{Z}$ and $n\in\mathbb{N}^{+}$. Here we have set $\tilde{m}:=\frac{2\pi m}{\omega}$. Equivalently, passing to a vector formulation of (\ref{linmain}) with $W=(w,w_t)$, we know that the system
\begin{equation}\label{linmainV}
W_t=AW+\varepsilon\mathbf{h}
\end{equation}
possesses a time-periodic solution of period $\omega$.\\
In order to prove that the Poincar\'e map possesses a fixed point, we impose the following non-resonance condition relating the spectrum of the linear system (\ref{linmain}) to the period of the external forcing
\begin{assumption}
	The spectrum of the linear flow at time $\omega$ does not contain $1$, i.e., 
	\begin{equation}\label{nonres}
	1\notin \sigma(e^{\omega A}).
	\end{equation}
\end{assumption}

We may write assumption (\ref{nonres}) equivalently as
\begin{equation}
\frac{2\pi\ri l}{\omega}\neq  -\frac{\beta n^2+\delta}{2+2\mu n^2}\pm \sqrt{\left(\frac{\beta n^2+\delta}{2+2\mu n^2}\right)^2-\frac{\alpha n^4+\gamma}{1+\mu n^2}}
\end{equation}
for any $n,l\in\mathbb{N}$. Note that for $\beta, \delta >0$, the above assumption is always satisfied.\\
Using a perturbative argument in the proof of the existence of a fixed point for $P_\varepsilon$, we have to show that the derivative of the flow map of $(\ref{mainequV})$ for $\varepsilon$ small is close to the derivative of the flow map at $\varepsilon=0$. This will be achieved in the following
\begin{lemma}\label{derclose}
	Let $F_{\varepsilon}^t$ be the flow map of equation (\ref{mainequV}) at time $t>0$. If we assume that the nonlinearity $f$ satisfies Assumption \ref{AsA}, while the forcing satisfies Assumption $\ref{Asforce}$, then, for any fixed time $t>0$, the estimate
	\begin{equation}\label{estder}
\|(DF^t_{0}(U)-DF^t_{\varepsilon}(U))\cdot V\|_{\mathcal{H}}
\lesssim \|V\|_{\mathcal{H}},
	\end{equation}
	holds for any $U,V\in\mathcal{H}$, provided $\varepsilon>0$ is sufficiently small. 
\end{lemma}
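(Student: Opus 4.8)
\noindent\emph{Proof proposal.} The plan is to realize the Fr\'echet derivative $DF^t_\varepsilon(U)\cdot V$ as the solution of the variational equation along the trajectory $U_\varepsilon(s):=F^s_\varepsilon(U)$, and then to compare the variational flows for $\varepsilon>0$ and $\varepsilon=0$ by a Duhamel-plus-Gr\"onwall argument. By the differentiable dependence of the flow on its initial data (Theorem \ref{ExUn} together with the smoothness of the Nemytskii operator in Lemma \ref{Sobf}), the function $W_\varepsilon(t):=DF^t_\varepsilon(U)\cdot V$ is the mild solution of
\begin{equation*}
\dot W_\varepsilon = A W_\varepsilon + DG_\varepsilon(U_\varepsilon(t))\,W_\varepsilon,\qquad W_\varepsilon(0)=V,
\end{equation*}
and the crucial observation is that, since the forcing $\varepsilon\mathbf h(x,t)$ does not depend on the state, its contribution drops out upon differentiation: writing $W=(w_1,w_2)$ one has $DG_\varepsilon(U)\,W=(0,f'(u)w_1)$, independently of $\varepsilon$. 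Hence the variational equations for $W_\varepsilon$ and $W_0$ carry the \emph{same} vector field $(0,f'(\cdot)\,\cdot)$, and the only $\varepsilon$-dependence enters through the base point $U_\varepsilon(t)=(u_\varepsilon(t),v_\varepsilon(t))$.

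First I would write both $W_\varepsilon$ and $W_0$ in the mild form using the contraction semigroup $e^{tA}$ of Theorem \ref{semi}, and set $Z:=W_\varepsilon-W_0$, so that $Z(0)=0$ and
\begin{equation*}
Z(t)=\int_0^t e^{(t-s)A}\Big(0,\,f'(u_\varepsilon(s))z_1(s)+\big(f'(u_\varepsilon(s))-f'(u_0(s))\big)w_{0,1}(s)\Big)\,ds,
\end{equation*}
where I have split $f'(u_\varepsilon)w_{\varepsilon,1}-f'(u_0)w_{0,1}$ into a term proportional to the first component $z_1$ of $Z$ and a term measuring the drift of the base trajectory.

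Next, using $\|e^{sA}\|\le 1$, I would estimate the two contributions separately. The trajectories $U_\varepsilon,U_0$ stay bounded in $\mathcal H$ on $[0,t]$ by the energy estimate of Proposition \ref{global}; combined with the embedding $H^1_0\hookrightarrow L^\infty$ of Lemma \ref{LiHs} and the growth bound $|f'(x)|\lesssim|x|^{m-1}$ of Assumption \ref{AsA}, this gives $\|f'(u_\varepsilon(s))\|_{L^\infty}\le C$ uniformly on $[0,t]$, so the first term is controlled by $C\int_0^t\|Z(s)\|_{\mathcal H}\,ds$. For the second term I would first prove continuous dependence of the base trajectory on $\varepsilon$, namely $\|u_\varepsilon(s)-u_0(s)\|_{L^\infty}\lesssim\varepsilon$ on $[0,t]$, by subtracting the two mild formulations of (\ref{mainequV}), using the Lipschitz property of $f$ on bounded sets and Gr\"onwall; the single explicit $\varepsilon$ there comes from the forcing integral $\varepsilon\int_0^s e^{(s-\tau)A}\mathbf h\,d\tau$. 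The local Lipschitz continuity of $f'$ furnished by Lemma \ref{Sobf} then yields $\|(f'(u_\varepsilon)-f'(u_0))w_{0,1}\|_{L^2}\lesssim\varepsilon\,\|w_{0,1}\|_{L^2}$, while a separate Gr\"onwall estimate for the $\varepsilon=0$ variational equation gives $\|W_0(s)\|_{\mathcal H}\lesssim\|V\|_{\mathcal H}$ on $[0,t]$. Feeding both bounds into Gr\"onwall's inequality produces $\|Z(t)\|_{\mathcal H}\lesssim\varepsilon\,\|V\|_{\mathcal H}$, which in particular implies the claimed bound (\ref{estder}) and shows that the two derivatives are genuinely $O(\varepsilon)$-close, as needed for the subsequent perturbative fixed-point argument for $P_\varepsilon$.

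The main obstacle I anticipate is the control of the Nemytskii operator $u\mapsto f'(u)$ and of the products $f'(u)w$ in the correct function spaces: one must guarantee that $f'$ is locally Lipschitz as a map into $L^2$ and that the multiplication $f'(u)w_{0,1}$ is bounded in $L^2$, which forces the argument through the $L^\infty$-control of the trajectories rather than working directly in $\mathcal H$. Here the polynomial growth of $f$ and $f'$ in Assumption \ref{AsA}, the uniform-in-time boundedness of Proposition \ref{global}, and the embedding of Lemma \ref{LiHs} must be combined with care; a second, more technical point is that the continuous-dependence estimate and the Gr\"onwall step must be carried out with constants that are uniform in $\varepsilon$ on the fixed interval $[0,t]$.
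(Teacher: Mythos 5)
Your proposal follows essentially the same route as the paper's proof: both represent $DF^t_\varepsilon(U)\cdot V$ via the Duhamel integral equation for the variational flow, split the difference $DF^t_\varepsilon-DF^t_0$ into a term proportional to that difference itself and a term measuring the change of $\nabla_U G$ along the two base trajectories, and close the estimate with Gr\"onwall. Your version is in fact more complete than the paper's: you make explicit the continuous-dependence estimate $\|U_\varepsilon(s)-U_0(s)\|_{\mathcal H}\lesssim\varepsilon$ and the $L^\infty$-control of the Nemytskii derivative that the paper passes over, and you thereby obtain the quantitatively sharper bound $\lesssim\varepsilon\,\|V\|_{\mathcal H}$, which is what is actually needed later to conclude that $\mathcal A_\varepsilon$ is close to $e^{AT}$ in norm, whereas the displayed conclusion (\ref{estder}) only records $\lesssim\|V\|_{\mathcal H}$.
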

\begin{proof}
To show the estimate (\ref{estder}), we first note that the derivative $DF^t_{\varepsilon}$ satisfies the integral equation
\begin{equation}\label{DF}
DF_{\varepsilon}^t(U)\cdot V=e^{tA}\cdot V+\int_0^te^{(t-s)A}\nabla_{U}G_{\varepsilon}(F^t_{\varepsilon}(U))DF_{\varepsilon}^t(U)\cdot V\, ds,
\end{equation}
as it can be seen by taking the Fr\'echet derivate of equation (\ref{mild}) and using the regularity properties of $f$ derived in Lemma \ref{Sobf}. We rewrite (\ref{DF}) as
\begin{equation*}
\begin{split}
&(DF^t_{\varepsilon}(U)-DF^t_{0}(U))\cdot V \\
&= \int_0^t  e^{(t-s)A}\Big(\nabla_{U}G_{\varepsilon}(F^t_{\varepsilon}(U))DF_{\varepsilon}^t(U)-\nabla_{U}G_{0}(F^t_{0}(U))DF_{0}^t(U)\Big)\cdot V\, ds\\
&=\int_0^te^{(t-s)A}\Big(\nabla_{U}G_{\varepsilon}(F^t_{\varepsilon}(U))\Big(DF_{\varepsilon}^t(U)-DF_{0}^t(U)\Big)\Big)\cdot V\, ds\\
&+\int_0^te^{(t-s)A}\Big(\Big(\nabla_{U}G_{\varepsilon}(F^t_{\varepsilon}(U))-\nabla_{U}G_{0}(F^t_{0}(U))\Big)DF_{0}^t(U)\Big)\cdot V\, ds.
\end{split}
\end{equation*}
Using the fact that $\|DF_{0}^t(U)\cdot V\|_{\mathcal{H}}=\|e^{tA}\cdot U\|_{\mathcal{H}}\lesssim \|U\|_{\mathcal{H}}$ for any $U\in\mathcal{H}$ along with Assumption \ref{AsA}, we obtain
\begin{equation}\label{estgron}
\|(DF^t_{\varepsilon}(U)-DF^t_{0}(U))\cdot V\|_{\mathcal{H}}\lesssim \int_0^t \|(DF^t_{\varepsilon}(U)-DF^t_{0}(U))\cdot V\|_{\mathcal{H}}\, ds + \|V\|_{\mathcal{H}}.
\end{equation}
Applying Gronwall's inequality to equation (\ref{estgron}) proves the claim.
\end{proof}
We are now ready to prove the following
\begin{lemma}\label{fixed}
	For $\varepsilon>0$ small enough, the Poincar\'e map (\ref{Poin}) admits a unique fixed point.
\end{lemma}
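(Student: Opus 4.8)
The plan is to recast the fixed-point problem for the Poincar\'e map $P_\varepsilon$ as a contraction on a small ball in $\mathcal{H}$ and invoke the Banach fixed-point theorem, with the non-resonance assumption (\ref{nonres}) supplying the invertibility needed to close the argument. First I would use the variation-of-constants representation (\ref{mild}) of the mild solution at the return time $\omega$ to rewrite the fixed-point equation $P_\varepsilon(U_0)=U_0$ in the equivalent form
\begin{equation*}
(I-e^{\omega A})U_0=\int_0^\omega e^{(\omega-\tau)A}\big(\mathbf{f}(\phi_\varepsilon^\tau(U_0))+\varepsilon\,\mathbf{h}(\cdot,\tau)\big)\,d\tau,
\end{equation*}
where $\phi_\varepsilon^\tau$ denotes the semi-flow of (\ref{mainequV}). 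By Assumption (\ref{nonres}), $1\notin\sigma(e^{\omega A})$; since the eigenvalues $e^{\omega\lambda_n^\pm}$ accumulate on a circle of radius strictly less than one (the real parts in (\ref{specA}) are negative and bounded away from zero, cf. Figure \ref{SpectrumRayleigh}), the operator $I-e^{\omega A}$ is boundedly invertible on $\mathcal{H}$. This lets me define the fixed-point operator
\begin{equation*}
\mathcal{T}_\varepsilon(U_0):=(I-e^{\omega A})^{-1}\int_0^\omega e^{(\omega-\tau)A}\big(\mathbf{f}(\phi_\varepsilon^\tau(U_0))+\varepsilon\,\mathbf{h}(\cdot,\tau)\big)\,d\tau,
\end{equation*}
whose fixed points are exactly the fixed points of $P_\varepsilon$.

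Next I would verify the two hypotheses of the contraction principle on a ball $B_\rho=\{\|U_0\|_{\mathcal{H}}\le\rho\}$. For the self-mapping property, the a priori bound behind (\ref{estimate}), together with the contractivity $\|e^{tA}\|\le 1$ from Theorem \ref{semi} and a Gronwall argument, keeps the trajectory small, $\sup_{\tau\in[0,\omega]}\|\phi_\varepsilon^\tau(U_0)\|_{\mathcal{H}}\lesssim\rho+\varepsilon$. Because $f(0)=f'(0)=0$ and $|f(x)|\lesssim|x|^m$ with $m>1$ (Assumption \ref{AsA}), the Nemytskii estimate furnished by Lemma \ref{Sobf} and Lemma \ref{LiHs} gives $\|\mathbf{f}(U)\|_{\mathcal{H}}\lesssim\|U\|_{\mathcal{H}}^m$, so that $\|\mathcal{T}_\varepsilon(U_0)\|_{\mathcal{H}}\lesssim\rho^m+\varepsilon H_0$. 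Choosing $\rho$ small enough that the cubic-type term obeys $C\rho^{m-1}<\tfrac12$ and then $\varepsilon$ small enough that $C\varepsilon H_0<\tfrac{\rho}{2}$ makes $\mathcal{T}_\varepsilon$ map $B_\rho$ into itself.

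For the contraction property I would estimate $\|\mathcal{T}_\varepsilon(U_0)-\mathcal{T}_\varepsilon(U_0')\|_{\mathcal{H}}$ using the superlinear Lipschitz bound $\|\mathbf{f}(U_1)-\mathbf{f}(U_2)\|_{\mathcal{H}}\lesssim\rho^{m-1}\|U_1-U_2\|_{\mathcal{H}}$ on $B_\rho$ (again via Lemma \ref{Sobf}), combined with the Lipschitz dependence of the flow on its initial datum, $\|\phi_\varepsilon^\tau(U_0)-\phi_\varepsilon^\tau(U_0')\|_{\mathcal{H}}\lesssim\|U_0-U_0'\|_{\mathcal{H}}$ uniformly on $[0,\omega]$; the latter follows from the boundedness of $DF_\varepsilon^\tau$ established in Lemma \ref{derclose}, equivalently from Gronwall applied to the integral identity (\ref{DF}). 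This yields a contraction constant of order $\rho^{m-1}$, which is $<1$ once $\rho$ is small. The Banach fixed-point theorem then produces a unique fixed point $U_0^\varepsilon\in B_\rho$ of $\mathcal{T}_\varepsilon$, hence of $P_\varepsilon$, with $U_0^\varepsilon\to 0$ as $\varepsilon\to 0$; this fixed point is the sought periodic orbit of (\ref{suspend}).

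The hard part will be the two infinite-dimensional regularity inputs rather than the soft fixed-point scheme: controlling $(I-e^{\omega A})^{-1}$ as a bounded operator (which rests on the spectral picture of Figure \ref{SpectrumRayleigh} and on the non-resonance assumption ruling out $1\in\sigma(e^{\omega A})$), and establishing the uniform-in-time Lipschitz dependence of the semi-flow on initial data in the energy space $\mathcal{H}$, where one must combine the contraction-semigroup bound of Theorem \ref{semi} with the Nemytskii estimates for $\mathbf{f}$. Both ingredients are, however, already available from Theorem \ref{semi}, Lemma \ref{Sobf}, Lemma \ref{LiHs} and Lemma \ref{derclose}, so the residual work is the bookkeeping of the smallness thresholds for $\rho$ and $\varepsilon$ and the verification that the two thresholds are compatible.
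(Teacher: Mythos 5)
Your proposal is correct and follows essentially the same route as the paper: both recast $P_\varepsilon(U_0)=U_0$ via variation of constants, invert $I-e^{\omega A}$ using the non-resonance assumption (\ref{nonres}), and run a Banach fixed-point argument on a small ball, with the superlinear bound $\|\mathbf{f}(U)\|\lesssim\|U\|^m$, $m>1$, supplying both the self-mapping and the contraction estimates. The only cosmetic difference is that the paper centers the ball at the periodic solution $U_{per}$ of the linear forced problem (thereby absorbing the $\varepsilon\mathbf{h}$ term into the base point), whereas you center it at the origin and carry the forcing term explicitly inside the fixed-point operator.
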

We follow closely the argument provided in \cite{Holmes1981}, under slightly weaker assumptions.
\begin{proof}
Let $U_{per}(x,t;\varepsilon)$ denote the unique time-periodic solution to equation (\ref{linmainV}). Passing to the weak formulation, we know that $U^{per}$ satisfies the integral equation
\begin{equation}\label{Uper}
U_{per}(x,t;\varepsilon)=e^{tA}U_{per}(x,0;\varepsilon)+\varepsilon\int_0^te^{(t-s)A}\mathbf{h}(x,t)\, ds,
\end{equation}
together with the condition $U_{per}(x,\frac{2\pi}{\omega};\varepsilon)=U_{per}(x,0;\varepsilon)$ for any $x\in (0,\pi)$. We are looking for a solution to the integral equation
\begin{equation}\label{Ufix}
U(x,t;\varepsilon)=e^{tA}U(x,0;\varepsilon)+\varepsilon\int_0^te^{(t-s)A}\mathbf{h}(x,t)\, ds+\int_0^te^{(t-s)A}\mathbf{f}(U)\,ds,
\end{equation}
together with the condition that $U(x,\frac{2\pi}{\omega};\varepsilon)=U(x,0;\varepsilon)$ for all $x\in (0,\pi)$.  Let $B_{\varepsilon}$ denote the ball of radius $\varepsilon$ around the solution to the linear problem $U_{per}(x,0;\varepsilon)$ in the space $\mathcal{H}$. Subtracting equation (\ref{Uper}) from equation (\ref{Ufix}), we obtain
\begin{equation}\label{Uperfix}
U(x,t;\varepsilon)-U_{per}(x,t;\varepsilon)=e^{tA}\big(U(x,0;\varepsilon)-U_{per}(x,0;\varepsilon)\big)+\int_0^te^{(t-s)A}\mathbf{f}(U)\,ds,
\end{equation}
as we see by equation (\ref{Uperfix}).  We note that $U$ is a fixed point for the Poincar\'e map if and only if it is a fixed point for the functional
\begin{equation}
S_{\varepsilon}(U(0,x))=U^{per}(0,x;\varepsilon)+(1-e^{\frac{2\pi}{\omega} A})^{-1}\int_0^{\frac{2\pi}{\omega}}e^{(\frac{2\pi}{\omega}-s)A}\mathbf{f}(U(x,s;\varepsilon))\,ds,
\end{equation}
which is well defined by assumption (\ref{nonres}). This can be seen by inspecting equation (\ref{Uperfix}).\\
We can now use the estimate derived in 	(\ref{estimate}) to show that $S$ maps the ball $B_{\varepsilon}$ into itself. Indeed, we know from equation (\ref{Fourlin}), that $\|U_{per}(x,t;\varepsilon)\|_{\mathcal{H}}\leq \varepsilon(\|h(x,t)\|_{L^2(0,\pi)}+\|h_t(x,t)|\|_{L^2(0,\pi)})\lesssim\varepsilon$, where we also have used assumption (\ref{Asforce}). We find that 
\begin{equation}
\begin{split}
\|S(U(0,x))-U^{per}(0,x;\varepsilon)\|_{\mathcal{H}}&\lesssim \int_0^{\frac{2\pi}{\omega}}\|\mathbf{f}(U(x,s))\|_{\mathcal{H}}\,ds\\
& \lesssim \int_0^\frac{2\pi}{\omega} \|u\|_{L^2}^m\, ds\\
& \lesssim \varepsilon^m,
\end{split}
\end{equation}
where we have used assumption (\ref{AsA}) as well as the fact that $L^{2m}(0,\pi)\subset L^2(0,\pi)$ for $m>1$. For $\varepsilon$ sufficiently small, this proves the claim.\\
Now we show that the functional $S$ is a contraction on $B_\varepsilon$. To this end, note that
\begin{equation}
\left\|\frac{\partial S}{\partial U_0}\right\|\lesssim \int_0^{\frac{2\pi}{\omega}}\left\|\frac{\partial\mathbf{f}}{\partial U}\frac{\partial U}{\partial U_0}\right\|\,ds\lesssim \varepsilon^{m-1},
\end{equation}
where we have used the assumption on the derivative of $f$ in (\ref{AsA}), the fact that $\partial U/\partial U_0$ is close to $e^{A}$, and that $\varepsilon^{m-1}\lesssim \varepsilon$ for $\varepsilon$ small. Applying a standard fixed point argument then proves the lemma.  
\end{proof}

\section{Appendix II: Existence and Uniqueness of Spectral Submanifolds}
We now recall some general results on invariant submanifolds tangent to spectral subspaces from \cite{Cab2003}. Let $\mathcal{F}\in C^r(U,Y)$, with $r\in\mathbb{N}\cup\{\infty,a\}$ and let $0$ be a fixed point of $\mathcal{F}$. 
In the following, we denote the complex unit disk by
\begin{equation}
\mathbb{D}:=\{z\in\mathbb{C} : |z|<1\}.
\end{equation}

\begin{assumption}\label{AsB}
Let $\mathcal{A}$ be the derivative of the $C^r$-map $\mathcal{F}$ at zero, i.e., $\mathcal{A}=D\mathcal{F}(0)$. Assume further that\\
	\begin{enumerate}
		\item The operator $\mathcal{A}$ is invertible.
		\item The underlying Banach space $X$ admits a decomposition as a direct sum $X=X_1\oplus X_2$, where the space $X_1$ is invariant under $\mathcal{A}$, i.e.,
		$$\mathcal{A}X_1\subset X_1.$$
		We write $\pi_1:X\to X_1$ and $\pi_2:X\to X_2$ for the linear projections on the respective subspaces. For ease of notation, we set $\mathcal{A}_1:=\pi_1\mathcal{A}|_{X_1}$ and $\mathcal{A}_2:=\pi_2\mathcal{A}|_{X_2}$.
		\item The spectrum of $\mathcal{A}_1$ lies strictly inside the unit circle, that is to say 
		$$\sigma(\mathcal{A}_1)\subset \mathbb{D}.$$
		\item The spectrum of $\mathcal{A}_2$ does not contain zero, i.e., 
		$$0\notin\sigma(\mathcal{A}_2).$$
		\item For the smallest integer $L\geq 1$ with the property that 
		$$\sigma(\mathcal{A}_1)^{L+1}\sigma(\mathcal{A}_2^{-1})\subset\mathbb{D}$$
		we have
		$$\sigma(\mathcal{A}_1)^i\cap\sigma(\mathcal{A}_2)=\emptyset$$
		for every integer $i\in [2,L]$ (in case $L\geq 2$).
		\item The order of differentiability $r$ of $\mathcal{F}$ and the integer $L$ satisfy
		$$L+1\leq r.$$
		\end{enumerate}
\end{assumption}
\begin{remark}
	As a consequence of assumption $(2)$ in (\ref{AsB}), the operator $\mathcal{A}$ admits a representation  
	\begin{equation}\label{Amatrix}
	\mathcal{A}=\left(\begin{matrix} \mathcal{A}_1 & \mathcal{B}\\ 0 & \mathcal{A}_2\end{matrix}\right),
	\end{equation}
	with respect to the decomposition $X=X_1\oplus X_2$, where $\mathcal{B}=\pi_1\mathcal{A}|_{X_2}$. If $X_2$ is also an invariant subspace for $\mathcal{A}$, then $\mathcal{B}=0$. The main result in \cite{Cab2003} is the following
\end{remark}
\begin{theorem}\label{mainthm}
 Let $\mathcal{F}:U\to Y$ be a $C^r$-map that satisfies the assumptions (\ref{AsB}). Then the following holds true:
 \begin{enumerate}
\item There exists a $C^r$ manifold $\mathcal{M}_1$ that is invariant under $\mathcal{F}$ and is tangent to the subspace $X_1$ at $0$.
 \item The invariant manifold $\mathcal{M}_1$ is unique among all $C^{L+1}$ invariant manifolds of $\mathcal {F}$ that are tangent to the subspace $X_1$ at $0$. That is, every two $C^{L+1}$ invariant manifolds with this tangency property will coincide in a neighborhood of $0$.
 \item There exists a polynomial map $R:X_1\to X_1$ of degree not larger than $L$ and a $C^r$ map $K: U_1\to X$, defined on some open subset $U_1\subset X_1$ that contains $0$, satisfying
 $$R(0)=0,\quad DR(0)=\mathcal{A}_1,\quad K(0)=0,\quad \pi_1 DK(0)=I,\quad \pi_2 DK(0)=0$$ 
 such that $K$ serves as an embedding of $\mathcal{M}_1$ from $X_1$ to $X$ and $R$ represents the pull-back of the dynamics on $\mathcal{M}_1$ to $U_1$ under this embedding. Specifically, we have
 \begin{equation}\label{Conj}
\mathcal{F}\circ K=K\circ R.
 \end{equation}
  \end{enumerate}
\end{theorem}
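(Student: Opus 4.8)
The plan is to prove all three assertions at once via the \emph{parametrization method}: instead of describing $\mathcal{M}_1$ as a graph, I solve the conjugacy equation $\mathcal{F}\circ K = K\circ R$ directly for the embedding $K$ and the reduced map $R$. The argument has three ingredients --- an order-by-order construction of the low-degree jets of $K$ and $R$, a contraction mapping that completes $K$ to a genuine map, and a matching argument for uniqueness --- with the six conditions of Assumption \ref{AsB} entering one at a time.

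First I would determine the low-order jets by matching homogeneous terms in $\mathcal{F}\circ K = K\circ R$. Writing $K = \iota + \sum_{n\ge 2}K_n$ and $R = \mathcal{A}_1 + \sum_{n\ge 2}R_n$ with $\iota:X_1\hookrightarrow X$ the inclusion (so the normalizations $\pi_1 DK(0)=I$, $\pi_2 DK(0)=0$, $DR(0)=\mathcal{A}_1$ hold at order one), the degree-$n$ part of the equation yields the homological equation
\[
\mathcal{A}K_n - K_n\circ\mathcal{A}_1 - \iota R_n = \Phi_n,
\]
where $\Phi_n$ depends only on the already-determined lower-order jets. Splitting along $X = X_1\oplus X_2$ via the block form \eqref{Amatrix}, the $\pi_2$-component reads $\mathcal{A}_2 K_n^2 - K_n^2\circ\mathcal{A}_1 = \pi_2\Phi_n$; the operator $S\mapsto \mathcal{A}_2 S - S\circ\mathcal{A}_1$ on degree-$n$ homogeneous maps has spectrum contained in the set of differences $\sigma(\mathcal{A}_2)-\sigma(\mathcal{A}_1)^{n}$, which avoids zero exactly by the non-resonance condition (5) for $2\le n\le L$, so $K_n^2$ is uniquely determined. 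In the $\pi_1$-component I would use the freedom in $R_n$ to absorb the (possibly resonant) kernel of $S\mapsto\mathcal{A}_1 S - S\circ\mathcal{A}_1$ and then solve for $K_n^1$; this is precisely what forces $R$ to be polynomial of degree $\le L$ rather than linear.

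Next I would solve for the full $K$. Fixing the polynomial $R$ and the polynomial jet $K_{\le L}$ just obtained, write $K = K_{\le L}+\Delta$ with $\Delta$ vanishing to order $\ge L+1$, and recast $\mathcal{F}\circ K = K\circ R$ as a fixed-point equation $\Delta = \mathcal{T}(\Delta)$ on a small ball in a Banach space of maps with the prescribed order of vanishing on a neighborhood $U_1$ of $0$ (a weighted $C^r$ space, or, when $r=a$, a space of bounded analytic maps). The essential point is that the linearization of $\mathcal{T}$ reduces, modulo a bounded remainder, to the substitution operator $\pi_2\Delta\mapsto \mathcal{A}_2^{-1}\,(\pi_2\Delta)\circ R$; on maps vanishing to order $L+1$ its size is governed by $\|\mathcal{A}_2^{-1}\|\,\|\mathcal{A}_1\|^{L+1}$, and the spectral-gap hypothesis $\sigma(\mathcal{A}_1)^{L+1}\sigma(\mathcal{A}_2^{-1})\subset\mathbb{D}$ makes it a contraction once norms adapted to $\mathcal{A}_1,\mathcal{A}_2$ are used. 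Invertibility of $\mathcal{A}$, hence of $\mathcal{A}_2$, from conditions (1) and (4) is what makes $\mathcal{A}_2^{-1}$ available, while $r\ge L+1$ from condition (6) supplies enough derivatives for the scheme to close in $C^r$. The Banach fixed-point theorem then yields a unique $\Delta$, a bootstrap upgrades its regularity, and setting $\mathcal{M}_1 = K(U_1)$ produces an invariant (by the conjugacy), $C^r$, tangent manifold, proving (1) and (3).

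Finally, for the uniqueness claim (2) I would compare two $C^{L+1}$ invariant manifolds tangent to $X_1$: their graphs over $X_1$ share the same Taylor jet up to order $L$, since that jet is rigidly determined by the homological equations above once tangency is imposed, and the difference of the two, which vanishes to order $L+1$, satisfies the homogeneous version of the same fixed-point equation, whose only solution in the contraction ball is zero; hence the two coincide near $0$. The step I expect to be the main obstacle is the contraction argument of the third paragraph: selecting the function space and adapted norm so that the substitution operator is genuinely contractive in infinite dimensions, controlling the superposition with the nonlinear $\mathcal{F}$ so that it acts boundedly on that space (which is ultimately why growth hypotheses on the nonlinearity are needed in the PDE application), and carrying the estimates through uniformly in the analytic category when $r=a$.
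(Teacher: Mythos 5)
The paper offers no proof of this theorem at all; it is quoted verbatim from Cabr\'e, Fontich and de la Llave \cite{Cab2003} and the ``proof'' is a one-line citation to that reference. Your outline is a faithful sketch of exactly the parametrization-method argument of \cite{Cab2003} --- order-by-order solution of the homological equations for the jets of $K$ and $R$ using the non-resonance condition (5), a fixed-point argument for the order-$(L+1)$ tail driven by the spectral gap $\sigma(\mathcal{A}_1)^{L+1}\sigma(\mathcal{A}_2^{-1})\subset\mathbb{D}$, and uniqueness by jet matching --- so it takes essentially the same approach as the (cited) proof.
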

The proof can be found in \cite{Cab2003}.
\begin{remark}
If additionally the non-resonance condition
	$$\sigma(\mathcal{A}_1)^i\cap\sigma(\mathcal{A}_2)=\emptyset$$
 holds for every integer $i\in [M,L]$, then we can choose $R$ in (\ref{Conj}) to be a polynomial of degree not larger than $M-1$. Furthermore, the $C^r$-manifold $\mathcal{M}_1$ is unique among all $C^{L+1}$ locally invariant manifolds tangent to the subspace $X_1$ at $0$ (see \cite{Cab2003} for details).
\end{remark}

\subsection{The Case of No External Forcing ($\varepsilon=0$)}
We will now apply Theorem \ref{mainthm} to our system (\ref{mainequ}) with $\varepsilon=0$. To this end, we choose as our underlying space $X=H^2(0,\pi)\times L^2(0,\pi)$ and as our map $\mathcal{F}=U(.,1)$, the time-one map of system $(\ref{mainequV})$.\\
 The mild solution $U=U(t,U_0)\in\mathcal{H}$, which is also a classical solution by Theorem \ref{ExUn}, is a fixed point to the function on the right hand side of equation (\ref{mild}). If we regard the right-hand side of (\ref{mild}), as the flow map $U_0\mapsto U_T(U_0):=U(T,U_0)$ for some fixed time $T>0$, we can take Fr\'{e}chet derivatives with respect to initial conditions $U_0=(u_0,v_0)\in\mathcal{D}(A)=H^4(0,\pi)\times H^4(0,\pi)$ on both sides to obtain
\begin{equation}
\begin{split}
\frac{\partial U_T}{\partial U_0}&=e^{TA}+\int_0^Te^{(t-s)A}\nabla_U F(u(s),x)\, ds\cdot\frac{\partial U_T}{\partial U_0}\\[0.3cm]
&=e^{TA}+\int_0^T e^{(t-s)A} \left(\begin{matrix} 0 & 0 \\ f'(u(s)) & 0 \end{matrix}\right)\, ds\cdot\frac{\partial U_T}{\partial U_0}.
\end{split}
\end{equation}
In particular, using the fact that $U_T(0,0)=0$ for all $T\geq 0$ by uniqueness of solutions and employing Assumption \ref{AsA}, we deduce that
\begin{equation}\label{derflow}
\frac{\partial U_T}{\partial U_0}(0,0)=e^{TA}
\end{equation}
for all $T\geq 0$.\\
This means that, from existence and uniqueness, we infer that $\mathcal{F}(0)=0$, while from equation (\ref{derflow}), we infer that $$\mathcal{A}=e^A.$$
Let us now verify Assumption $\ref{AsB}$ step by step.
To see that $\mathcal{A}$ is invertible, we may have a look at the explicit formula (\ref{explicit}) and compare the asymptotic growth of the coefficients. A look at (\ref{eigval}) confirms that $e^{\lambda_{n}^{\pm}}$ stays bounded and away from zero as $n\to\pm\infty$, so that we can deduce that the flow $e^{tA}$ maps the space $H^2(0,\pi)\times L^2(0,\pi)$ to itself. Since the map $(u_0,v_0)\mapsto e^{tA}(u_0,v_0)$ is bijective, as it can be seen by inspecting (\ref{explicit}), it follows from the Bounded Inverse Theorem on Banach spaces that the linearization $\mathcal{A}$ is invertible (cf. \cite{rudin2006functional}). In fact, we may write down the 
 of $\mathcal{A}$ in closed form as
\begin{equation}\label{explinv}
\begin{split}
\mathcal{A}^{-1} \left(\begin{array}{c}w\\z
\end{array}\right)=&\left(\sum_{n=1}^{\infty}e^{\lambda_n^++\lambda_n^-}\frac{(\lambda_n^-e^{\lambda_n^-}-\lambda_n^+e^{\lambda_n^{+}})\hat{w}_n-(e^{\lambda_n^-}-e^{\lambda_n^{+}})\hat{z}_n}{\lambda_n^--\lambda_n^+}\sin(nx),\right.\\[0.33cm]
&\left.
\sum_{n=1}^{\infty}e^{\lambda_n^++\lambda_n^-}\frac{-\lambda_n^+\lambda_n^-(e^{\lambda_n^+}-e^{\lambda_n^-})\hat{w}_n+(\lambda_n^-e^{\lambda_n^{+}}-\lambda_n^+e^{\lambda_n^-})\hat{z}_n}{\lambda_n^--\lambda_n^+}\sin(nx)\right).
\end{split}
\end{equation}\\
Since the space $L^2_0(0,\pi)$ admits a basis, namely $\{e^{i\tilde{n} x}\}_{n\in\mathbb{Z}}$, which is an eigenbasis for the right-hand side of equation (\ref{mainequ}), we may choose any subset of the eigenbasis of the matrix of operators $A$ as our space $X_1$. An easy computation shows that for a fixed frequency $e^{i\tilde{n} x}$, the eigenvectors for the matrix of operators $\mathcal{A}$ are given by $(1,\lambda_n^+)$ and $(1,\lambda_n^-)$, respectively. Therefore, for any subset $N\times M\subset \mathbb{N}^{+}\times\mathbb{N}^{+}$, we define the parametrization space as
\begin{equation}
X_1:=\spn\Big({\{(1,\lambda_n^+)\sin(nx)\}_{n\in N}}\cup{\{(1,\lambda_m^-)\sin(mx)\}_{m\in M}}\Big).
\end{equation}
The space $X_2$ in Assumption \ref{AsB} is then automatically given as $X_2=X_1^{\perp}$. We find that $X_1$ is indeed invariant under $\mathcal{A}$, taking a look at the explicit formula (\ref{explicit}) again.\\
Since the operator $\mathcal{A}$ is invertible and has pure point spectrum, we may deduce from the spectral mapping formula, cf. \cite{engel2006short}, that
\begin{equation} \sigma(\mathcal{A})=\left\{\exp\left(-\frac{\beta n^2+\delta}{2+2\mu n^2}\pm \sqrt{\left(\frac{\beta n^2+\delta}{2+2\mu n^2}\right)^2-\frac{\alpha n^4+\gamma}{1+\mu n^2}}\right)\right\}_{n\in\mathbb{N}^{+}}.
\end{equation}
\begin{remark}
Note that the above result is nontrivial in infinite dimensions. In general, the spectrum of the semi-flow generated by some operator $A:\mathcal{A}\subset\mathcal{H}\to\mathcal{H}$ is not equal to its exponential spectrum, i.e. $$\sigma(e^{tA})\neq e^{t\sigma(A)}$$ for some $t\geq 0$. There are examples of semi-groups that cannot be extended to operator groups, and therefore are not invertible.
\end{remark}
Since all constants in equation (\ref{mainequ}) are chosen positive, we immediately find that $\sigma(\mathcal{A})\subset\mathbb{D}$ and that $\sigma(\mathcal{A}_1)\subset\mathbb{D}$ for any choice of the subspace $X_1$. Thus, (2) of Assumption \ref{AsB} is automatically satisfied. By the same token, $0\notin\sigma(\mathcal{A}_2)$ and hence (3) of Assumption \ref{AsB} is always satisfied as well.\\
In order to give a criterion under which (5) of Assumption \ref{AsB} is satisfied, we first note that
$$(\sigma(\mathcal{A}_1))^{L+1}\sigma(\mathcal{A}_2^{-1})\subset\mathbb{D}$$
is satisfied if and only if 
$$ \sup \big\{-\operatorname{Re}\mu+(L+1)\operatorname{Re} \lambda\big\}<0,$$
where the supremum is taken over all $\lambda\in\{\lambda_n^+\}_{n\in N}\cup\{\lambda_m^-\}_{m\in M} $ and all $\mu\in \sigma(A)\setminus(\{\lambda_n^+\}_{n\in N}\cup\{\lambda_m^-\}_{m\in M})$.\\
Here, we have used the relation between the point spectrum of an invertible operator $\mathcal{A}$ and the point spectrum of its inverse $$\sigma_P(\mathcal{A}^{-1})\setminus\{0\}=\sigma_p(\mathcal{A})^{-1}.$$
A proof can be found in \cite{hislop2012introduction}.\\
Since $\lambda_{n}^{\pm}$ is negative for all $n\in\mathbb{Z}$, the above condition is equivalent to
\begin{equation}\label{ratio}
L\geq\frac{\inf\operatorname{Re} \mu}{\sup\operatorname{Re} \lambda}-1.
\end{equation}
This proves the existence of a SSM in the case of no external forcing.

\subsection{The Case of Weak External Forcing ($\varepsilon>0$)}
In the following, we will apply Theorem \ref{mainthm} to equation (\ref{mainequ}) with $\varepsilon>0$. Again, our underlying space will be $X=H^2_0(0,\pi)\times L^2_0(0,\pi)$. This time, however, we set $\mathcal{F}=P_\varepsilon$, with $P_\varepsilon$ being the Poincar\'{e} map defined in (\ref{Poin}). There exists then a fixed point of $P_\varepsilon$ (cf. Lemma \ref{fixed}), which we will denote by $U_\varepsilon^0$. 
We also let 
$$\mathcal{A}_\varepsilon=DP_\varepsilon(U_\varepsilon^0).$$
Since $e^{AT}$ is invertible and $\mathcal{A}_\varepsilon$ is close to $e^{AT}$ in norm for $\varepsilon$ small by Lemma \ref{derclose}, it follows that also $\mathcal{A}_\varepsilon$ is invertible.\\
Since we cannot write down the spectrum of $\mathcal{A}_\varepsilon$ explicitly, as we were able to do for $e^A$, we will aim to prove that $\sigma(\mathcal{A}_\varepsilon)$ is close to $\sigma(e^{AT})$. To this end, we will apply analytical spectral perturbation theory. Recall that, by the assumption in Theorem \ref{mainthmdifeps}, the nonlinearity in (\ref{mainequ}) is real analytic 
\begin{equation}\label{analytic}
f\in C^a(\mathcal{H},\mathcal{H}),
\end{equation}
which in particular implies that the for any $U\in \mathcal{H}$, the map $\varepsilon\mapsto \mathcal{A}_\varepsilon U$ is analytic around zero in the topology of $\mathcal{H}$. This follows by the Implicit Function Theorem for analytic functions on Banach spaces, cf. \cite{Lang1999}. Thus, $\mathcal{A}_\varepsilon$ defines a so-called analytic family and we can apply the following 

\begin{lemma}\textbf{(Analytical Spectral Perturbation)}\label{perspec}
	Let $T_\varepsilon: \mathcal{H}\to \mathcal{H}$ be an analytic family of bounded operators about $\varepsilon=0$. For any discrete eigenvalue $\lambda_0\in\sigma(T_0)$, there exist discrete eigenvalues $\lambda_1(\varepsilon),...,\lambda_r(\varepsilon)\in\sigma(T_\varepsilon)$, with $r=r(\varepsilon)$, such that $\lambda_j(0)=\lambda_0$, for $j=1,...,r$ and such that the total algebraic multiplicity
	 of the $\lambda_j$ is equal to the algebraic multiplicity of $\lambda_0$.
\end{lemma}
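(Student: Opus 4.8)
The plan is to realize the eigenvalues clustering near $\lambda_0$ as the spectrum of $T_\varepsilon$ restricted to the range of a Riesz spectral projection, and to control everything through the analytic dependence of the resolvent on a fixed contour. This is the classical Kato approach to analytic perturbation of an isolated eigenvalue of finite algebraic multiplicity, and for the spectral-theoretic facts I would appeal to \cite{teschlmathematical} and \cite{hislop2012introduction}.

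First I would fix a contour. Since $\lambda_0$ is a discrete eigenvalue, it is an isolated point of $\sigma(T_0)$ of finite algebraic multiplicity, so there is a small positively oriented circle $\Gamma$ enclosing $\lambda_0$ and no other point of $\sigma(T_0)$, whence $\Gamma\subset\rho(T_0)$. The associated Riesz projection
$$P_0 = -\frac{1}{2\pi \ri}\oint_\Gamma (T_0 - z)^{-1}\, dz$$
is a bounded projection whose range is the generalized eigenspace of $\lambda_0$, so that $\dim\Ran P_0$ equals the algebraic multiplicity of $\lambda_0$. The next step is to transport this projection to small $\varepsilon$. Because $\Gamma$ is compact and $z\mapsto(T_0-z)^{-1}$ is continuous there, $M:=\max_{z\in\Gamma}\|(T_0-z)^{-1}\|$ is finite. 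Writing the resolvent in Neumann form,
$$(T_\varepsilon - z)^{-1} = \left(I + (T_0-z)^{-1}(T_\varepsilon - T_0)\right)^{-1}(T_0 - z)^{-1},$$
the series converges uniformly for $z\in\Gamma$ as soon as $\|T_\varepsilon - T_0\| < M^{-1}$. Since $T_\varepsilon$ is an analytic family, $\|T_\varepsilon - T_0\|\to 0$ as $\varepsilon\to 0$, so for $\varepsilon$ sufficiently small we have $\Gamma\subset\rho(T_\varepsilon)$, with $(T_\varepsilon - z)^{-1}$ jointly analytic in $(\varepsilon,z)$ on a neighborhood of $\{0\}\times\Gamma$.

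Consequently the projections
$$P_\varepsilon = -\frac{1}{2\pi \ri}\oint_\Gamma (T_\varepsilon - z)^{-1}\, dz$$
are well-defined, depend analytically on $\varepsilon$, and satisfy $P_\varepsilon\to P_0$ in operator norm. For $\varepsilon$ small enough $\|P_\varepsilon - P_0\|<1$, which forces $P_\varepsilon$ and $P_0$ to be similar (via the intertwiner $P_\varepsilon P_0+(I-P_\varepsilon)(I-P_0)$) and hence of equal finite rank $r=\dim\Ran P_0$. Finally, $\Ran P_\varepsilon$ is invariant under $T_\varepsilon$, the part of $\sigma(T_\varepsilon)$ enclosed by $\Gamma$ coincides with $\sigma\bigl(T_\varepsilon|_{\Ran P_\varepsilon}\bigr)$, and this is a finite set of eigenvalues $\lambda_1(\varepsilon),\dots,\lambda_{r(\varepsilon)}(\varepsilon)$ whose total algebraic multiplicity equals $\dim\Ran P_\varepsilon = r$; these are precisely the eigenvalues that converge to $\lambda_0$ as $\varepsilon\to 0$. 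The hard part will be securing the uniform resolvent bound along $\Gamma$ that guarantees $\Gamma\subset\rho(T_\varepsilon)$ and the analyticity of $P_\varepsilon$; once that estimate is in place, the constancy of the rank of a norm-continuous family of projections and the invariance of $\Ran P_\varepsilon$ are routine.
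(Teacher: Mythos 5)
Your argument is correct and is precisely the standard Riesz-projection proof of analytic eigenvalue perturbation; the paper itself offers no proof of this lemma but simply cites \cite{hislop2012introduction}, where this contour-integral argument is the one given. The only point worth making explicit is that the paper's notion of analytic family is defined via strong analyticity of $\varepsilon\mapsto T_\varepsilon U$, so the norm convergence $\|T_\varepsilon-T_0\|\to 0$ you use should be noted as a consequence of strong analyticity of bounded families (via uniform boundedness and the vector-valued Cauchy formula), after which everything you write goes through.
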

For a proof, which also includes the more general case of unbounded operators as well, we refer to \cite{hislop2012introduction}. Since the algebraic multiplicity is always greater or equal to the geometric multiplicity, a fortiori we know that sum of the dimensions of the eigenspaces associated with the split eigenvalues $\lambda_1(\varepsilon),...,\lambda_r(\varepsilon)$ cannot exceed the total multiplicity of $\lambda_0$.\\
From Lemma \ref{perspec}, we know that in a neighborhood of the eigenvalue $\lambda_0$ are eigenvalues $\lambda_1(\varepsilon),...,\lambda_r(\varepsilon)$ of the the perturbed operator $\mathcal{A}_{\varepsilon}$, which converge to the eigenvalue of the unperturbed operator as $\varepsilon\to 0$. The following Lemma guarantees that the spectrum is stable with respect to perturbations.
\begin{lemma}\label{perteig}
	Let $T_\varepsilon: \mathcal{H}\to \mathcal{H}$ be an analytic family of bounded operators about $\varepsilon=0$ and let $G$ be an open, bounded subset of the complex plane such that $\overline{G}\subset \rho(T_0)$. Then $G\subset \rho(T_\varepsilon)$ for $\varepsilon$ sufficiently small.
\end{lemma}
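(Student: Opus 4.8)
The plan is to combine the factorization $T_\varepsilon-\lambda I=(T_0-\lambda I)+(T_\varepsilon-T_0)$ with a uniform resolvent bound obtained from compactness, so that a Neumann-series argument yields invertibility of $T_\varepsilon-\lambda I$ simultaneously for all $\lambda\in\overline{G}$. First I would exploit that $\overline{G}$ is compact and contained in $\rho(T_0)$. Since the resolvent map $\lambda\mapsto(T_0-\lambda I)^{-1}$ is analytic, hence continuous, on the open set $\rho(T_0)$, its operator norm is bounded on the compact set $\overline{G}$. This produces a finite constant
\[
M:=\sup_{\lambda\in\overline{G}}\norm{(T_0-\lambda I)^{-1}}<\infty,
\]
and it is precisely this uniformity in $\lambda$ that makes the subsequent estimate independent of where in $G$ one sits.

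Second, I would use that $\varepsilon\mapsto T_\varepsilon$ is an analytic (in particular norm-continuous) family at $\varepsilon=0$, so that $\norm{T_\varepsilon-T_0}\to 0$ as $\varepsilon\to 0$. I would then choose $\varepsilon_0>0$ with $\norm{T_\varepsilon-T_0}<1/M$ for all $\abs{\varepsilon}<\varepsilon_0$. For any such $\varepsilon$ and any $\lambda\in\overline{G}$ I would factor
\[
T_\varepsilon-\lambda I=(T_0-\lambda I)\bigl(I+(T_0-\lambda I)^{-1}(T_\varepsilon-T_0)\bigr),
\]
and observe that $\norm{(T_0-\lambda I)^{-1}(T_\varepsilon-T_0)}\le M\,\norm{T_\varepsilon-T_0}<1$. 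The bracketed factor is then invertible via its Neumann series, and since $T_0-\lambda I$ is invertible by hypothesis, so is $T_\varepsilon-\lambda I$, with bounded inverse. Hence $\lambda\in\rho(T_\varepsilon)$; as $\lambda\in\overline{G}\supseteq G$ was arbitrary, this gives $G\subset\rho(T_\varepsilon)$ for all $\abs{\varepsilon}<\varepsilon_0$.

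The main obstacle I anticipate is not the algebra but securing the correct mode of convergence: the factorization needs $\norm{T_\varepsilon-T_0}\to0$ in operator norm. If the family were only strongly analytic (analyticity of $\varepsilon\mapsto T_\varepsilon U$ for each $U$, as in the definition used earlier for $\mathcal{A}_\varepsilon$), norm smallness would not be automatic, and one would instead have to feed in the norm closeness $\mathcal{A}_\varepsilon\approx e^{AT}$ supplied by Lemma \ref{derclose}. A second, minor point to check is the uniform resolvent bound $M$ itself: one must confirm that the positive distance $\mathrm{dist}(\overline{G},\sigma(T_0))>0$ coming from compactness of $\overline{G}$ indeed prevents the resolvent norm from blowing up, which follows from the continuity (analyticity) of the resolvent on $\rho(T_0)$ together with the fact that $\overline{G}$ stays away from $\sigma(T_0)$.
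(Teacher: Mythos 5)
Your argument is correct and is exactly the standard proof of this resolvent-stability result; the paper does not prove Lemma \ref{perteig} itself but defers to Hislop--Sigal, where the same compactness-plus-Neumann-series reasoning appears. The convergence issue you flag resolves itself: a family of \emph{bounded} operators that is analytic in the strong (or even weak) sense is automatically analytic in operator norm, so $\norm{T_\varepsilon-T_0}\to 0$ holds without appealing to Lemma \ref{derclose}.
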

A proof can be found in \cite{hislop2012introduction}. Choosing now as our $G$ the whole of $\mathbb{C}$ with small balls around the discrete eigenvalues of $e^A$ excluded, we can deduce that
\begin{equation}\label{perD}
\sigma(\mathcal{A}_\varepsilon)\subset\mathbb{D},
\end{equation}
as desired for the application of Theorem \ref{mainthm}.\\
To choose a parametrization space $X_1^\varepsilon$, we introduce the so-called \textit{Riesz projection}. Let $\Sigma_\varepsilon$ be a collection of isolated eigenvalues of the operator $\mathcal{A}_\varepsilon$ and let $\Gamma:[0,1]\to\mathbb{C}$ be a simply-closed curve with winding number one that encircles $\Sigma_\varepsilon$ and does not intersect with the remaining spectrum of $\mathcal{A}_\varepsilon$. Define the operator-valued function $\varepsilon\mapsto \mathbb{P}_\varepsilon$,
\begin{equation}
\mathbb{P}_\varepsilon:=\oint_\Gamma(z-\mathcal{A}_\varepsilon)^{-1}\,dz.
\end{equation} 
The operator $\mathbb{P}_\varepsilon$ defines a projection, cf. \cite{hislop2012introduction}. Moreover, the underlying Hilbert space admits a splitting as
\begin{equation}
	\mathcal{H}=\ker(\mathbb{P}_\varepsilon)\oplus\Ran(\mathbb{P}_\varepsilon),
\end{equation}
both $\ker(\mathbb{P}_\varepsilon)$ and $\Ran(\mathbb{P}_\varepsilon)$ are invariant spectral subspaces for the operator $\mathcal{A}_\varepsilon$, i.e.
\begin{equation}
\begin{split}
&\sigma(\left.\mathcal{A}_\varepsilon\right|_{\Ran(\mathbb{P}_\varepsilon)})=\Sigma_\varepsilon,\\
&\sigma(\left.\mathcal{A}_\varepsilon\right|_{\ker(\mathbb{P}_\varepsilon)})=\sigma(\mathcal{A}_\varepsilon)\setminus\Sigma_\varepsilon.
\end{split}
\end{equation}
A proof for the above result, the \textit{Riesz decomposition Theorem}, can be found in \cite{gohberg2013classes}. 
Now, for any collection os split-eigenvalues $\Sigma_\varepsilon$ of the operator $\mathcal{A}_\varepsilon$, we set
\begin{equation}\label{Riesz}
\begin{split}
&X^{\varepsilon}_1:=\Ran(\mathbb{P}_\varepsilon),\\
&X^{\varepsilon}_2:=\ker(\mathbb{P}_\varepsilon).
\end{split}
\end{equation}
Equally, the parametrization space can be written as
\begin{equation}
X_1^{\varepsilon}=\bigoplus_{n\in N} \bigoplus_{k(n)}\Eig(\lambda_{n}^{k(n)}(\varepsilon)),
\end{equation}
where $\{\lambda_n\}_{n\in N}\subset \sigma(e^A)$ is some subset of the spectrum of the unperturbed linear flow map, while the index $n\mapsto k(n)$ describes some choice of the split eigenvalues for the perturbed operator.

By the above considerations, the space $X_1^\varepsilon$ is invariant and clearly $\sigma(\mathcal{A}_1^\varepsilon)\subset\mathbb{D}$ by relation (\ref{perD}). Also, since $0\notin\sigma(\mathcal{A}_\varepsilon)$ for small enough $\varepsilon$ and since the operator $\mathcal{A}_\varepsilon$ decomposes according to (\ref{Amatrix}), we conclude that $0\notin\sigma(\mathcal{A}_{2}^{\varepsilon})$.\\
The eigenvalues of a family of self-adjoint operators $\{T_\varepsilon\}_{\varepsilon}$, depending analytically on $\varepsilon$, also depend analytically upon $\varepsilon$, cf. \cite{hislop2012introduction}. However, in our case, the analytic family $\{\mathcal{A}_\varepsilon\}$ is not self-adjoint and in general each eigenvalue $\lambda_j(\varepsilon)$ is only analytic in $\varepsilon^{1/p}$ for some integer $p$, i.e., $\lambda_j$ possesses a Puiseux expansion in $\varepsilon$. However, all isolated eigenvalues of $\mathcal{A}_0$ being non-degenerate, the following Theorem applies in our case.
\begin{theorem}\label{ana}
Let $\{T_\varepsilon\}_\varepsilon$ be an analytic family of type A about $\varepsilon=0$. Let $\lambda$ be a discrete, non-degenerate eigenvalue of $T_0$. Then there exists an analytic family, $\lambda(\varepsilon)$, of discrete, non-degenerate eigenvalues of $T_\varepsilon$, such that $\lambda(0)=\lambda$, for $|\varepsilon|$ sufficiently small.\\
Moreover, the associated Riesz projections, $\varepsilon\mapsto\mathbb{P}_\varepsilon$ depend analytically on $\varepsilon$.
\end{theorem}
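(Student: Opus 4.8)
The plan is to realize both the eigenvalue and its spectral projection through a contour integral of the resolvent, and then to use the non-degeneracy of $\lambda$ to rule out any branching. Since $\lambda$ is a discrete eigenvalue of $T_0$, it is isolated in $\sigma(T_0)$; I would choose a positively oriented circle $\Gamma$ enclosing $\lambda$ but no other point of $\sigma(T_0)$, so that $\Gamma\subset\rho(T_0)$. The first step is to show that the resolvent $(z-T_\varepsilon)^{-1}$ is jointly analytic in $(\varepsilon,z)$ for $z$ near $\Gamma$ and $|\varepsilon|$ small. Because $\{T_\varepsilon\}$ is of type A, the domain $\mathcal{D}(T_\varepsilon)=\mathcal{D}$ is fixed and $\varepsilon\mapsto T_\varepsilon y$ is analytic for each $y\in\mathcal{D}$; writing the second resolvent identity as $(z-T_\varepsilon)^{-1}=(z-T_0)^{-1}\bigl(I-(T_\varepsilon-T_0)(z-T_0)^{-1}\bigr)^{-1}$ and expanding the inverse in a Neumann series that converges uniformly for $z\in\Gamma$ (using that $z\mapsto(z-T_0)^{-1}$ is bounded on the compact set $\Gamma$) yields the claimed analyticity. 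In particular $\Gamma\subset\rho(T_\varepsilon)$ for small $\varepsilon$, which is precisely Lemma \ref{perteig}.

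Next I would define the Riesz projection
\[
\mathbb{P}_\varepsilon=\frac{1}{2\pi\ri}\oint_\Gamma(z-T_\varepsilon)^{-1}\,dz.
\]
Since the integrand is analytic in $\varepsilon$ uniformly in $z\in\Gamma$, integrating shows that $\varepsilon\mapsto\mathbb{P}_\varepsilon$ is analytic in the operator norm, which already establishes the second assertion of the theorem. Because $\lambda$ is non-degenerate, $\mathbb{P}_0$ has rank one; analyticity (hence continuity) gives $\|\mathbb{P}_\varepsilon-\mathbb{P}_0\|<1$ for $|\varepsilon|$ small, and two projections at norm-distance less than one have isomorphic ranges, so $\operatorname{rank}\mathbb{P}_\varepsilon=1$ as well. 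By the Riesz decomposition theorem already invoked in the excerpt, $T_\varepsilon$ therefore has exactly one eigenvalue inside $\Gamma$, of algebraic multiplicity one; I define $\lambda(\varepsilon)$ to be this eigenvalue, so that $\lambda(0)=\lambda$ and each $\lambda(\varepsilon)$ is discrete and non-degenerate.

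It remains to prove analyticity of $\varepsilon\mapsto\lambda(\varepsilon)$. Since $\Ran(\mathbb{P}_\varepsilon)$ is one-dimensional and invariant under $T_\varepsilon$, the operator acts on it as multiplication by $\lambda(\varepsilon)$, so $T_\varepsilon\mathbb{P}_\varepsilon=\lambda(\varepsilon)\mathbb{P}_\varepsilon$; taking traces with $\tr\mathbb{P}_\varepsilon=1$ gives
\[
\lambda(\varepsilon)=\tr\bigl(T_\varepsilon\mathbb{P}_\varepsilon\bigr)=\frac{1}{2\pi\ri}\,\tr\oint_\Gamma z\,(z-T_\varepsilon)^{-1}\,dz,
\]
where I have used $T_\varepsilon(z-T_\varepsilon)^{-1}=z(z-T_\varepsilon)^{-1}-I$ together with $\oint_\Gamma dz=0$. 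The right-hand side is a contour integral of an analytic-in-$\varepsilon$ family of finite-rank operators followed by the bounded linear trace, so $\lambda(\varepsilon)$ is analytic about $\varepsilon=0$, as claimed.

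The conceptual heart of the matter, and the place where the hypothesis is essential, is the passage from the analyticity of $\mathbb{P}_\varepsilon$ to that of $\lambda(\varepsilon)$. For a degenerate eigenvalue the rank of $\mathbb{P}_\varepsilon$ exceeds one, the eigenvalues inside $\Gamma$ form a cluster that may split and coalesce, and the individual eigenvalues are in general only analytic in a fractional power $\varepsilon^{1/p}$ (the Puiseux expansions mentioned just before the statement). The rank-one hypothesis removes all room for such branching: the reduced problem on $\Ran(\mathbb{P}_\varepsilon)$ is scalar, and the single eigenvalue, being literally a trace, is honestly analytic. The only genuinely technical point for a type-A (possibly unbounded) family is the uniform Neumann-series estimate on $\Gamma$ in the first step; once that is secured, the remaining arguments are routine and follow Kato's analytic perturbation theory as presented in \cite{hislop2012introduction}.
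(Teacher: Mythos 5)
Your argument is correct and is precisely the standard Kato-style proof that the paper itself does not spell out but delegates to \cite{hislop2012introduction}: joint analyticity of the resolvent via the second resolvent identity and a Neumann series uniform on the compact contour, analyticity of the Riesz projection by contour integration, rank stability of projections at norm distance less than one, and the trace formula $\lambda(\varepsilon)=\tr(T_\varepsilon\mathbb{P}_\varepsilon)$ to extract an honestly analytic (rather than merely Puiseux) eigenvalue from the non-degeneracy hypothesis. The only point worth flagging is that the trace must be understood as the trace of the finite-rank operator $T_\varepsilon\mathbb{P}_\varepsilon$ rather than of a general bounded operator, which is exactly how you use it, so nothing is missing.
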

A proof can be found in \cite{hislop2012introduction}.
\begin{remark}
If the eigenvalue $\lambda$ is degenerate, then Theorem \ref{ana} in general fails, as already the finite dimensional example $T_\varepsilon:\mathbb{R}^2\to\mathbb{R}^2,$ $$T_\varepsilon=\left(\begin{matrix}
2 & 1\\ \varepsilon & 2
\end{matrix}\right),$$ shows. Indeed, the spectrum of $T_0$ consists of the eigenvalue $\lambda=2$ with algebraic multiplicity two, while $\sigma(T_\varepsilon)=\{2\pm\sqrt{\varepsilon}\}$. Using a standard residue calculus argument, one readily finds that the Riesz projection for the eigenvalue $\lambda(\varepsilon)=2+\sqrt{\varepsilon}$ is given by $$\mathbb{P}_\varepsilon=\oint_{\Gamma}\frac{1}{(z-2)^2-\varepsilon}\left(\begin{matrix}
z-2 & 1 \\ \varepsilon & z-2
\end{matrix}\right)\, dz=\frac{1}{2}\left(\begin{matrix} 1 & \frac{1}{\sqrt{\varepsilon}}\\ \sqrt{\varepsilon} & 1\end{matrix}\right),$$ which is not analytic in $\varepsilon$ around zero.
\end{remark}

\begin{proposition}\label{perspace}
The spectral subspace $\mathcal{E}_\varepsilon$, associated to the set of perturbed eigenvalues $\Sigma_\varepsilon\subset\sigma(\mathcal{A}_\varepsilon)$, is $\varepsilon$-close to the spectral subspace $\mathcal{E}$, associated to the collection of eigenvalues $\Sigma\subset\sigma(\mathcal{A}_0)$ of the unperturbed operator $\mathcal{A}$, i.e.,
\begin{equation}
\mathcal{E}_\varepsilon=\mathcal{E}+\mathcal{O}(\varepsilon).
\end{equation}
\end{proposition}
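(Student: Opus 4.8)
The plan is to reduce everything to a norm estimate on the Riesz projections and then pass from projections to their ranges. By the Riesz decomposition recorded in (\ref{Riesz}) we have $\mathcal{E}_\varepsilon=\Ran(\mathbb{P}_\varepsilon)$ and $\mathcal{E}=\Ran(\mathbb{P}_0)$, so it suffices to control $\mathbb{P}_\varepsilon-\mathbb{P}_0$ in the operator norm of $\mathcal{L}(\mathcal{H})$ and then translate this into geometric closeness of the ranges.

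First I would fix a single contour and verify it serves both operators. Choose a simply-closed curve $\Gamma$ of winding number one that encloses exactly the unperturbed eigenvalues collected in $\Sigma\subset\sigma(\mathcal{A}_0)$, leaving the remainder of $\sigma(\mathcal{A}_0)$ in its exterior. Since the enclosed eigenvalues are isolated and, by the explicit spectrum of $e^A$, non-degenerate, Lemma \ref{perspec} guarantees that for small $\varepsilon$ the eigenvalues branching off $\Sigma$ remain inside $\Gamma$, while Lemma \ref{perteig}, applied with an open bounded $G\supset\Gamma$ satisfying $\overline{G}\subset\rho(\mathcal{A}_0)$, gives $\Gamma\subset\rho(\mathcal{A}_\varepsilon)$. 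Hence the same contour defines $\mathbb{P}_\varepsilon=\oint_\Gamma(z-\mathcal{A}_\varepsilon)^{-1}\,dz$ with range precisely $\mathcal{E}_\varepsilon$ for all sufficiently small $\varepsilon$.

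Next I would invoke Theorem \ref{ana}. As every eigenvalue enclosed by $\Gamma$ is discrete and non-degenerate and $\{\mathcal{A}_\varepsilon\}$ is an analytic family, Theorem \ref{ana} yields that each associated Riesz projection, hence their finite sum $\mathbb{P}_\varepsilon$, depends analytically on $\varepsilon$ near $\varepsilon=0$. Analyticity in the operator-norm topology immediately gives the first-order expansion
\[
\mathbb{P}_\varepsilon=\mathbb{P}_0+\varepsilon\,\dot{\mathbb{P}}_0+\mathcal{O}(\varepsilon^2),
\qquad
\norm{\mathbb{P}_\varepsilon-\mathbb{P}_0}_{\mathcal{L}(\mathcal{H})}=\mathcal{O}(\varepsilon).
\]
(One may reach the same bound directly from the second resolvent identity $(z-\mathcal{A}_\varepsilon)^{-1}-(z-\mathcal{A}_0)^{-1}=(z-\mathcal{A}_\varepsilon)^{-1}(\mathcal{A}_\varepsilon-\mathcal{A}_0)(z-\mathcal{A}_0)^{-1}$, the uniform boundedness of the resolvent on the compact set $\Gamma$ for small $\varepsilon$, and the analytic, hence Lipschitz, dependence of $\mathcal{A}_\varepsilon$ on $\varepsilon$; integrating over $\Gamma$ produces the factor $\mathcal{O}(\varepsilon)$.)

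Finally I would convert the operator-norm bound into the asserted subspace estimate. Let $\{w_1,\dots,w_{2N}\}$ be a basis of $\mathcal{E}=\Ran(\mathbb{P}_0)$ consisting of eigenvectors of $\mathcal{A}_0$, and set $w_j^\varepsilon:=\mathbb{P}_\varepsilon w_j\in\mathcal{E}_\varepsilon$. Then $\norm{w_j^\varepsilon-w_j}_{\mathcal{H}}=\norm{(\mathbb{P}_\varepsilon-\mathbb{P}_0)w_j}_{\mathcal{H}}\leq\norm{\mathbb{P}_\varepsilon-\mathbb{P}_0}\,\norm{w_j}_{\mathcal{H}}=\mathcal{O}(\varepsilon)$. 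For $\varepsilon$ small these vectors, being $\mathcal{O}(\varepsilon)$-perturbations of a basis, remain linearly independent, and since $\dim\mathcal{E}_\varepsilon=\dim\mathcal{E}=2N$ by the multiplicity preservation of Lemma \ref{perspec} together with the non-degeneracy in Theorem \ref{ana}, they in fact span $\mathcal{E}_\varepsilon$. This exhibits a basis of $\mathcal{E}_\varepsilon$ that is an $\mathcal{O}(\varepsilon)$-perturbation of a basis of $\mathcal{E}$, which is exactly the claim $\mathcal{E}_\varepsilon=\mathcal{E}+\mathcal{O}(\varepsilon)$. I expect the main obstacle to lie in the uniform contour and dimension bookkeeping of the first two steps, namely ensuring that one fixed $\Gamma$ encloses precisely the split eigenvalues for all small $\varepsilon$ and that no splitting alters the total dimension; once $\norm{\mathbb{P}_\varepsilon-\mathbb{P}_0}=\mathcal{O}(\varepsilon)$ is secured, the passage to subspaces is routine.
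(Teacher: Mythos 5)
Your proposal is correct and follows essentially the same route as the paper: both rest on the analyticity of the Riesz projection $\mathbb{P}_\varepsilon$ furnished by Theorem \ref{ana}, giving $\mathbb{P}_\varepsilon=\mathbb{P}_0+\mathcal{O}(\varepsilon)$, and then pass to the ranges via the Riesz decomposition (\ref{Riesz}). The only difference is one of detail — you make explicit the fixed-contour bookkeeping and the step from projection closeness to closeness of the subspaces, which the paper leaves implicit.
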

\begin{proof}
Since, by Theorem \ref{ana}, the perturbed Riesz projection is analytic in $\varepsilon$, i.e., $\mathbb{P}_\varepsilon=\mathbb{P}_0+\mathcal{O}(\varepsilon)$, it follows that
\begin{equation}\label{SSMeps}
\mathcal{E}_\varepsilon=\Ran(\mathbb{P}_\varepsilon)=\Ran(\mathbb{P}_0)+\mathcal{O}(\varepsilon)=\mathcal{E}+\mathcal{O}(\varepsilon)
\end{equation}
by the Riesz projection Theorem and equation (\ref{Riesz}). This proves the claim.
\end{proof}
Since all eigenvalues of $\mathcal{A}_0$ are simple by equation \ref{specA}, Proposition \ref{SSMeps} guarantees that the perturbed spectral subspaces $\mathcal{E}_\varepsilon$ are $\varepsilon$-close to the unperturbed eigenspaces of $\mathcal{A}_0$.\\
For condition (5) in Theorem \ref{mainthm}, we have a look at the relation obtained in (\ref{ratio}) and note that it will hold true if we perturb $\lambda$ and $\mu$ only slightly, that is to say for small enough $\varepsilon$. Since we assumed (\ref{analytic}), condition (6) in Theorem \ref{mainthm} is automatically satisfied for any $L\in\mathbb{N}$.\\
We deduce from Theorem \ref{mainthm} that there exists a unique, analytic invariant manifold for the Poincar\'e map tangent to the spectral subspace $X_1^\varepsilon$. Since the choice of the base point $\omega_0$ in the definition of the Poincar\'e map (\ref{Poin}) was arbitrary, we obtain an invariant manifold for any such $\omega_0\in S_\omega$. By analytical dependence upon initial conditions for the flow map, \textit{a forteriori}, the Poincar\'{e} map dependence analytically upon $\omega_0$. Because of the uniqueness of the invariant manifold, obtained for any $\omega_0\in S_\omega$, we deduce that there exists a unique, invariant manifold for the flow of equation (\ref{suspend}) by continuing the spectral submanifold for any $\theta\in S_\omega$. 

\end{document}